%
%
%
%
%
\begin{filecontents*}{}
gsave
newpath
  20 20 moveto
  20 220 lineto
  220 220 lineto
  220 20 lineto
closepath
2 setlinewidth
gsave
  .4 setgray fill
grestore
stroke
grestore
\end{filecontents*}
\RequirePackage{fix-cm}
\documentclass[twocolumn]{svjour3}          
\smartqed  
\usepackage{graphicx}
\usepackage{cite}
\usepackage{amsmath,amssymb,amsfonts}
\usepackage{slashed}
\usepackage{booktabs}
\usepackage{algorithm}
\usepackage[noend]{algpseudocode}

\usepackage{epsfig}
\usepackage{subfig}
\usepackage[colorlinks,linkcolor=blue]{hyperref}
\usepackage{caption}
\usepackage{bm}
\usepackage{color}
\definecolor{XFcolor}{RGB}{209,186,116}
\usepackage{textcomp}
\usepackage[outdir=./]{epstopdf}

%
%
%
%
%
\begin{document}

\title{Efficiently Finding a Maximal Clique Summary via Effective Sampling
}


 \author{\small Xiaofan Li         \and
        Rui Zhou \and Lu Chen \and Chengfei Liu \and Qiang He \and Yun Yang 
}


\institute{ X.~Li, R.~Zhou, L.~Chen, C.~Liu, Q.~He and Y.~Yang  \at
              Swinburne University of Technology, Australia  \\
              \email{\{xiaofanli,rzhou,luchen,cliu,qhe,yyang\}@swin.edu.au}           
}

\date{Received: date / Accepted: date}

\maketitle

\begin{abstract}
Maximal clique enumeration (MCE) is a fundamental problem in graph theory and is used in many applications, such as social network analysis, bioinformatics, intelligent agent systems, cyber security, etc.  Most existing MCE algorithms focus on improving the efficiency rather than reducing the  output size. The output unfortunately could consist of a large number of maximal cliques. 
In this paper, we study how to report a summary of less overlapping maximal cliques.
The problem was studied before, however, after examining the pioneer approach, we consider it still not satisfactory.
To advance the research along this line, our paper attempts to make four contributions: 
(a) we propose a more effective sampling strategy, which produces a much smaller summary but still ensures that the summary can somehow witness all the maximal cliques and the expectation of each maximal clique witnessed by the summary is above a predefined threshold;  
(b) we prove that the sampling strategy is optimal under certain optimality conditions;  
(c) {we  apply clique-size bounding and 
design new enumeration order 
to approach the optimality conditions;} 
and (d) to verify  experimentally, we test eight real benchmark datasets that have a variety of graph characteristics. 
The results show that our new sampling strategy consistently outperforms the state-of-the-art approach by producing smaller summaries and running faster on all the datasets.

\keywords{maximal clique \and clique summary \and bound estimation \and clique enumeration \and clique sampling}
\end{abstract}

\section{Introduction}
\label{sec:introduction}

A clique $C$ is a complete subgraph of an undirected graph~$G(V, E)$, which means that each pair of nodes in $C$ have an edge between them. A maximal clique is a clique which is not a subgraph of any other clique. The procedure of enumerating all maximal cliques in a graph is called Maximal Clique Enumeration (MCE).

MCE has a range of applications in different fields, such as discovering communities in social networks~\cite{lu2018community},
identifying co-expressed genes~\cite{rokhlenko2007similarities}, 
detecting protein-protein interaction complexes~\cite{zhang2008pull}, 
supporting the construction of intelligent agent systems~\cite{Tandon:2018:ASH:3237383.3238081} and recognizing emergent patterns in terrorist networks~\cite{berry2004emergent}.

There are a sufficient number of works \cite{koch_enumerating_2001,tomita_worst-case_2006,schmidt_scalable_2009,xu_distributed_2016,san_segundo_efficiently_2018,naude_refined_2016,eppstein_listing_2013} focusing on improving the efficiency of MCE, which is considered as having  exponential time.
This is probably because the number of cliques in a graph is always very large. 
A graph with less than $10^6$ vertices and $7\times 10^6$ edges can have more than $10^7$ maximal cliques~\cite{wang_redundancy-aware_2013}. 
Counting the number of maximal cliques in a general graph is considered to be \#P-complete~\cite{valiant1979complexity}. 
This means that the output of any MCE procedure is hard to be used by some other post applications. Fortunately, there typically exist a lot of overlaps between different cliques. 
 This motivates us to consider  reporting a summary set of all maximal cliques which has less overlap but can somehow represent all the cliques.

Wang et al.~\cite{wang_redundancy-aware_2013} introduced the concept of $\tau$-visible summary, a set of maximal cliques, which promises that every maximal clique in  graph $G$ can be covered by at least one maximal clique in the summary with a ratio of at least $\tau$. Here, $\tau$ is given by a user and reflects the user's tolerance of overlap. 
For example, a  summary with $\tau=0.8$ ensures that any maximal clique can have at least 80$\%$ nodes covered by some clique in the summary. 
This summary model is interesting, e.g., in the marketing domain, if a certain percentage of users in a clique community has been covered, we  expect that the covered users will spread a message across the community. Consequently, finding fewer communities as targets due to marketing cost while still ensuring a broad final user coverage is very desirable.
The work~\cite{wang_redundancy-aware_2013} modified the depth-first MCE \cite{bron_algorithm_1973} by adding a sampling function that determines whether a new clique enumeration sub-procedure should be entered. It was proved that the \emph{expected visibility} of such a sampled summary is larger than $\tau$.


However, expected $\tau$-visible summaries are not unique. Apparently, as long as a summary is $\tau$-visible, the more concise the summary is, the better the summary is. {Hence three questions} arise naturally in sequence: 
\begin{itemize}
	\item[(1)] Is there any sampling strategy that can find a better (smaller) expected $\tau$-visible summary?
	\item[(2)] What kind of sampling strategy is optimal?
	\item[(3)] If achieving the optimal is difficult or impossible, how can we provide the best effort?
	
\end{itemize}

We will tackle these three questions in this paper.
{For question (1), the answer is yes.  
The state-of-the-art work~\cite{wang_redundancy-aware_2013} used a sampling function to determine whether each maximal clique should be included into the growing summary. 
However, it could include many redundant maximal cliques that 
(a) are visible to the current summary; 
(b) are likely to be visible to the future summary 
(explained by Observation 1 for (a) and by Observation 2  for (b) in Section~\ref{subsec:idea}). 
By identifying and discarding these two types of maximal cliques, we find a novel sampling function that is  superior to the existing work in terms of both output summary size and running time. 
For question (2), we naturally define the optimality as including each maximal clique with the lowest probability while still promising $\tau$-visibility of the summary. 
We find that, 
 when (I) the maximal cliques are enumerated in a proper order 
 and (II) the size of a clique can be estimated sufficiently accurate at an early stage of the depth-first search, our newly proposed sampling function successfully guarantees  optimality, while the existing approach~\cite{wang_redundancy-aware_2013} does not. 
The optimality of our sampling function promises significant improvement $vs.$~\cite{wang_redundancy-aware_2013} in terms both effectiveness and efficiency. 
Although, the above two optimality conditions (I) and (II) are hard to hold ideally, they provide us two important directions to approach the optimal by looking for
(i) better vertex orders to enumerate maximal cliques, and (ii) better maximal clique size estimation techniques.  
Thus for question (3), inspired by the cohesive structure of $k$-truss, we design a novel vertex order based on truss decomposition. 
Following this vertex order, our algorithm can enumerate maximal cliques in such an order that two consecutive generated maximal cliques tend to be contained by the same cohesive $k$-truss  and thus overlap more. 
When one of the two cliques is included into the summary first, the second one has a higher probability to be discarded so  the summary keeps concise. 
Besides, we utilize $k$-truss to estimate the size of a maximal clique contained by a subgraph, 
which provides an upper bound tighter 
than the $k$-core based bound (used by~\cite{wang_redundancy-aware_2013}),  
since a $k$-truss must be a $(k-1)$-core.  
With these two strategies, we provide our best effort to improve the practical performance of our proposed sampling approach. 
}

{Our main contributions} are summarized as follows:
\begin{itemize}
	\item We introduce a new sampling strategy to help to identify an expected $\tau$-visible maximal clique summary. We prove that the new sampling strategy guarantees a better performance than the state-of-the-art method in terms of producing a smaller summary while still meeting the threshold $\tau$. 
	\item We give a theoretical analysis that the sampling can be optimal under certain conditions, which substantiates good performance of the proposed sampling strategy in practice. Future investigations could also be directed by exploring how to approximate the optimal conditions. 
	\item We show that the sampling approach can get close to optimal with clique size bounding and enumeration ordering strategies.  
	{Then we propose truss ordering and truss bound respectively to further improve the performance of our sampling strategy.}  
	\item {We conduct experimental studies to verify the superiority of the new sampling method as well as our newly designed truss order and truss bound in terms of both effectiveness and efficiency  on {eight}  real-world datasets.} 
	
\end{itemize}

The rest of this paper is organized as follows. 
In Section \ref{sec:visible}, we review the definition of $\tau$-visible summary and an existing sampling approach. In Section \ref{sec:new}, we give our motivation, introduce a novel sampling function and prove its superiority. 
The conditions of optimality are analyzed in Section \ref{sec:opt}. 
{We propose truss vertex order and truss bound to practically instantiate optimality conditions in Section~\ref{sec:boundAndorder}.}
Extensive experiments are conducted in Section \ref{sec:exp}. 
Related work and conclusion are in Section \ref{sec:related} and Section \ref{sec:conclusion}.

\section{$\tau$-Visible Summary}\label{sec:visible}

A clique refers to a complete subgraph of an undirected graph $G(V,E)$. A clique $C$ is maximal if it is not contained by any other clique. 
When the context is clear, we also use $C$ to denote the node set of a maximal clique.
Given the set of all maximal cliques in  graph $G$, denoted as $\mathcal{M}(G)$, a summary $\mathcal{S}$ is a subset of $\mathcal{M}(G)$ which means $S\subseteq \mathcal{M}(G)$.
To measure to what extent a summary can witness a clique, \emph{visibility} is defined in~\cite{wang_redundancy-aware_2013},  restated as 
Definitions~\ref{def:visibility} and~\ref{def:exact}. 
We then introduce \emph{expected visibility} in ~Definitions~\ref{def:exp} and~\ref{def:exps}.
\begin{definition}[Visibility]
	Given a summary $\mathcal{S}$, the visibility $\mathcal{V}_{\mathcal{S}}:\mbox{ }\mathcal{M}(G)\rightarrow[0,1]$ of a maximal clique $C$ is defined as:
	\begin{equation}\label{3}
		\mathcal{V}_{\mathcal{S}}(C)=\max\limits_{C'\in \mathcal{S}}\frac{|C\cap C'|}{|C|}
	\end{equation}
\label{def:visibility}	
\end{definition}
\vspace{-12pt}
Note that $C'$ is allowed to be the same as $C$. This means that if $C \in \mathcal{S}$, $C$'s visibility with respect to $\mathcal{S}$ is 1. 
In other words, if $C \in \mathcal{S}$, the summary $\mathcal{S}$ can completely witness $C$.

\begin{definition}[$\tau$-Visible Summary]
	A summary $\mathcal{S}$ is called $\tau$-visible iff $\mbox{ }\forall C \in \mathcal{M}(G)$,
	\begin{equation}
		\mathcal{V}_{\mathcal{S}}(C)\geq \tau
	\end{equation}
\label{def:exact}
\end{definition}
\vspace{-15pt}
Rather than the exact $\tau$-visible summary defined above, our work looks for an \emph{expected $\tau$-visible} summary.
Before we give the formal definition of expected $\tau$-visible summary, we explain what the term \emph{expected} means intuitively. 
Since the number of maximal cliques is likely to be exponential, it is infeasible to firstly compute all the cliques and then decide the summary. Instead, it is more practical to decide on the way while enumerating, i.e., try to make a decision whether to keep/discard a new clique or keep/discard with a probability, when the clique is found. To be more active, a decision can be made on whether to enter each enumeration branch with some probability. This means that 
 each maximal clique has a probability $Pr[C\in \mathcal{S}]$  to be included in $\mathcal{S}$ and a corresponding  probability $Pr[C\notin \mathcal{S}]= 1 - Pr[C\in \mathcal{S}]$ to be discarded. 
For a clique $C$, if it is selected  to be included into $\mathcal{S}$, the visibility of it should be $1$, since it is witnessed by itself; otherwise this value is $\mathcal{V}_{\mathcal{S}}(C)$, which stays unknown before $\mathcal{S}$ is finalized. Given the above discussion of visibility, we can have the mathematical expectation of $\mathcal{V}_{\mathcal{S}}(C)$ in Definition~\ref{def:exp}:
\begin{definition}[Expected Visibility]\label{def:exp}
The expected visibility of a clique $C$ with regards to a summary  $\mathcal{S}$, $\mathbb{E}[\mathcal{V}_{\mathcal{S}}(C)]$, is defined as 
\begin{equation}\label{eq:exp}
    		\mathbb{E}[\mathcal{V}_{\mathcal{S}}(C)]\triangleq 1\cdot Pr[C\in \mathcal{S}]+\mathcal{V}_{\mathcal{S}}(C)\cdot Pr[C\notin \mathcal{S}]
\end{equation}
\end{definition}

One may question that, before $\mathcal{S}$ is finally known, $\mathcal{V}_{\mathcal{S}}(C)$ is unavailable to Formula~(\ref{eq:exp}), since this value relies on a materialization  of $\mathcal{S}$. However, we need to point out that,  such a $\mathcal{V}_{\mathcal{S}}(C)$ does exist albeit it is hard to know its value early. We will see under which conditions $\mathcal{V}_{\mathcal{S}}(C)$ can be calculated   without $\mathcal{S}$ is known in Section~\ref{sec:opt}. Currently, we only need a lower bound of it since 
we want to make sure the lower bound is sufficiently large, so that the expectation of $\mathcal{V}_{\mathcal{S}}(C)$ is larger than a user-given threshold, implying that we want to find a summary with good visibility expectation guarantee.
Definition~\ref{def:exps} defines this case:
\begin{definition}[Expected $\tau$-Visible Summary]\label{def:exps}\\
	A summary $\mathcal{S}$ is 
	{\it expected $\tau$-visible} iff $\mbox{ }\forall C \in \mathcal{M}(G)$,
	\begin{equation}
		\mathbb{E}[\mathcal{V}_{\mathcal{S}}(C)]\geq \tau
\end{equation}
where $\tau\in[0,1]$ is a given threshold. 
\end{definition}
\textbf{In this paper, we focus on developing theories and algorithms for finding a good expected $\tau$-visible summary. }
The key issue that we are going to address is how to keep/discard the enumeration branches to ensure the final found cliques can form a summary which is $\tau$-visible and of a small size.
Note that in an expected $\tau$-visible summary, there may exist a clique which cannot be covered by any other clique in the summary with a factor more than the extent of $\tau$. However, we still aim for expected visibility rather than exact visibility, because (1) visibility itself has already meant that the summary is an approximation, hence there may be less gain to enforce exact visibility; (2) the basic MCE algorithm (BK-MCE) which we will introduce in Section~\ref{subsec:mce} is a depth-first search approach. For expected visibility semantics, the great pruning power of a sampling approach can terminate search subtrees as early as possible so that the exponential search space can be reduced significantly and the summary is promised to be concise with a sufficient quality guarantee. 
An algorithm serving for exact visibility 
has to decide whether a search subtree can be discarded at a relatively late stage, thus  slows down the running time.

Next, we start with introducing an existing  depth-first MCE procedure \cite{bron_algorithm_1973} for maximal clique enumeration in Section~\ref{subsec:mce}, and then explain how it can be modified to find an expected $\tau$-visible summary by the state-of-the-art work~\cite{wang_redundancy-aware_2013} in Section~\ref{subsec:sample}.
Important notations are listed in Table~\ref{notations}.

\subsection{Maximal Clique Enumeration}
\label{subsec:mce}
\begin{table}
\renewcommand\arraystretch{1.2}
\small
\centering
\caption{Notations}
\begin{tabular}{p{40pt}p{185pt}}
\toprule
Notation  & Meaning \\
\midrule
 $G(V,E)$ & the graph $G$ with vertex set $V$ and edge set $E$  \\
 $G_T$&the induced graph of vertex set $T$ on graph $G$\\
  $C$  & a  maximal clique  \\
   $\mathcal{M}(G)$ &  the set of all maximal cliques in  graph $G$ \\
  $\mathcal{S}$  & a summary, which is a subset of $\mathcal{M}(G)$  \\
  $\mathcal{V}_{\mathcal{S}}(C)$  & the visibility of  maximal clique $C$ w.r.t. summary $\mathcal{S}$  \\
  $\mathbb{E}[\mathcal{V}_{\mathcal{S}}(C)]$ &   the expectation of visibility $\mathcal{V}_{\mathcal{S}}(C)$\\
      $\tau$&the user-specified threshold   \\

   $\mathcal{N}(v)$ &  the neighbor node set of node $v$ \\
   $T$ &  the candidate set in BK-MCE algorithm \\
   $D$ &  the candidate set whose elements should not be touched in BK-MCE algorithm\\
   $v_p$ &  the pivot in BK-MCE algorithm\\
   $r$ & the local visibility, refers to Formula~(\ref{eq:r})  \\
   $\overline{l}$ & the upper bound of the size of a maximal clique  \\
   $\underline{r}$ &  the lower bound of local visibility \\
   
   $s(r)$ & the sampling function used in~\cite{wang_redundancy-aware_2013}  \\
   $s_{opt}(r)$ & the conditional optimal sampling function  \\
   $\mathcal{T}$ &  a search subtree \\
   

\bottomrule
\end{tabular}
\label{datasets}
\label{notations}
\vspace{-15pt}
\end{table}

BK-MCE algorithm~\cite{bron_algorithm_1973} (Algorithm~\ref{algo:BK}) is a backtracking approach, which recursively calls  procedure $ProcMCE$ to grow the current partial clique by adding a new node from the candidate set until a maximal clique is found.
Here we denote all the neighbor nodes of node $v$ by $\mathcal{N}(v)$. $C$ is the current partial clique or \emph{configuration}, which is still growing. 
$T$ and $D$ are candidate sets whose elements are common neighbors of $C$, while $D$ only contains nodes which have been contained by some earlier output maximal cliques grown from the current $C$.
Algorithm~\ref{algo:BK} takes  graph $G(V,E)$ as input and outputs all the maximal cliques in $G$. Initially it calls  procedure $ProcMCE(\emptyset,V,\emptyset)$ (line 1). Then $ProcMCE$ will be called recursively (line 10) until $\mathcal{M}(G)$ is generated. 
At every recursive stage $ProcMCE$ will first check whether $T=\emptyset$ and $D=\emptyset$ (line 3). If so, it means that there is no candidate node left, and therefore the current $C$ is output as a maximal clique (line 4). 
If not, generally speaking, it will remove an arbitrary node $v$ from $T$ and add it into $C$.
Then it recursively calls  procedure $ProcMCE(C\cup \{v\},T\cap \mathcal{N}(v),D\cap \mathcal{N}(v))$. 
Here, $T\cap \mathcal{N}(v)$ is the refined $T$ by deleting all nodes which are not neighbors of $v$, and the same for $D\cap \mathcal{N}(v)$. It ensures that every node in $T$ or $D$ is a common neighbor of the current $C$. 
Finally, since $v$ is sure to be contained by some future cliques grown from $C$, $v$ is added into $D$ (lines 8-12). 
Note that a pivot $v_p$ is chosen for avoiding some branches which will generate the same maximal clique (line 6). This is because, from the current configuration, a maximal clique containing $v$ which is a neighbor of $v_p$, can be grown either from $v_p$ or $u$. $u$ is a neighbor of $v$ but not of $v_p$.

\begin{algorithm}[tttt]
        \caption{BK-Maximal Clique Enumeration}
        \begin{algorithmic}[1] 
            \Require Graph $G(V,E)$;
            \Ensure $\mathcal{M}(G)$;
            \State Call $ProcMCE(\emptyset,V,\emptyset)$
            \Statex
\Procedure {$ProcMCE(C,T,D)$}{}

\If {{$T=\emptyset$ and $D=\emptyset$}}
\State Output $C$ as a maximal clique;
\Return
\EndIf
\State Choose a pivot vertex $v_p$ from $(T\cup D)$;
\State $T'\leftarrow T\backslash \mathcal{N}(v_p)$;
\For{each $v\in T'$}
\State Call $ProcMCE(C\cup \{v\},T\cap \mathcal{N}(v),D\cap \mathcal{N}(v))$;
\State $T\leftarrow T\backslash \{v\}$;
\State $D\leftarrow D\cup \{v\}$;
\EndFor
\EndProcedure

        \end{algorithmic}
    \label{algo:BK} 
   
    \end{algorithm}

    \begin{algorithm}[t]
     \caption{Summarization by Sampling}
        \begin{algorithmic}[1] 
            \Require Graph $G(V,E)$, threshold $\tau$;
            \Ensure An expected $\tau$-visible summary $\mathcal{S}$;
\State $\mathcal{S}\leftarrow \emptyset$, $C'\leftarrow \emptyset$;
\State Call $ProcRMCE(\emptyset,V,\emptyset)$. 
\Statex
\Procedure {$ProcRMCE(C,T,D)$}{}

\If {{$T=\emptyset$ and $D=\emptyset$}}
\State include $C$ in $\mathcal{S}$; $C'\leftarrow C$;
\Return
\EndIf
\State Calculate $\overline{l}$ and $\underline{r}$;
\State Keep the branch $\mathcal{T}$ with probability $\sqrt[\overline{l}]{s(\underline{r})}$; 
\If {the branch $\mathcal{T}$ is kept}
\State Choose a pivot vertex $v_p$ from $(T\cup D)$;
\State $T'\leftarrow T\backslash \mathcal{N}(v_p)$;
\For{each $v\in T'$}
\State {\small Call $ProcRMCE(C\cup \{v\},T\cap \mathcal{N}(v),D\cap \mathcal{N}(v))$;}
\State $T\leftarrow T\backslash \{v\}$;
\State $D\leftarrow D\cup \{v\}$;
\EndFor
\EndIf
\EndProcedure

\end{algorithmic}
\label{al2}
\end{algorithm}
\subsection{Summarization by Sampling}
\label{subsec:sample}

Let us first ignore sampling and consider a deterministic enumeration that can find a $\tau$-visible summary: recall that BK-MCE is a depth-first algorithm, it outputs $\mathcal{M}(G)$ in such an order that two maximal cliques share a large portion of common nodes if they are produced next to each other. We denote this property as \emph{locality}. Let $C'$ be the last generated maximal clique which has been added into  summary $\mathcal{S}$, when a new clique $C$ is generated, we can compare it with $C'$, rather than with every clique in $\mathcal{S}$, to compute a \emph{local} visibility $r$ (Formula~(\ref{eq:r})). If $r \geq \tau$, discard $C$; otherwise, keep $C$. Such a deterministic strategy will guarantee to produce a $\tau$-visible summary. 
\begin{equation}\label{eq:r}
	r=\frac{|C\cap C'|}{|C|}
\end{equation}
However, it will be desirable if we can discard a whole search branch with good confidence when we find that the branch has significant overlap with the last found clique $C'$. This leads to the idea of deliberately pruning some recursive sub-procedures with some probability - let us call it sampling. Meanwhile, we must guarantee that the summary should have the expected visibility $\mathbb{E}[\mathcal{V}_{\mathcal{S}}(C)]\geq \tau, \mbox{ }\forall C\in \mathcal{M}(G)$.

Details about invoking a sampling method to give an expected $\tau$-visible summary are shown in Algorithm 2. The key idea is to execute a sampling operation (line 8) to determine whether this current new branch $\mathcal{T}$ should be grown or not before entering a new procedure $ProcMCE(C,T,D)$ (line 13). In line 7, $\overline{l}$ denotes an upper bound of the size of the next maximal clique $C$ and $\underline{r}$ denotes a lower bound of the local visibility $r$. As we have not found $C$, i.e., $l(=|C|)$ and $r$ are unknown, we can only estimate $\overline{l}$ and $\underline{r}$. The sampling probability function $\sqrt[\overline{l}]{s(\underline{r})}$ is designed to be a function of $\overline{l}$ and $\underline{r}$.
The work in~\cite{wang_redundancy-aware_2013} chose the probability function $s()$ to be:
\begin{equation}\label{eq:s_r}
	s(r)=\frac{(1-r)(2-\tau)}{(2-r-\tau)}
\end{equation}
and proved that applying $s(r)$ in Algorithm 2 can produce a summary with the expected visibility $\mathbb{E}[\mathcal{V}_{\mathcal{S}}(C)]\geq \tau, \mbox{ }\forall C\in \mathcal{M}(G)$.
Due to the space limit, we briefly introduce the rationale of $\sqrt[\overline{l}]{s(\underline{r})}$. From Formula~(\ref{eq:s_r}), $s(r)$ is a decreasing function with range [0,1]. This means when we find the estimated $\underline{r}$ becoming larger, the probability of keeping the current search branch becomes smaller. When $\overline{l}$ is estimated larger, this implies that we will call the recursion more times, hence the probability of keeping the current search branch is made larger. 
{Algorithm~\ref{algo:BK} and Algorithm~\ref{al2} follow the clique enumeration paradigm, 
so the time complexities of them are both bounded by $O(3^{|V|/3})$
because a $|V|$-vertex graph has at most $3^{|V|/3}$ maximal cliques\cite{moon_cliques_1965}. 
Algorithm~\ref{al2} should be practically faster due to early prunings, but has the same complexity in the worst case when $\tau=1$.}

\section{A new sampling function}
\label{sec:new}

Expected $\tau$-visible summaries are not unique. Apparently, the more concise (smaller) a summary is, the better the summary is. Three questions arise naturally: 
\begin{itemize}
	\item [(1)]Are there any better sampling strategies?
	\item [(2)]What kind of sampling strategy is optimal?
	\item [(3)]If finding the optimal is difficult, how can we provide the best effort?
\end{itemize}
We will address question (1) in this section and discuss questions (2) and (3) in Section~\ref{sec:opt} and Section~\ref{sec:boundAndorder} respectively.
In Section~\ref{subsec:idea}, we give our idea why we consider there should exist a better sampling function, then we introduce the new sampling function and prove its superiority in Section~\ref{subsec:function}. 

\subsection{Intuition}
\label{subsec:idea}

Our new sampling strategy is based on the following two observations:

\textbf{Observation 1:} In Formula~(\ref{eq:s_r}), when $r\in (\tau, 1)$, we always have $s(r)>0$. This means even if we know the newly 
generated clique is $\tau$-visible with respect to the current $\mathcal{S}$, there is still a positive probability to add it into the summary. Thus $\mathcal{S}$ will be more redundant because of these unnecessary cliques. A better strategy is to set $s(r)=0$ in such cases, which means not to add these cliques at all.

\textbf{Observation 2:} In Formula~(\ref{eq:s_r}), when $r=0$, we have $s(r)=1$. This means once we find a maximal clique whose nodes are totally new to the current summary, we add it into $\mathcal{S}$ without hesitation. This seems reasonable, however, there is still some possibility for this brand new clique to be covered by some future cliques. 
Moreover, considering we are looking for an \emph{expected $\tau$-visible} summary, which means that we have the option not to include a brand new clique as long as the final summary is expected $\tau$-visible. In other words, it is safe to add the brand new clique with certain probabilities. Cases where $r$ is in $(0,\tau)$ are similar. 

%
\subsection{Sampling Function $s_{opt}(r)$}
\label{subsec:function}
Following the observations in Section~\ref{subsec:idea}, we give a new sampling function $s_{opt}(r)$ in Formula~(\ref{eq:s_opt}). 
\begin{equation}\label{eq:s_opt}
    s_{opt}(r)=
 \begin{cases}
    \mbox{ }\frac{\tau-r}{1-r}   &  \text{, if $r\in [0,\tau)$.} \\
    \mbox{ }0        &  \text{, if $r\in[\tau,1].$}
 \end{cases} 
\end{equation}
The sampling function $s_{opt}(r)$ implies: if $r\in[\tau,1]$, discard the current search branch; otherwise, keep the current search branch with probability $\frac{\tau-r}{1-r}$. The rationale of setting $\frac{\tau-r}{1-r}$ will be shown in Theorem~\ref{theo:correct}. 



Next, we prove that compared with $s(r)$, $s_{opt}(r)$ is a better function. This means that we need to prove: (1) $s_{opt}(r)$ samples with a lower probability (in Theorem~\ref{theo:lower}); and (2) $s_{opt}(r)$ can produce an expected $\tau$-visible summary (in Theorem~\ref{theo:correct}).


\begin{theorem}\label{theo:lower}
$s_{opt}(r)$ samples with a low probability than s(r), i.e.
	\begin{equation}
		s_{opt}(r)\leq s(r),\mbox{ } \forall r\in [0,1]
	\end{equation}
The equation holds iff $r=1$.
\end{theorem}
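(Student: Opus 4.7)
The plan is to verify the inequality separately on the two pieces of $s_{opt}$'s domain, since $s_{opt}$ is defined piecewise. Throughout I assume the nontrivial regime $\tau\in[0,1)$; the case $\tau=1$ is degenerate (both functions collapse to the constant $1$ on $[0,1)$) and I would flag it explicitly at the outset.

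First, on the interval $[\tau,1]$ the sampling function satisfies $s_{opt}(r)=0$ by definition, while $s(r)=\frac{(1-r)(2-\tau)}{2-r-\tau}$ is nonnegative: both factors in the numerator are $\geq 0$, and the denominator $2-r-\tau=(1-r)+(1-\tau)$ is strictly positive. Hence $s_{opt}(r)\leq s(r)$ is immediate on this piece, with equality precisely when both sides are zero, namely at $r=1$.

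Second, on $[0,\tau)$ both functions are strictly positive and I would form the difference
\begin{equation*}
s(r)-s_{opt}(r)=\frac{(1-r)^{2}(2-\tau)-(\tau-r)(2-r-\tau)}{(2-r-\tau)(1-r)},
\end{equation*}
whose denominator is positive. The cleanest way to handle the numerator is the substitution $u=1-r$: then $\tau-r=u-(1-\tau)$ and $2-r-\tau=u+(1-\tau)$, so $(\tau-r)(2-r-\tau)=u^{2}-(1-\tau)^{2}$ by a difference-of-squares identity. The numerator therefore collapses to
\begin{equation*}
u^{2}(2-\tau)-u^{2}+(1-\tau)^{2}=(1-\tau)\bigl(u^{2}+(1-\tau)\bigr),
\end{equation*}
which is strictly positive whenever $\tau<1$. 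Thus $s_{opt}(r)<s(r)$ throughout $[0,\tau)$.

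Combining the two cases yields the claimed global inequality, and the only point of equality is $r=1$, as asserted. I do not expect any real obstacle: the result is a direct algebraic verification, and the only mildly delicate point is the bookkeeping at the corner $\tau=1$ (where the strict gap disappears) and at the seam $r=\tau$ (where the piecewise definition of $s_{opt}$ switches). The substitution $u=1-r$ is the single trick that makes the numerator factor cleanly; without it one is staring at a somewhat opaque cubic-in-disguise in two variables, so I would present that step up front rather than expanding blindly.
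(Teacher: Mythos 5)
Your proof is correct and follows essentially the same route as the paper's: a case split at $r=\tau$, with the nonnegativity of $s$ handling $[\tau,1]$ and the factorization of the numerator of $s(r)-s_{opt}(r)$ into $(1-\tau)\bigl((1-r)^{2}+(1-\tau)\bigr)$ handling $[0,\tau)$; your substitution $u=1-r$ is only a cosmetic aid to that same factorization. Your explicit flagging of the degenerate case $\tau=1$ (where the strict gap, and hence the ``iff $r=1$'' claim, breaks down) is a small but genuine improvement over the paper, which implicitly assumes $\tau<1$.
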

\begin{proof}

We show that this inequality holds when $r\in [0,\tau)$ and $r\in[\tau,1]$ separately:
\begin{itemize}
	\item [-] if $r\in[\tau,1]$, we have $s_{opt}(r)=0$ and $s(r)\geq 0$, it is clear that this inequality is satisfied, and the equation holds only when $r=1$ (corresponding to $s(r)=0$).
	\item [-] if $r\in [0,\tau)$, since $\tau\in [0,1]$ and $r\neq 1$, we have
\end{itemize}
\begin{equation}\label{eq:7}
\begin{aligned}
s(r)-s_{opt}(r)&=\frac{(1-r)(2-\tau)}{(2-r-\tau)}-\frac{\tau-r}{1-r}\\
&=\frac{(1-r)^2(2-\tau)-(2-r-\tau)(\tau-r)}{(2-r-\tau)(1-r)}\\
&=\frac{(1-\tau)((1-r)^2+(1-\tau))}{(2-r-\tau)(1-r)}\\
&>0
\end{aligned}
\end{equation}

Combining these two cases, we complete this proof.  
\end{proof}

 {While Theorem~\ref{theo:lower}} promises us a more concise summary $\mathcal{S}$, we {have} to prove that this $\mathcal{S}$ 
is indeed expected $\tau$-visible:

\begin{theorem}\label{theo:correct}
	Algorithm 2, with sampling function $s_{opt}(r)$, can produce an expected $\tau$-visible summary .
\end{theorem}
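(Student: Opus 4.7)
The plan is to show that for every maximal clique $C\in\mathcal{M}(G)$, $\mathbb{E}[\mathcal{V}_{\mathcal{S}}(C)]\geq \tau$, by plugging a deterministic lower bound on $\mathcal{V}_{\mathcal{S}}(C)$ together with a lower bound on $Pr[C\in\mathcal{S}]$ into Definition~\ref{def:exp} and then doing a short case split on $r$. The two key ingredients I would establish first are: (i) letting $C'$ denote the most recently added clique in $\mathcal{S}$ at the moment $C$ is discovered, one has $\mathcal{V}_{\mathcal{S}}(C)\geq r=|C\cap C'|/|C|$ since $C'\in\mathcal{S}$; and (ii) the product of the per-step sampling probabilities $\sqrt[\overline{l}]{s_{opt}(\underline{r})}$ along the depth-first path of length $l$ that produces $C$ is at least $s_{opt}(\underline{r})$, because $s_{opt}(\underline{r})\in[0,1]$ and $l\leq \overline{l}$. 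Combining (ii) with the fact that $s_{opt}$ is non-increasing and $\underline{r}\leq r$ yields $Pr[C\in\mathcal{S}]\geq s_{opt}(r)$.

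With these in hand, I would split on $r$. For $r\in[\tau,1]$, $\mathcal{V}_{\mathcal{S}}(C)\geq r\geq\tau$ deterministically and there is nothing further to prove. For $r\in[0,\tau)$, $s_{opt}(r)=\tfrac{\tau-r}{1-r}$, so by Definition~\ref{def:exp},
\begin{equation*}
\mathbb{E}[\mathcal{V}_{\mathcal{S}}(C)] \;\geq\; s_{opt}(r)\cdot 1 + (1-s_{opt}(r))\cdot r
\;=\; \frac{\tau-r}{1-r} + r\cdot\frac{1-\tau}{1-r}
\;=\; \tau,
\end{equation*}
where the final identity is the algebraic reason the constant $\tfrac{\tau-r}{1-r}$ was chosen in the first place: it is the smallest sampling probability that just suffices to lift the expected visibility from $r$ to $\tau$ when the only deterministic witness is $C'$.

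The main obstacle I anticipate is formalising ingredient (ii) cleanly. The quantities $\overline{l}$ and $\underline{r}$ are re-estimated at every recursive call, so one has to argue that multiplying $l$ possibly-different factors of the form $\sqrt[\overline{l}]{s_{opt}(\underline{r})}$ along the search path still dominates $s_{opt}(r)$, where $r$ is measured only when $C$ is finally produced. This should follow from the standing assumption (made already in Section~\ref{subsec:sample}) that $\overline{l}$ is a valid upper bound on the final clique size and $\underline{r}$ is a valid lower bound on the final local visibility at every intermediate node; given this, monotonicity of $s_{opt}$ and the elementary inequality $x^{l/\overline{l}}\geq x$ for $x\in[0,1]$ and $l\leq\overline{l}$ close the gap. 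The earlier proof for $s(r)$ in~\cite{wang_redundancy-aware_2013} can serve as a template, with our simpler piecewise $s_{opt}$ actually making the arithmetic cleaner than in the original case.
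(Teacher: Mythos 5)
Your proposal is correct and follows essentially the same route as the paper's own proof: lower-bound $Pr[C\in\mathcal{S}]$ by $s_{opt}(r)$ via the per-step probabilities (using the validity of $\overline{l}$ and $\underline{r}$ at every recursion level together with the monotonicity of $s_{opt}$), lower-bound $\mathcal{V}_{\mathcal{S}}(C)$ by $r$ when $C$ is discarded, and close with the same case split on $r\in[0,\tau)$ versus $r\in[\tau,1]$. The obstacle you flag about the re-estimated bounds is exactly the point the paper handles by comparing each factor $\sqrt[\overline{l_i}]{s_{opt}(\underline{r_i})}$ to $\sqrt[k]{s_{opt}(r)}$, so no gap remains.
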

\begin{proof}\label{pr2}
	First, we give the probability for a maximal clique $C$ being added into $\mathcal{S}$. Then we calculate the expected visibility $\mathbb{E}[\mathcal{V}_{\mathcal{S}}(C)]$ and show it is no less than $\tau$.
	
	Recall that every time before Algorithm 2 starts a new search subtree $\mathcal{T}_i$, line 7 will compute a new pair of $\overline{l}$ and $\underline{r}$. We denote them by $\overline{l_i}$ and $\underline{r_i}$, where $1\leq i\leq k$ and $k=|C|$ is the size of this maximal clique to be grown. Since every $\overline{l_i}$ is an upper bound of $k$ and every $\underline{r_i}$ is a lower bound of $r$, together with the monotonicity of $s_{opt}(r)$, we have
\begin{equation}\label{eq8}
\begin{aligned}
Pr[C\in \mathcal{S}]&=\mathop{\Pi}\limits_{1\leq i\leq k}Pr[\mathcal{T}_i \mbox{ }is\mbox{ }kept]\\
&=\mathop{\Pi}\limits_{1\leq i\leq k}\sqrt[\overline{l_i}]{s_{opt}(\underline{r_i})}\\
&\geq \mathop{\Pi}\limits_{1\leq i\leq k}\sqrt[k]{s_{opt}(r)}\\
&=s_{opt}(r)
\end{aligned}
\end{equation}

If $C$ is not included in $\mathcal{S}$, its visibility $\mathcal{V}_{\mathcal{S}}(C)$ should be no less than the local visibility $r$; if $C$ is included in $\mathcal{S}$,  $\mathcal{V}_{\mathcal{S}}(C)=1$. Now we can calculate the expectation of $\mathcal{V}_{\mathcal{S}}(C)$:

\begin{equation}\label{eq9}
\begin{aligned}
\mathbb{E}[\mathcal{V}_{\mathcal{S}}(C)]&\geq1\cdot Pr[C\in \mathcal{S}]+r\cdot Pr[C\notin \mathcal{S}] \\
&\geq s_{opt}(r)+r\cdot (1-s_{opt}(r))
\end{aligned}
\end{equation}

We show two cases where $r\in [0,\tau)$ and $r\in[\tau,1]$ separately:
\begin{itemize}
\item [-] if $r\in [\tau,1]$, $\mathbb{E}[\mathcal{V}_{\mathcal{S}}(C)]\geq 0+r\cdot (1-0)=r\geq \tau$.
\item [-] if $r\in [0,\tau)$, $\mathbb{E}[\mathcal{V}_{\mathcal{S}}(C)]\geq \frac{\tau-r}{1-r}+r\cdot (1-\frac{\tau-r}{1-r})=\tau$.

\end{itemize}

Combining these two cases, we complete this proof. 
\end{proof}

{\bf{Summary: }} Theorem 1 and Theorem 2 jointly show that $s_{opt}(r)$ is a valid sampling function and is better than $s(r)$. 

\section{Optimality}
\label{sec:opt}

In this section, for the purpose of analyzing the optimality of the sampling function, we show what kinds of conditions should be satisfied. We prove the optimality of $s_{opt}(r)$ under such conditions and further explain why the performance of $s_{opt}(r)$ is good even without the conditions being  fully satisfied.

\subsection{Conditions for Optimality Analysis}
\label{subsec41}
Since we can find a better sampling function $s_{opt}(r)$, another question comes out naturally: with the restriction of expected $\tau$-visibility, does an optimal sampling function (even better than $s_{opt}(r)$) with the smallest probability exist? 

In the proof of Theorem 2, we can only prove the expectation $\mathbb{E}[\mathcal{V}_{\mathcal{S}}(C)]\geq \tau$. 
Intuitively, the smaller the sampling probability is, the smaller the expectation is. 
It is hard to determine whether a sampling function is optimal because of lacking information on how loose the inequality is. If we intend to analyze the optimality of any function, we need to tighten this inequality to be an equation first.
Now we show  under what conditions  the theoretical analysis of optimality can be feasible.

In the proof of Theorem~\ref{theo:correct}, we amplify $\mathbb{E}[\mathcal{V}_{\mathcal{S}}(C)]$ two times:

The \textbf{first inequality sign} of Formula~(\ref{eq9}) is derived from Formula~(\ref{eq8}). Algorithm 2 implements the sampling operation in each recursive procedure
	using the probability $\sqrt[\overline{l_i}]{s_{opt}(\underline{r_i})}$. Since  $\overline{l_i}$ and $\underline{r_i}$ are upper bound and lower bound of $k$ and $r$ respectively, we have $\sqrt[\overline{l_i}]{s_{opt}(\underline{r_i})}\geq \sqrt[k]{s_{opt}(r)}$, thus $Pr[C\in \mathcal{S}]\geq s_{opt}(r)$. Now we have two approaches to eliminate this inequality. 
		 \emph{The first approach} is that if we want to analyze the property of the sampling function itself, we need to set the other factors ideal. This means if we do not care about the details of how to calculate $\overline{l_i}$, we can assume that this upper bound is ideal, so we have $\overline{l_i}=l$, and the same for $\underline{r_i}$. Note this hypothesis is made only for the purpose of analyzing the function theoretically, not for implementing Algorithm 2 in practice. With this assumption, we have $Pr[C\in \mathcal{S}]= s_{opt}(r)$. 
		 \emph{The second approach} is to modify the sampling procedure. Now let us sample using the probability $s_{opt}(r)$ only after a complete maximal clique is generated, rather than sample each time a new node is grown. If so, it is obvious that $Pr[C\in \mathcal{S}]= s_{opt}(r)$. One may argue that it is meaningless to do sampling once a maximal clique is found. It is true that if we add every clique whose $r$ is no more than $\tau$ into summary, this summary is strictly $\tau$-visible. However, as we explained in Section~\ref{sec:new}, this would introduce more redundancy to $\mathcal{S}$. In some applications, we only need this summary to be expected $\tau$-visible, so this one-step sampling procedure is significant to give such a concise $\mathcal{S}$.
	
	 The \textbf{second inequality sign} of Formula~(\ref{eq9}) is from the definition of $r$. 
	Due to the locality of Algorithm~\ref{al2}, 
	we use $r$ to replace the real visibility which should be no less than $r$. Now we need to make the assumption that such locality is sufficiently strong (by which we mean that two similar cliques should be produced consecutively), so that $r$ is indeed the visibility defined in Formula~(\ref{3}). In practice, we do not need to enforce such strong locality to implement Algorithm~\ref{al2}. We introduce this hypothesis only for the theoretical consideration, which means, we only need this assumption to construct a framework under which we can analyze the optimality of sampling functions. Once this hypothesis is made, the second inequality becomes an equation.

Now we can modify Formula (\ref{eq9}) to be an equation:
\begin{equation}
	\mathbb{E}[\mathcal{V}_{\mathcal{S}}(C)]= s_{opt}(r)+r\cdot (1-s_{opt}(r))
\end{equation}
if the following two conditions are satisfied:
\begin{itemize}
	\item [-] The bounds of $l$ and $r$ are ideal, or we only do sampling each time when a full maximal clique is generated. 
	\item [-] The property of locality is strong for $r$ to be the real visibility. 
\end{itemize}


\subsection{Optimality of $s_{opt}(r)$}


{ 
One may expect that 
the optimality should be defined as minimizing the expected cardinality of the summary. 
However, we  notice  that such a strong version of optimality requires the sampling function $s(r)$ to be a function of the distribution of $r$: $\rho (r)$, rather than simply a function of $r$. 
Accordingly, we have to report that it is not easy to give a proper definition for $\rho(r)$. 
This is due to such a  fact: 
{$\rho (r)$ 
could not be known
until the summary has been found.}
This fact holds because  the $r$ value of maximal clique $C$ depends on which maximal clique $C'$ it is compared with:  
once $C'$ changes, the value of $r$ will change accordingly, and  distribution $\rho (r)$ will also be disturbed thereafter. 
Since 
a nondeterministic (sampling) algorithm 
 can hardly know the exact predecessor $C'$ of each maximal clique unless  the algorithm is finalized, 
 it may be impossible to define $\rho(r)$ unless the summary is fully determined. 
{Thus $\rho (r)$  not only is data-dependent, but also  relies on the sampling function $s(r)$. 
Since  $s(r)$ should  inversely rely on $\rho (r)$, 
it is likely to be hard  to properly  give $\rho (r)$ an independent definition.}}

{For the above reasons, we choose not to seek a distribution related  sampling function with the strong version of optimality, but rather define the {optimality} as: \textbf{given the $r$ value of maximal clique $C$,} \textbf{sampling $C$ with the lowest probability while still promising $\tau$-visibility of the summary}. 
We consider this definition being more manipulatable theoretically and applicable practically. 
And this optimality successfully shows its effectiveness in the experimental studies (see Section~\ref{sec:exp}).} 
Now we can analyze the optimality of $s_{opt}(r)$ in the framework introduced in {Section} \ref{subsec41}. 

\begin{theorem}
	If the two conditions in Section \ref{subsec41} are satisfied, $s_{opt}(r)$ is optimal for Algorithm~\ref{al2}.
\label{theo:3}
\end{theorem}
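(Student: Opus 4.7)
The plan is to show that under the two conditions of Section~\ref{subsec41}, the $\tau$-visibility requirement on $\mathcal{S}$ translates pointwise (in $r$) into a sharp lower bound on any admissible sampling function, and then verify that $s_{opt}(r)$ attains this lower bound exactly. Because the definition of optimality fixes $r$ and minimizes the inclusion probability, I can argue value-by-value instead of worrying about the global distribution of $r$.

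First I would invoke the two hypotheses to upgrade the chain of inequalities in the proof of Theorem~\ref{theo:correct} to an equality: the ideal-bound condition gives $Pr[C\in\mathcal{S}]=s(r)$ for any candidate sampling function $s$, and the strong-locality condition gives $\mathcal{V}_{\mathcal{S}}(C)=r$ whenever $C\notin\mathcal{S}$. Combining these yields the clean identity
\begin{equation}
\mathbb{E}[\mathcal{V}_{\mathcal{S}}(C)]=s(r)+r\cdot(1-s(r))=r+(1-r)\,s(r).
\end{equation}
The expected $\tau$-visibility constraint $\mathbb{E}[\mathcal{V}_{\mathcal{S}}(C)]\geq\tau$ is therefore equivalent to the pointwise inequality $(1-r)\,s(r)\geq \tau-r$ for every $r\in[0,1]$.

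Next I would perform the case split dictated by the sign of $\tau-r$. For $r\in[\tau,1]$, the right-hand side is non-positive, so any $s(r)\geq 0$ satisfies the constraint; the smallest admissible value is $s(r)=0$, which is precisely $s_{opt}(r)$ on that interval. For $r\in[0,\tau)$, we have $1-r>0$, so dividing gives $s(r)\geq \frac{\tau-r}{1-r}$, and the infimum $\frac{\tau-r}{1-r}$ is exactly $s_{opt}(r)$. In both regimes $s_{opt}(r)$ attains the minimum admissible probability, so any other valid sampling function $s$ must satisfy $s(r)\geq s_{opt}(r)$ for every $r$, which is the optimality claim.

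I do not expect a serious obstacle: the two conditions have been crafted precisely to collapse the two inequalities in Formula~(\ref{eq9}) into equalities, and once that is done the problem reduces to minimizing a linear function of $s(r)$ subject to a linear lower bound for each fixed $r$. The only mild subtlety worth stating clearly in the write-up is that the optimality is pointwise in $r$ (as emphasized in the definition just above the theorem), so the argument neither requires nor uses any knowledge of $\rho(r)$; this matches the ``given the $r$ value of maximal clique $C$'' phrasing and explains why the theorem does not need to mention the distribution at all.
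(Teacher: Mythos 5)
Your proposal is correct and follows essentially the same route as the paper: both upgrade the inequalities of Formula~(\ref{eq9}) to the identity $\mathbb{E}[\mathcal{V}_{\mathcal{S}}(C)]=r+(1-r)s(r)$ under the two conditions and then use the fact that $\frac{\tau-r}{1-r}$ is exactly the value making this equal $\tau$ on $[0,\tau)$ while $0$ suffices on $[\tau,1]$. The only difference is presentational — you derive the pointwise lower bound $s(r)\geq s_{opt}(r)$ directly, whereas the paper argues by contradiction that any $s'\leq s_{opt}$ with strict inequality at some $r_0$ forces $\mathbb{E}[\mathcal{V}_{\mathcal{S}}(C)]|_{r_0}<\tau$ — which is the contrapositive of the same computation.
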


\begin{proof}
	We show that if there exists a sampling function $s'(r)$, such that $\forall r\in [0,1]$, $s'(r)\leq s_{opt}(r)$ and for at least one point $r_0$ there is $s'(r_0)< s_{opt}(r_0)$, such a function cannot be used to generate an expected $\tau$-visible summary. 
	
	Note $s_{opt}(r)=0$ when $r\in [\tau,1]$, $r_0$ cannot be in this range, since a valid probability should be nonnegative. So we have $r_0\in [0,\tau)$, and
	\begin{equation}
	\begin{aligned}
				\mathbb{E}[\mathcal{V}_{\mathcal{S}}(C)]|_{r_0}&= s'(r_0)+r_0\cdot (1-s'(r_0))\\
				&<s_{opt}(r_0)+r_0\cdot (1-s_{opt}(r_0))\\
				&=\frac{\tau-r_0}{1-r_0}+r_0\cdot (1-\frac{\tau-r_0}{1-r_0})\\
				&=\tau
	\end{aligned}
	\end{equation}
	This means that the summary generated by $s'(r)$ cannot be expected $\tau$-visible. 
\end{proof}

Theorem~\ref{theo:3} is a conditional theoretical guarantee for the optimality of $s_{opt}(r)$.
 Note even if in general cases  these two strong conditions are not fully satisfied, Theorem 3 is still useful for the practical implementation of Algorithm~\ref{al2}. 
That means if the bounds $\overline{l}$ and $\underline{r}$ are well estimated and the property of locality is strong, the inequalities in Formula~(\ref{eq9}) can be very tight. In such cases, $s_{opt}(r)$ can still show good performance.

{One may concern that it is not clear to what extent we can achieve the good performance of $s_{opt}(r)$ in practice by 
 (1) tightening bounds;
 (2) strengthening the locality. 
 In the next section, we address the first concern by reviewing two existing bounds and proposing a new one which outperforms the other two by large margins. 
 For the second concern, we show that 
 stronger locality can be achieved by reordering  vertices carefully. We will review an existing vertex order and design a new better one.}

\section{Bounds and Locality}
\label{sec:boundAndorder}

In this section, we show how to approach good performance of the new sampling strategy by tightening bounds  (Section~\ref{BA}) and by reordering vertices  (Section~\ref{LA}). 

\subsection{Bound Analysis}\label{BA}
{
The first inequality of (\ref{eq9}) is derived from bound estimation. 
Note that Formula~(\ref{eq:14})  we use to calculate the lower bound $\underline{r}$ is the same as that introduced in~\cite{wang_redundancy-aware_2013}:
\begin{equation}\label{eq:14}
	\underline{r}=\min\limits_{1\leq t\leq \overline{d}}\frac{|C\cap C'|+\max \{t-\overline{y_t},0\}}{|C|+t}
\end{equation}
where $C'$ is the previous maximal clique added into $\mathcal{S}$; 
$t$ is the number of vertices to be used for growing the partial configuration $C$ into a full maximal clique; 
$\overline{d}$, which satisfies $\overline{l}=|C|+\overline{d}$, is the upper bound of $t$;
and given $y_t$ out of the $t$ vertices are not covered by $C'$, $\overline{y_t}$ is an upper bound of $y_t$.  
Formula~(\ref{eq:14}) can be understood in this way: 
suppose we know that the current partial configuration $C$ still needs $t$ vertices to grow into a full maximal clique $P$, then the dominator $|C|+t$ is the size of $P$. 
Since $\overline{y_t}$ means that at most $\overline{y_t}$ out of $t$ vertices in $P\backslash C$ are not contained by $C'$, 
this  means that at least $t-\overline{y_t}$ are covered by $C'$. 
Thus $\max \{t-\overline{y_t},0\}$ is  a lower bound of the size of $(P\backslash C)\cap C'$. (The \emph{max} operator is inserted here because depending on the estimation method of $\overline{y_t}$, $t-\overline{y_t}$ may be negative.) Now we see that the two parts of the numerator are  $|C\cap C'|$ and a lower bound of $|(P\backslash C)\cap C'|$ respectively, thus the sum is a lower bound of $|P\cap C'|$. Combining the discussions above, the whole fraction is the very lower bound of $r\equiv |P\cap C'|/ |P|$. Since we lack information of the exact value of $t$, we have to enumerate all possible $t$ in $[0, \overline{d}]$ and choose the minimum  as the lower bound.  
$\overline{y_t}$ can be estimated  as   $|T\backslash C'|$, or simply the value of $t$, or the number of vertices in $T\backslash C'$ whose degrees are at least $t-1$
(because these $y_t$ vertices should be contained by a $t$-clique). 
 We see here the upper bound $\overline{d}$ 
( $=\overline{l}- |C|$, where $|C|$ is known)
 is used to estimate $\underline{r}$, 
 and the fraction after the $\min_{1\leq t\leq \overline{d}}$ operator has nothing related to $\overline{d}$ (because it is calculated after $t$ is given), 
 so the quality of $\underline{r}$ is  determined by the tightness of $\overline{d}$ (or $\overline{l}$). Thus in the following, we focus on estimating $\overline{d}$.

One valid and tight bound of $d$ is the size of the maximum clique in candidate set $T$, however, finding such a maximum clique itself is a clique enumeration problem which is of exponential time.
As a result, we should consider  realistic bounds instead. 
In the following, we review two bounds that were  discussed in the previous work~\cite{wang_redundancy-aware_2013},  
then we propose a new one to further improve the effectiveness of $s_{opt}(r)$. 
{ 

%

 Let $G_T$ be the induced graph of the candidate set $T$ on graph $G$, then two existing upper bounds of $d$ are: 
\begin{itemize} 
%
	\item [-]$H$ bound, denoted by $\overline{d}_h$, is the maximum $h$ so that there exist at least $h$ vertices in $G_T$ whose degrees are no less than $h-1$.
	The maximum clique size can be bounded by $h$ because if there exists a $k$-clique, there should also exist at least $k$ vertices in $G_T$ whose degrees are no less than $k-1$. Therefore $h\geq k$ holds for all possible $k$-cliques, including the maximum clique. 
	\item [-]Core bound, denoted by $\overline{d}_{core}= Core(G_T)+ 1$, where $ Core(G_T)$ denotes the maximum core number in $G_T$. 
	We now review the definition of  $k$-core~\cite{seidman1983network} and core number first. 
\end{itemize}
\begin{definition}[$k$-core]
	The $k$-core of a graph $G$ is the largest induced subgraph in which the degree of each vertex is at least $k$.
\end{definition}

\begin{definition}[Core Number]
The core number of  graph $G$, denoted as $Core(G)$, is the largest $k$ such that a $k$-core is contained in $G$.
\end{definition} 
The core number can serve as an upper bound because $k$-core is weaker than $k$-clique: a $k$-clique must be a $(k$-$1)$-core, while a ($k$-$1$)-core may not be a $k$-clique. Thus $Core(G_T)+ 1$ will be no less than the maximum clique size in $G_T$. 
Now we define our newly proposed bound.    
\begin{definition}[Truss bound]
	Truss bound,  denoted by $\overline{d}_{truss}$,  is the maximum truss number $ Truss(G_T)$ in $G_T$.
\end{definition}
Now we review the definition of $k$-truss and truss number, and then  explain  why the maximum truss number $Truss(G_T)$ is valid to be an upper bound.
\begin{definition}[$k$-truss]
	The $k$-truss of a graph $G$ is the largest induced subgraph in which each edge must be part of $k-2$ triangles in this subgraph. 
\end{definition}

\begin{definition}[Truss Number]
	The truss number of  graph $G$, denoted as $Truss(G)$, is the largest $k$ such that a $k$-truss is contained in $G$.
\end{definition}
$\overline{d}_{truss}$ is a  upper bound of the size of maximum clique. 
This is  because a $k$-clique with the maximum $k$ is also a $k$-truss since each edge in a $k$-clique is strictly contained by $k-2$ triangles. 
Thus the truss number cannot be less than the maximum clique size $k$. 

These three bounds satisfy the following inequality:
\begin{equation}
    \overline{d}_h \geq \overline{d}_{core} \geq \overline{d}_{truss}
\end{equation}
The first inequality holds because $H$ bound 
does not enforce the $h$ vertices to be connected, while core bound does.
The second inequality  comes from the fact that
a $k$-truss must be a $(k-1)$-core. 
This is because the endpoints of each edge $e$ should be incident to no less than $k-1$ edges (including $e$ itself) since $e$ is guaranteed to be involved in at least $k-2$ triangles. 


The cost of evaluating these bounds are:
\begin{equation}\label{eq:16}
		 \overline{d}_h: O(V_T);\mbox{   }\mbox{   }
		 \overline{d}_{core}: O(E_T);\mbox{   }\mbox{   }
		\overline{d}_{truss}: O(E_{T}^{1.5})
	\end{equation}
where $V_T$ and $E_T$ are  vertex set and edge set of the induced graph $G_T$ respectively. 
The induced graph $G_T$ can be constructed when selecting the pivot thus its construction does not incur an extra cost. 
 For $\overline{d}_h$, when constructing $G_T$, we can maintain a   $V_T$-length array to record 
the number of vertices at
 each degree value. 
 This can be done in $O(V_T)$. 
 Then the $H$-value can be found by scanning this array from tail  (where the numbers of vertices with higher degree values are stored) to head  (where the numbers of vertices with lower degree values are stored) until $h$ vertices whose degrees are no less than $h-1$ are found.  
 This step is also in $O(V_T)$.  
 For ${d}_{core}$, an $O(E_T)$ core decomposition~\cite{Khaouid:2015:KDL:2850469.2850471} is needed after $G_T$ is found. 
  For ${d}_{truss}$, the truss decomposition takes $O(E_{T}^{1.5})$ to find the maximum truss number~\cite{wang2012truss}. 
 }  

We see that the truss bound is the tightest one among all of the three, and therefore it  promises the best performance in terms of  effectiveness. 
The intrinsic is the fact that the structure of truss is more compact (or cohesive) than the other two. 
(This property of compactness can also be used to design vertex orders to enhance the locality. 
We will give a detailed discussion soon in Section~\ref{LA}.) 
Users may have their own preferences to balance the running time and summary size. 
Thus which bound to select depends on to what extent the effectiveness can be improved by sacrificing the efficiency. 
In Section~\ref{sec:exp}, we conduct experimental studies to compare the practical performance of different bounds in terms of both effectiveness and efficiency.

\subsection{Locality Analysis}\label{LA}

Strong locality implies that two similar cliques should be produced consecutively. 
This means, for a new clique $C$, the local visibility computed with the previous output clique $C'$ should be close to the global visibility which is computed with the most similar clique to $C$ in the summary. 
However, such a condition is difficult to meet. 
Reflected in practice, one typical implementation is the vertex order we should follow to grow the current partial clique.
An effective vertex order with strong locality should have such a property that each candidate set $T$ of the current configuration $C$ has a {sufficiently compact} structure. 
Here, by {\it compact (or cohesive)} we mean that the nodes of a candidate set  are well-connected with each other such that  cliques in this set have a higher probability to overlap. 

One question arises: in the outer recursion level of BK-MCE,  since the neighbor set $\mathcal{N}(v)$ of the only vertex $v$ in the current partial clique $C$=$\{v\}$ is uniquely determined by the graph $G(V,E)$, why we still expect a particular structure in the candidate set of $\{v\}$?
The answer is that if we implement a fixed vertex order to grow cliques, when we include $v$ into $C$, all the neighbors of $v$ which precede it in the order can be safely moved into set $D$. The key point is that the difference between $\mathcal{N}(v)$ and the candidate set of $\{v\}$ is determined by the particular order we choose, thus leaves us the very opportunity to reshape the structure of the candidate set. The same holds for each level of the recursion. 


{
Now we see that 
 strong locality can be achieved by reordering vertices. 
 In the following, we explain  why degeneracy order can be employed to achieve this goal even if the initial purpose of it is to bound time complexity of the BK-MCE~\cite{eppstein_listing_2013}. 
 Then we propose a novel truss order based on truss decomposition to further enhance locality.  
  Now we begin with the definition of degeneracy. 
\begin{definition}[Degeneracy]
	Given a graph $G(V,E)$, the degeneracy of $G$ is the smallest value $d$, such that every subgraph of $G$ contains a vertex whose degree is no more than $d$. 
\end{definition}
Degeneracy is naturally related to a special vertex order below. 
\begin{definition}[Degeneracy Order]\label{degeneracy}
	The vertices of a d-degeneracy graph have a degeneracy order, in which each vertex $v$ has only $d$ or fewer neighbors after itself.
\end{definition}
}
\begin{table*}
\renewcommand\arraystretch{1.2}
\small
\centering
\caption{Statistics of datasets}
\begin{tabular}{l l l l c c}
\toprule
Name  & $\lvert V\rvert$ & $\lvert E\rvert$    & Cliques & $\tau$-RMCE-TU [$\tau=0.5/0.9$] & $\tau$-R$^+$MCE-TU [$\tau=0.5/0.9$]   \\
\midrule
soc-Epinions1   & 75,879  & 508,837    & 1,775,065 &18.1\% / 77.9\% &2.5\% / 31.3\% \\
loc-Gowalla    & 196,591  & 950,327  &960,916  &33.9\% / 85.3\% &3.9\% / 30.3\%\\
amazon0302     & 262,111  & 1,234,877   &403,360   &68.8\% / 95.6\% &15.8\% / 37.8\%\\
email-EuAll   & 265,214  & 420,045  & 377,750  &71.3\% / 93.6\% &3.7\% / 14.0\%\\
NotreDame   & 325,729  & 1,497,134  & 495,947   &69.2\% / 93.7\% &5.1\% / 17.0\%\\
com-youtube  & 1,134,890  & 2,987,624   & 3,265,951  &62.8\% / 93.8\% &6.0\% / 23.5\%\\
soc-pokec  & 1,632,803  & 30,622,564   & 19,376,873  &61.1\% / 93.3\% &6.0\% / 27.3\%\\ 
cit-Patents   & 3,774,768  & 16,518,948   & 14,787,031  & 86.4\% / 96.4\%  &7.6\% / 20.2\% \\

\bottomrule
\end{tabular}
\label{datasets}
\vspace{-10pt}
\end{table*}
Degeneracy order can be formed by repeatedly deleting the minimum degree vertex with  all its edges  on the current subgraph. 
Note this actually is the core decomposition procedure~\cite{Khaouid:2015:KDL:2850469.2850471}, 
thus this order sorts vertices by core number from low to high. 

The reason why this order can be used to enhance  locality is straightforward. 
We explain it by focusing on this particular scene during MCE procedure that a vertex $v$ is being moved from candidate set $T$ to partial clique $C$.  
This $v$ and all vertices of $G$ that are reordered after $v$ in the degeneracy order induce a subgraph $G'$. 
By the construction of degeneracy order, we know $v$ is the minimum degree vertex in $G'$, which is denoted by $d(v)$,  
thus $G'$ is a $d(v)$-core. 
Since the candidate set $T$ is a subset of $G'$, we reach the conclusion that $T$ is contained by a $d(v)$-core, which is our desirable compact structure with strong locality. 
Although existing works~\cite{eppstein_listing_2013}~\cite{san_segundo_efficiently_2018} studied using degeneracy to speed up MCE in the aspect of running time, 
to our best knowledge, 
our work is the first to exploit degeneracy order to strengthen the locality for the purpose of reducing overlapping cliques. 
To further enhance the locality,  
we notice that the key of  locality is to guarantee the candidate set $T$ to be contained by a compact structure, e.g., $k$-core. 
Hence if we can find a novel vertex order that has a stronger guarantee, 
e.g., $T$ is covered by a $k$-truss, 
then we can foresee that the performance in terms of effectiveness will outperform that of degeneracy order. 
Following this intuition, we carefully inspect the relationship between core decomposition and degeneracy order, and find that such a relationship also applies to the truss decomposition and a new vertex order (truss order). 
\begin{definition}[Truss Order]\label{df:to}
Vertices sorted by truss order satisfy such a property: 
if $k$ is the maximum value that there exists a $k$-truss containing vertex $v$, then all the  vertices reordered after $v$ should also be contained by the same $k$-truss. 
\end{definition}

Truss order can be formed during the procedure of truss decomposition.  
We firstly delete the edge $(u,v)$ which is contained by the least number of triangles (this number is denoted by the  \textit{support} of an edge). 
After $(u,v)$ is removed, the supports of all edges whose endpoints contain $u$ or $v$ decrease by 1. 
The procedure repeats until all the edges are removed. 
Then the order that vertices are peeled off from $G$ is a valid truss order. 
This is because the order sorts each vertex by the maximum value of $k$ that there exists a $k$-truss containing it. 
The same analysis  why degeneracy order enhances locality applies to truss order: by Definition~\ref{df:to}, the candidate set $T$ is guaranteed to be contained by a $k$-truss. 

Since what we desire is a compact structure of candidate set, $k$-truss is apparently more favorable than $k$-core. 
{Note that the concept of locality is  more goal-driven and the extent of locality is output-determined. 
Hence, instead of giving a formal theoretical analysis which we found difficult,  we decide to use  extensive experiments to illustrate the effect of three types of vertex orders on output size.}
We will report experimental results in Section~\ref{sec:exp} to compare the performance of these two vertex orders with random order as a baseline in terms of both effectiveness and efficiency. 



\section{Experimental Evaluation}
\label{sec:exp}

\begin{table}[b]
\renewcommand\arraystretch{1.2}
\small
\centering
\caption{Notations}
\begin{tabular}{p{30pt}p{180pt}}
\toprule
Setting  & Meaning \\
\midrule
  $T,H,C$  &  Truss bound, H bound, Core bound \\
   $U,I,R$ & Truss order, Degeneracy order, Random order  \\

\bottomrule
\end{tabular}
\label{table3}
\end{table}

\begin{figure*}[ht]
\vspace{-12pt}
	\centering

	\subfloat[soc-Epinions1 \label{1a}]{\includegraphics[width=4.2cm, height=3.36cm]{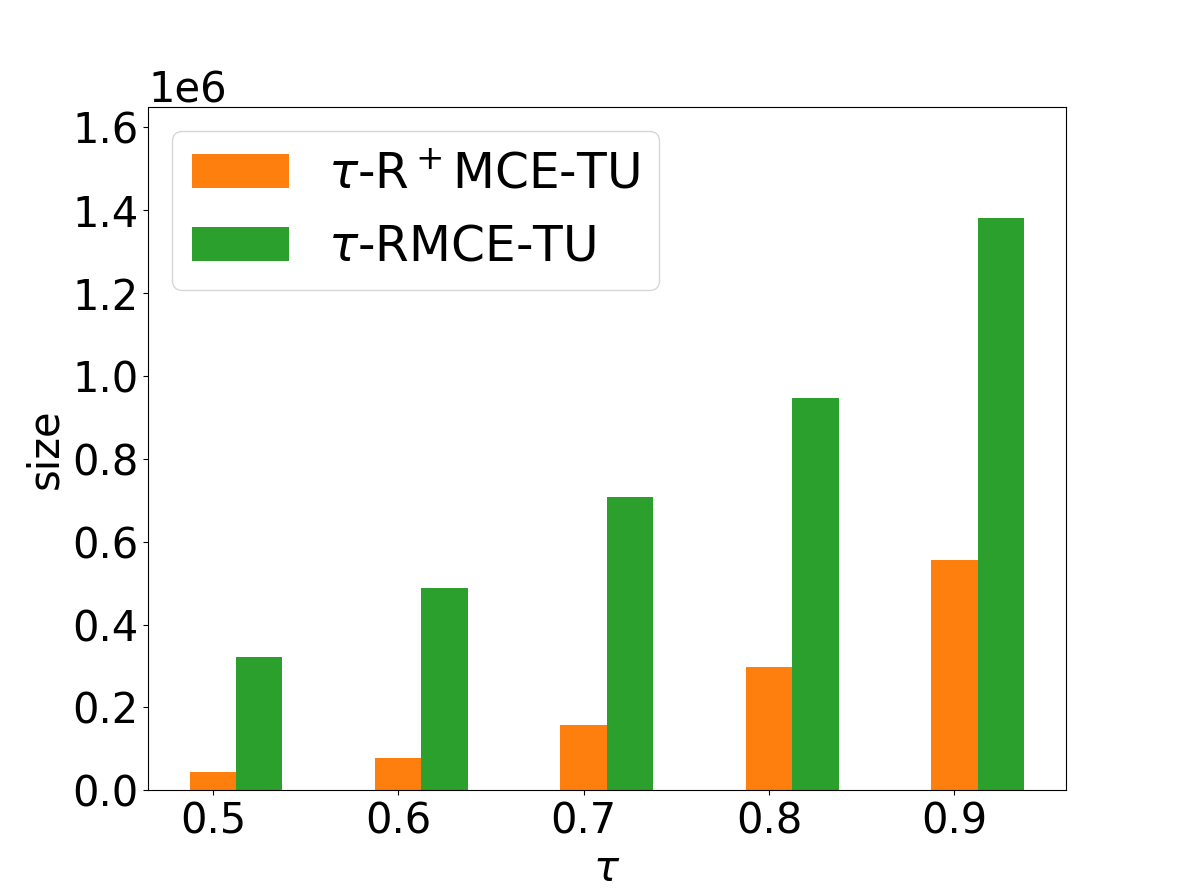}}
	\subfloat[loc-Gowalla \label{1b}]{\includegraphics[width=4.2cm, height=3.36cm]{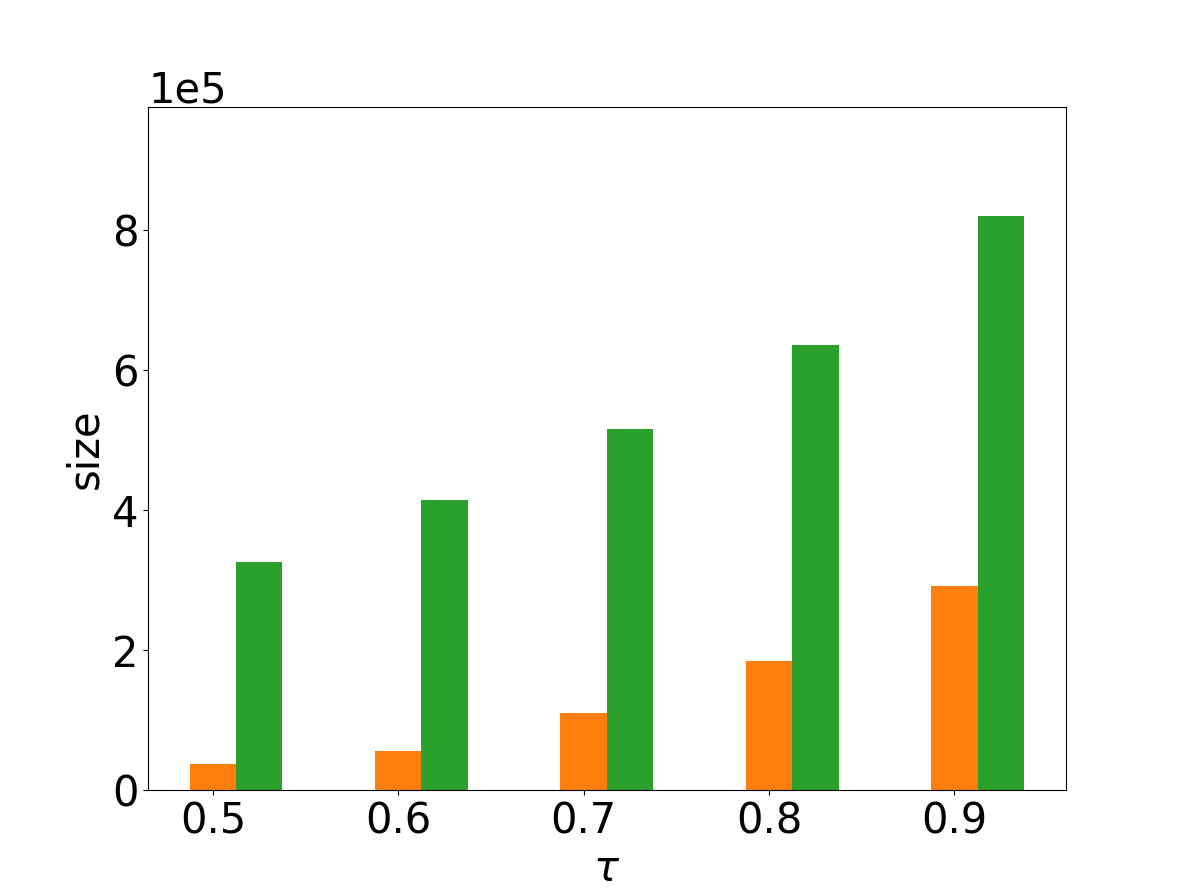}}
\vspace{-12pt}
	\subfloat[amazon0302 \label{1c}]{\includegraphics[width=4.2cm, height=3.36cm]{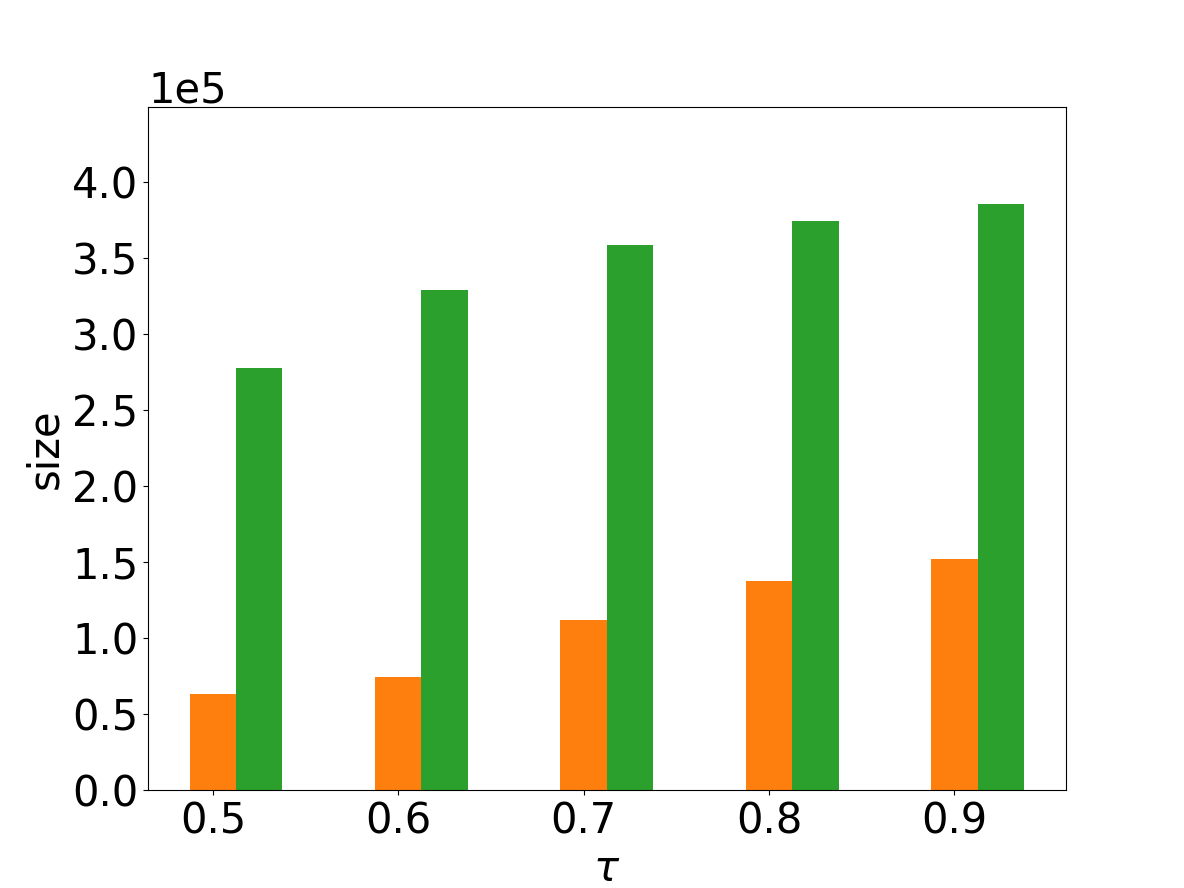}}
\subfloat[email-EuAll  \label{1d}]{\includegraphics[width=4.2cm, height=3.36cm]{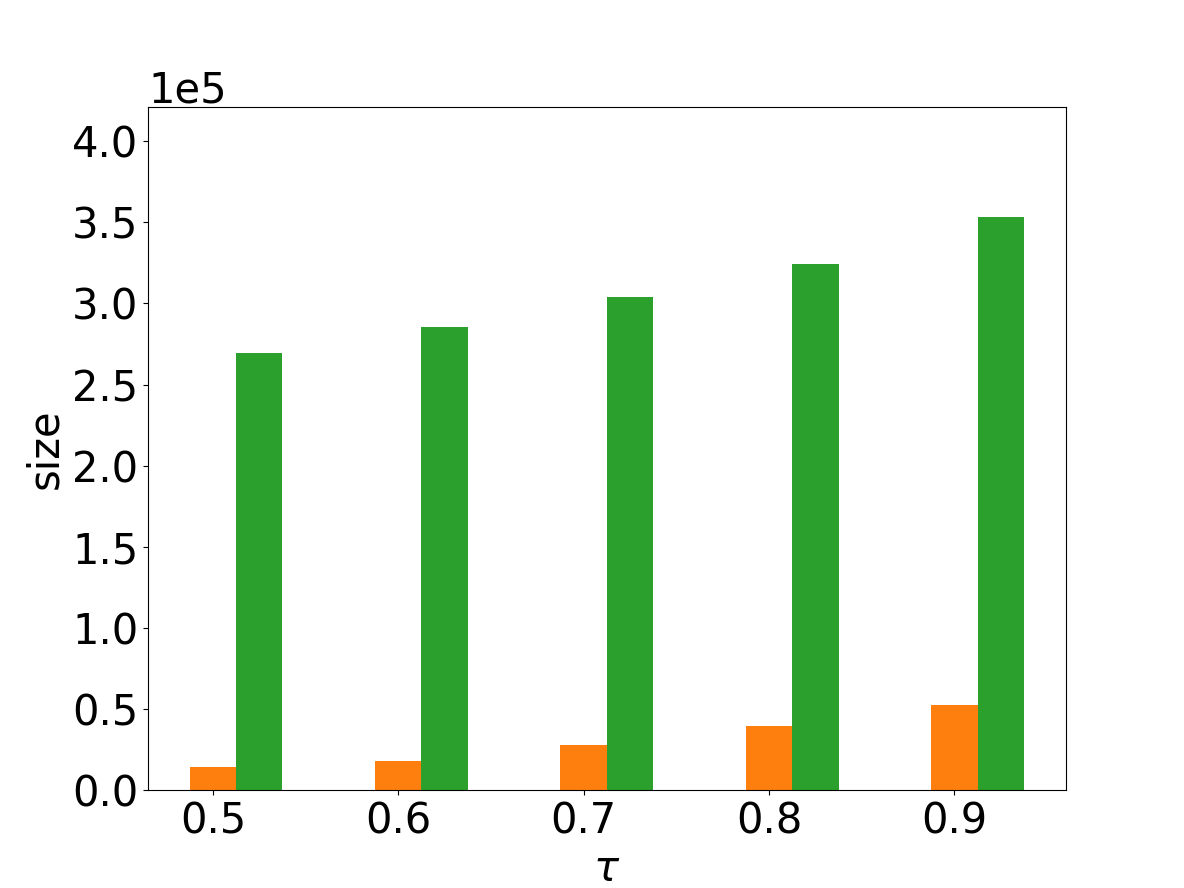}}

\subfloat[web-NotreDame \label{1f}]{\includegraphics[width=4.2cm, height=3.36cm]{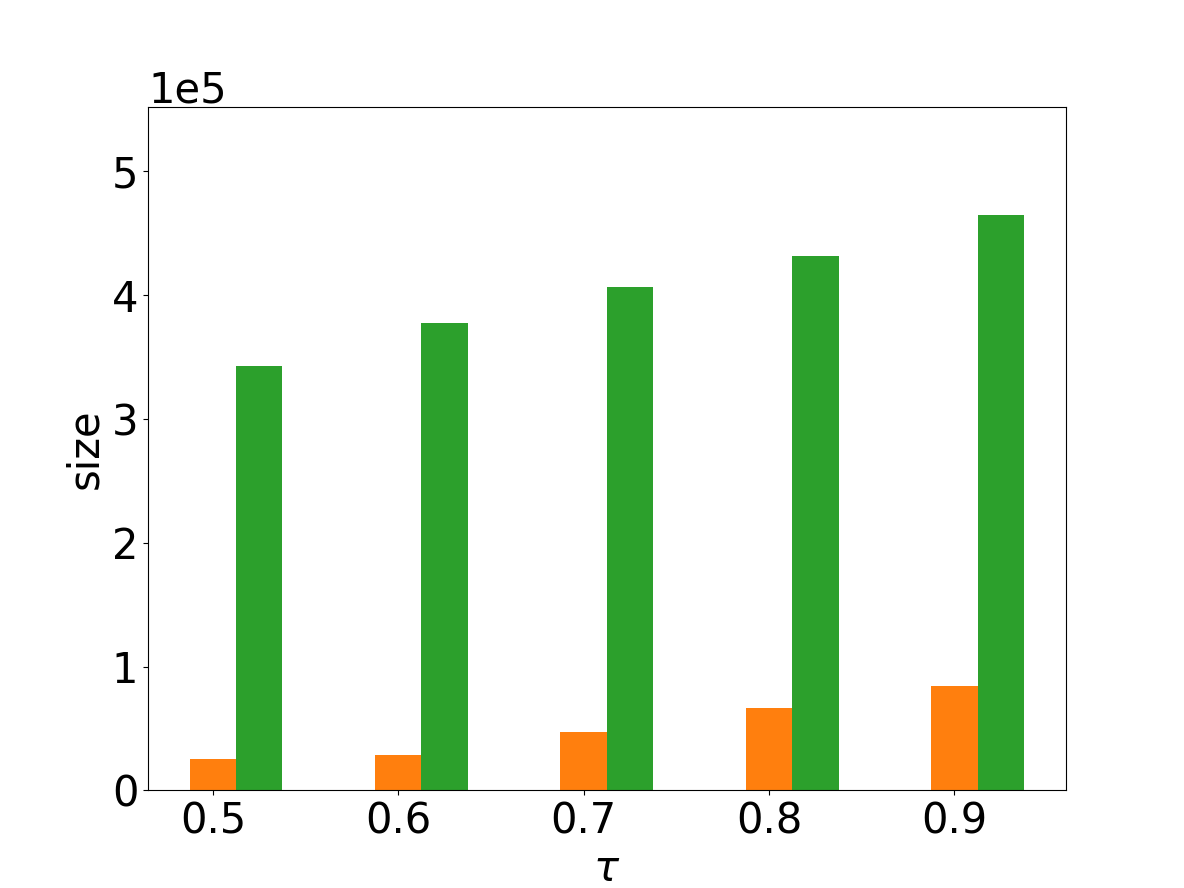}}
\subfloat[com-youtube \label{1g}]{\includegraphics[width=4.2cm, height=3.36cm]{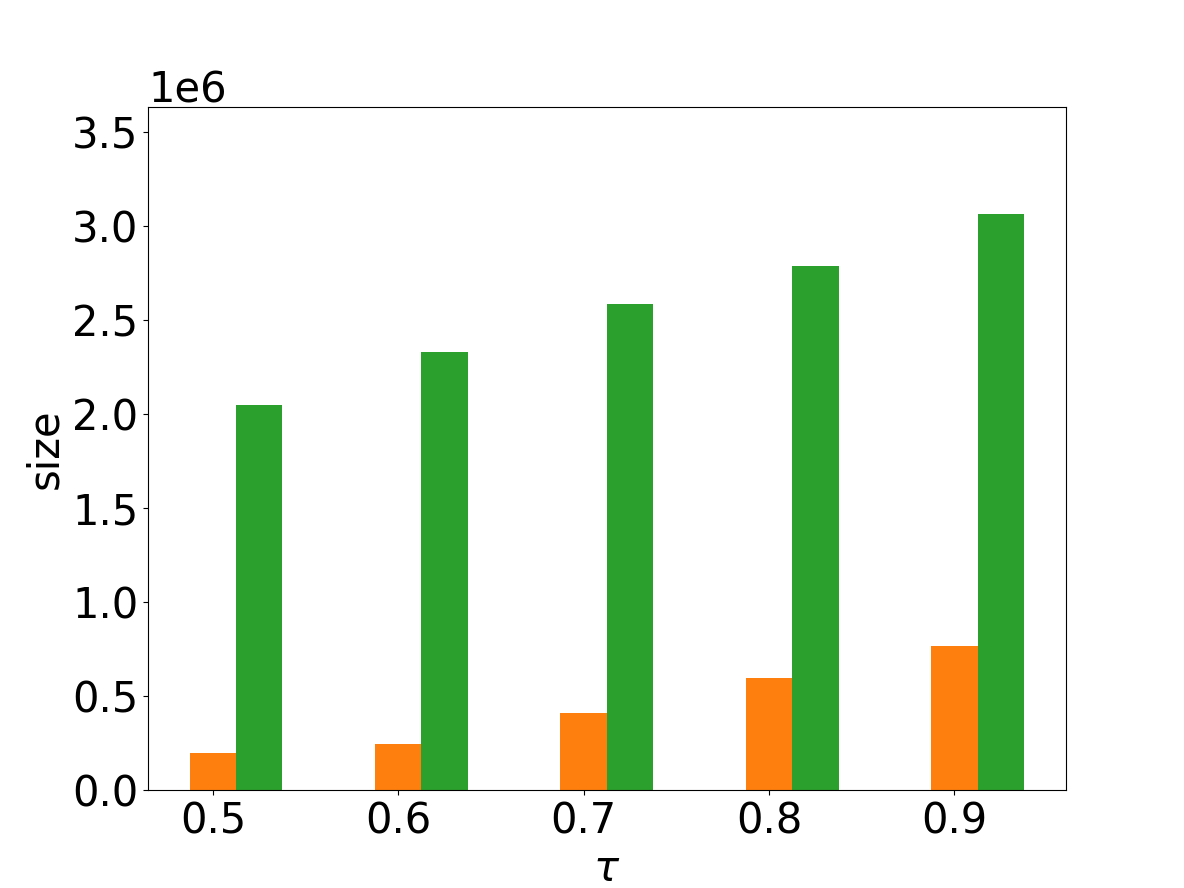}}
\subfloat[soc-pokec \label{1h}]{\includegraphics[width=4.2cm, height=3.36cm]{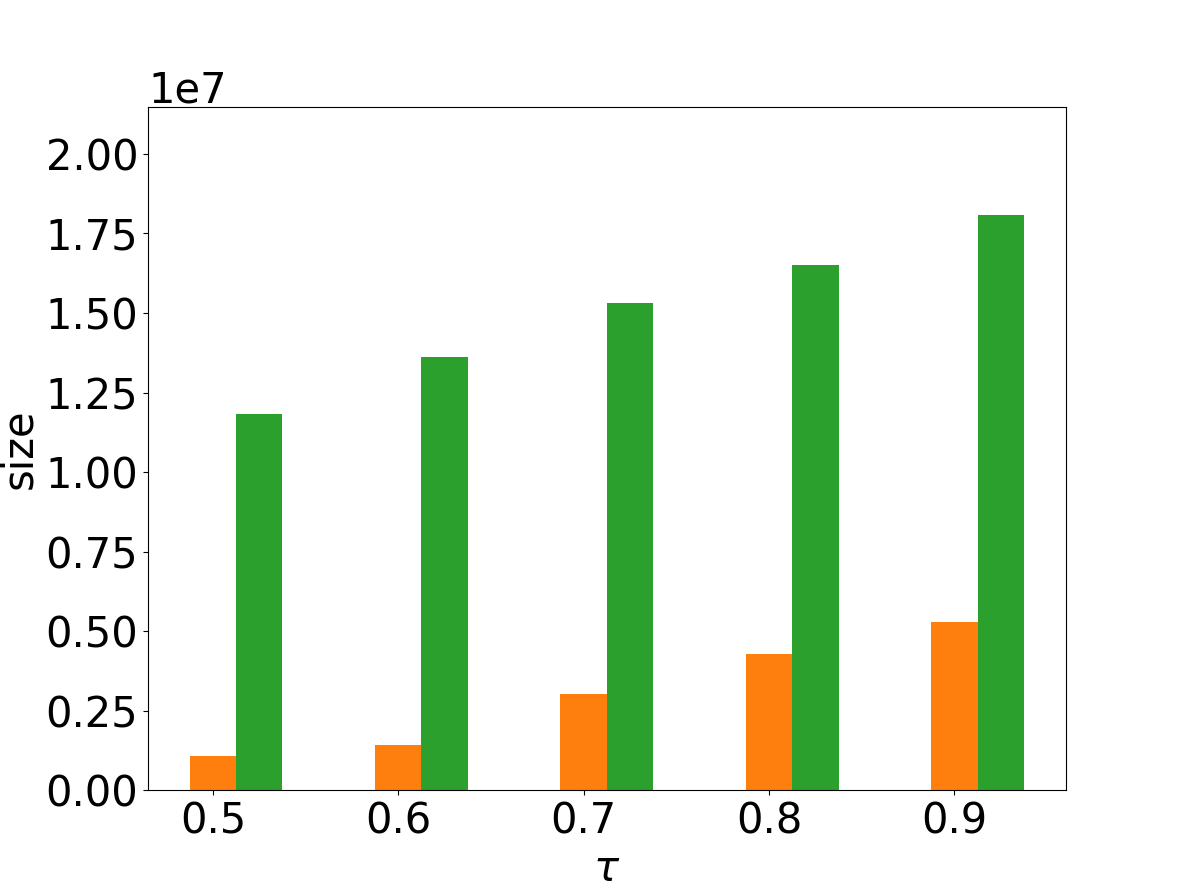}}
\subfloat[cit-Patents \label{1i}]{\includegraphics[width=4.2cm, height=3.36cm]{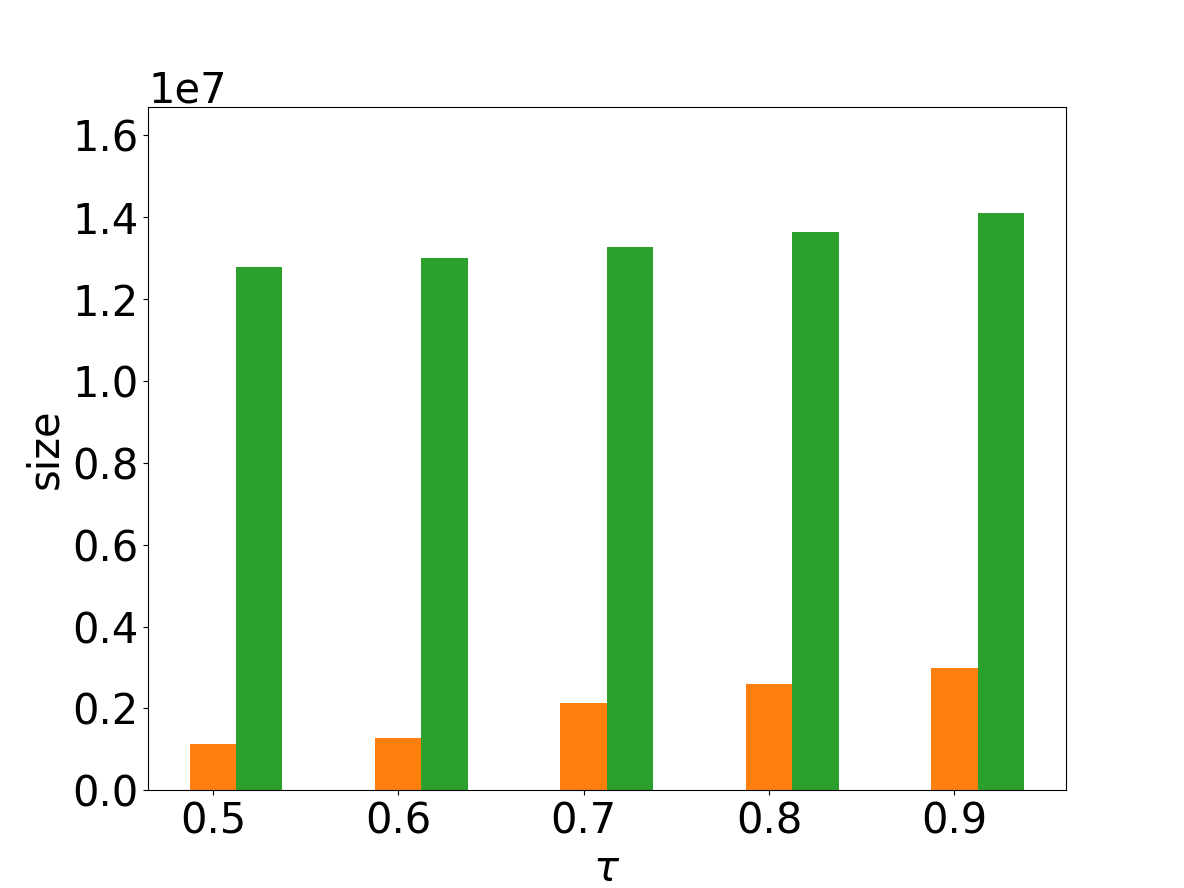}}


	\captionsetup{justification=centering}
	\caption{{\small Summary size of $\tau$-R$^+$MCE and $\tau$-RMCE on eight datasets with $\tau$ varied from 0.5 to 0.9, T  and U  as default}}\label{fig:stotal}
	\vspace{-10pt}
\end{figure*}

In this section, we look into three research questions by experiments.  
(1) To what extent the summary size and running time can be reduced by $\tau$-R$^+$MCE vs. $\tau$-RMCE? 
(2) To what extent the effectiveness of $\tau$-R$^+$MCE can be further improved by our newly proposed truss order and truss bound? 
(3) To what extent our newly designed truss order and bound affect the efficiency (both running time and memory requirement)? 
For short, we denote the the $\tau$-visible MCE algorithm \cite{wang_redundancy-aware_2013} by $\tau$-RMCE and ours by $\tau$-R$^+$MCE.
%
All algorithms are implemented in C++ and tested on a MacBook Pro with 16GB memory and Intel Core i7 2.6GHz CPU 64. 
We evaluated both effectiveness (in terms of summary size) and efficiency (in terms of first-result time, total running time and total memory requirement) with $\tau$ varying  from $0.5$ to $0.9$.
$\tau$-R$^+$MCE and $\tau$-RMCE were implemented  with three types of bounds (truss bound (T), core bound (C), H bound (H)) as well as three vertex orders (truss order (U), degeneracy order (I), random order (R)). Details are shown in Table~\ref{table3}. 
All results  were reported by an average of five runs. 
 
{\bf Datasets} 
{We use eight real-world datasets from different domains with various data properties to show the robustness of our algorithms. 
To provide a more comprehensive comparison with $\tau$-RMCE, we also test the algorithms on the dataset {\it cit-Patents} that contains the largest number of vertices used by the work~\cite{wang_redundancy-aware_2013}.} 
Details are shown in Table~\ref{datasets}. 
For each dataset, 
we denote by $\lvert V\rvert$ the number of vertices, 
by $\lvert E\rvert$ the number of edges and
by Cliques the total number of maximal cliques as reference. 
The 5th and 6th column denotes the fraction 
$summary\mbox{ }size/total\mbox{ }number\mbox{ }of\mbox{ }maximal\mbox{ }cliques$ for $\tau$-RMCE-TU and $\tau$-R$^+$MCE-TU with the best configuration (Truss bound (T) and Truss order (U)) respectively. 
The percentage before and after / is the value at $\tau = 0.5$ and $\tau = 0.9$ respectively. 
For example, for soc-Epinions1 at the 5th column, $18.1\%/77.9\%$ means that the sizes of summaries produced by $\tau$-RMCE occupy $18.1\%$ and $77.9\%$ of the total number of maximal cliques at $\tau = 0.5$ and $\tau = 0.9$. 
All datasets used in this paper can be found in {Stanford Large Network Dataset Collection}\footnote{Available at http://snap.stanford.edu/data/index.html}.

\subsection{Effectiveness}
\label{sec:size}

To evaluate the effectiveness of our algorithm, 
 we compare the size of the summaries generated by $\tau$-RMCE and $\tau$-R$^+$MCE in Section~\ref{sec:ss} (both with T bound and U order as default). 
 To see to what extent our proposed truss bound and truss order benefit effectiveness, 
 we implemented  $\tau$-RMCE and $\tau$-R$^+$MCE with three orders (U, I, R, bound T as default) in Section~\ref{sc:6.1.2}, and with three bounds (T, C, H, order U as default) in Section~\ref{sc:6.1.3}. 

\begin{figure*}[ht]
\vspace{-12pt}
	\centering

	\subfloat[soc-Epinions1 \label{3a}]{\includegraphics[width=4.2cm, height=3.36cm]{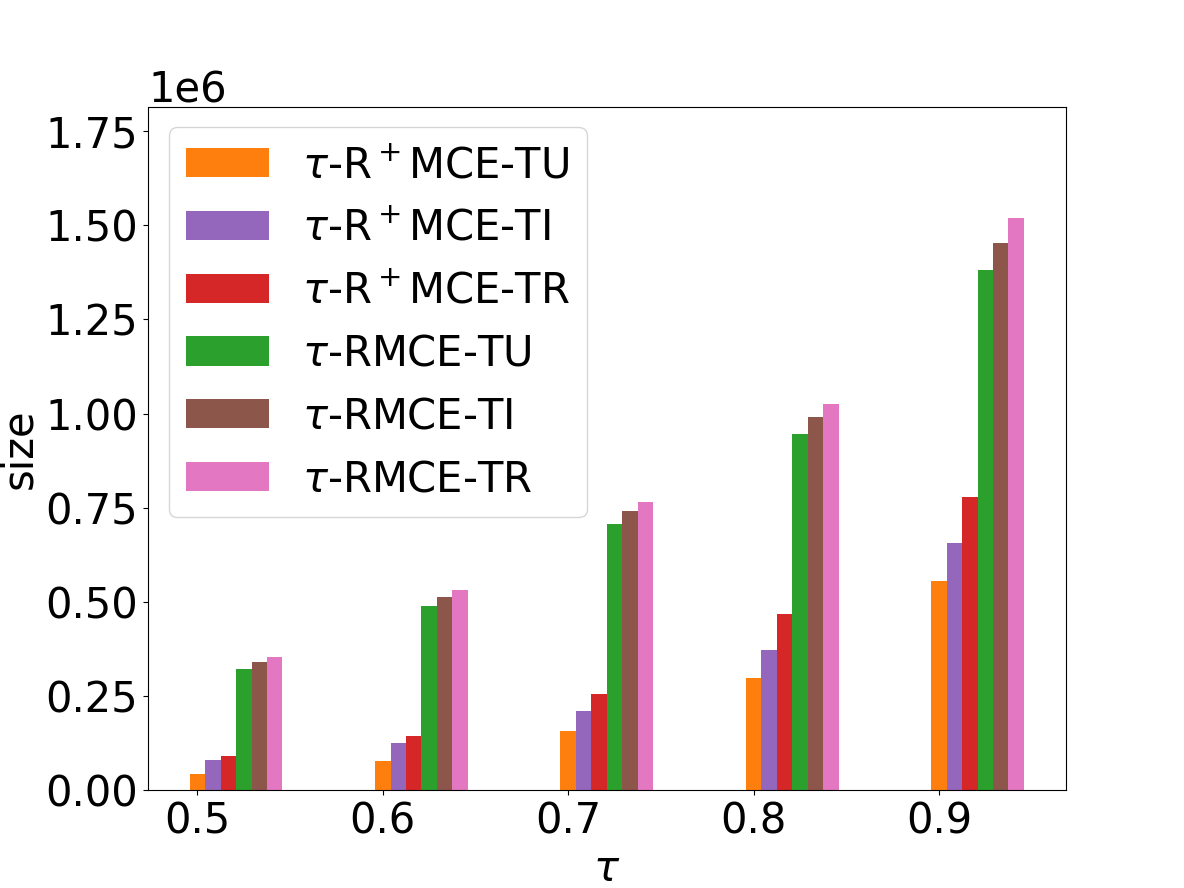}}
	\subfloat[loc-Gowalla \label{3b}]{\includegraphics[width=4.2cm, height=3.36cm]{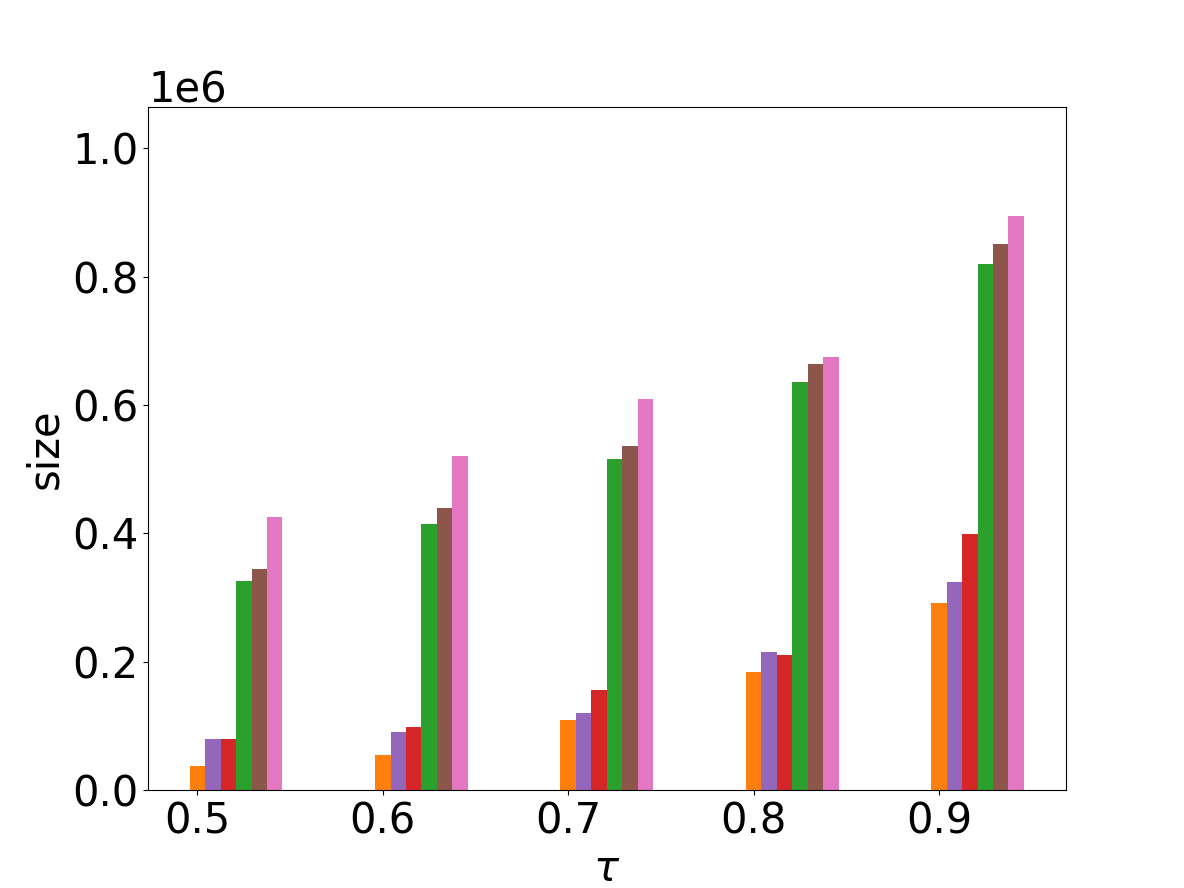}}
	\subfloat[amazon0302 \label{3c}]{\includegraphics[width=4.2cm, height=3.36cm]{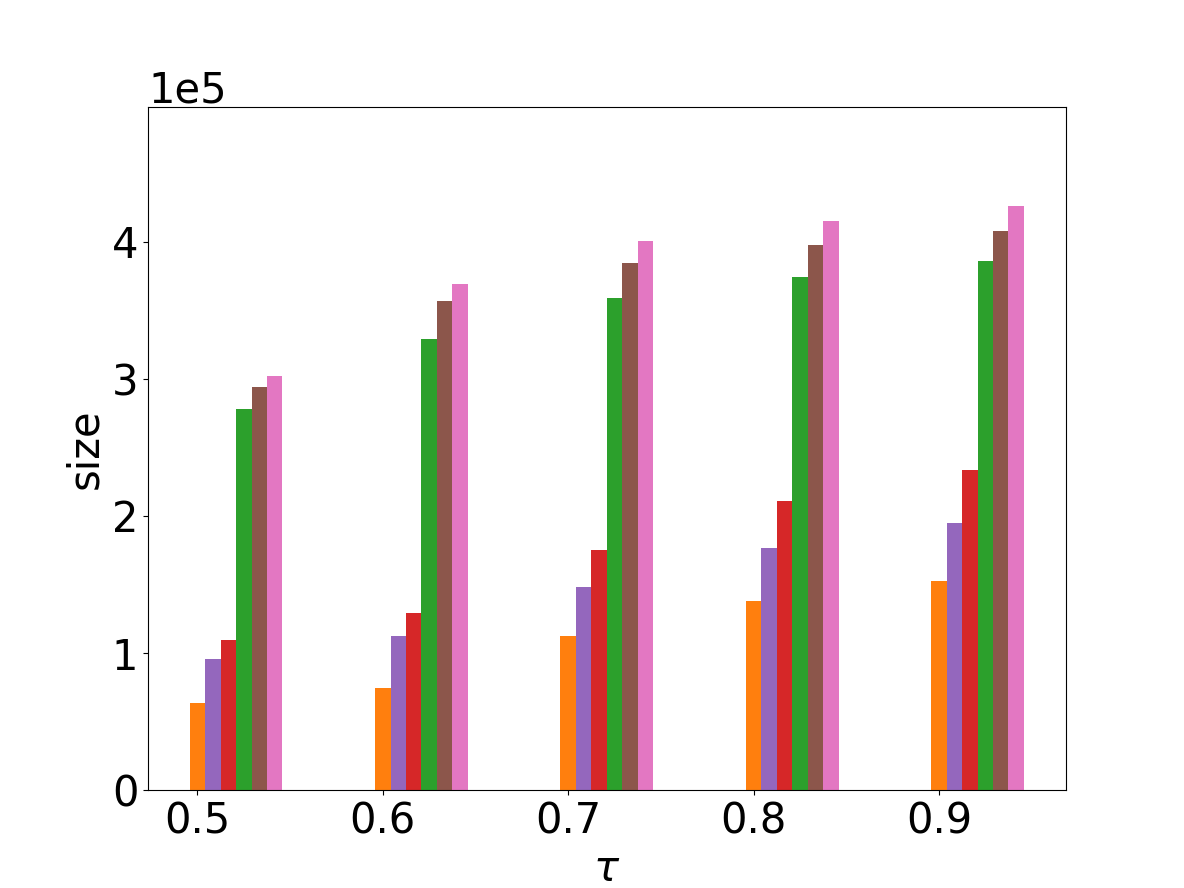}}
\subfloat[email-EuAll  \label{3d}]{\includegraphics[width=4.2cm, height=3.36cm]{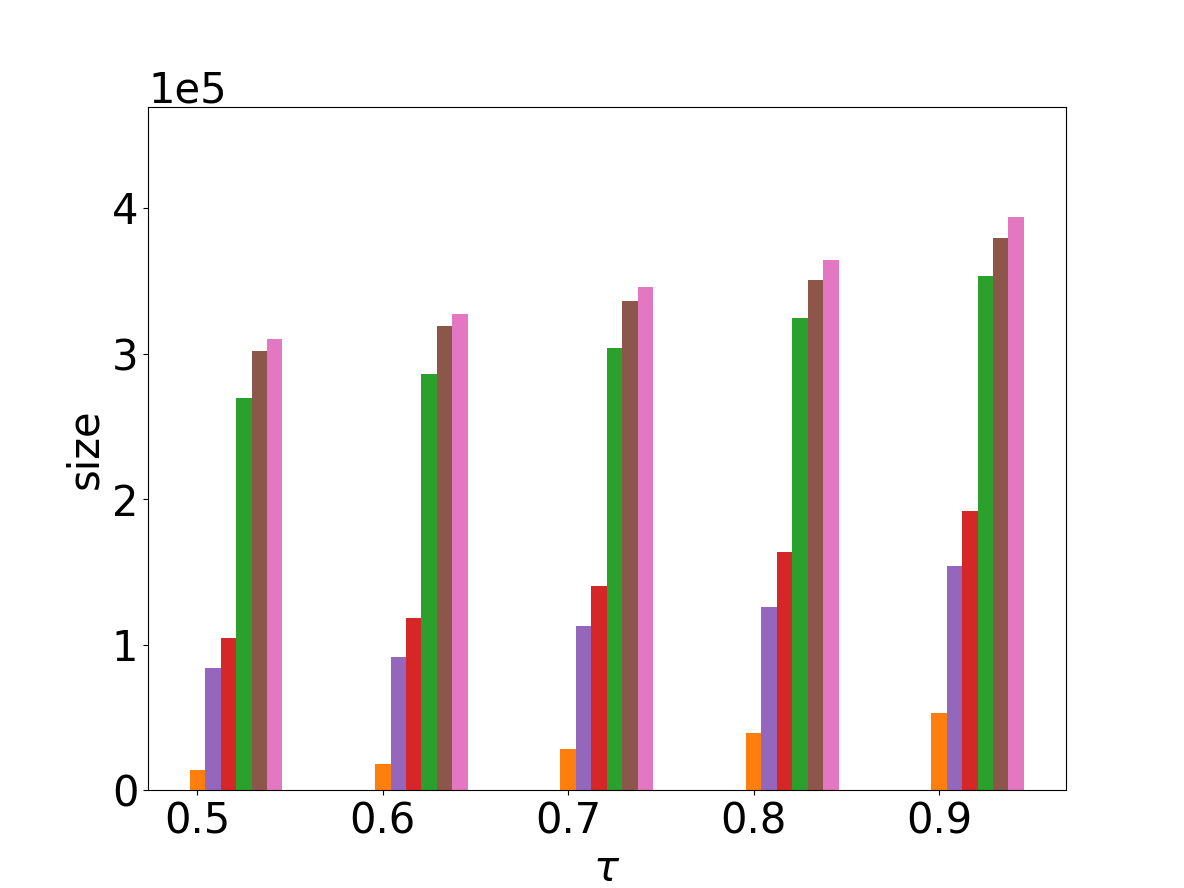}}	

\vspace{-11pt}

\subfloat[web-NotreDame \label{3f}]{\includegraphics[width=4.2cm, height=3.36cm]{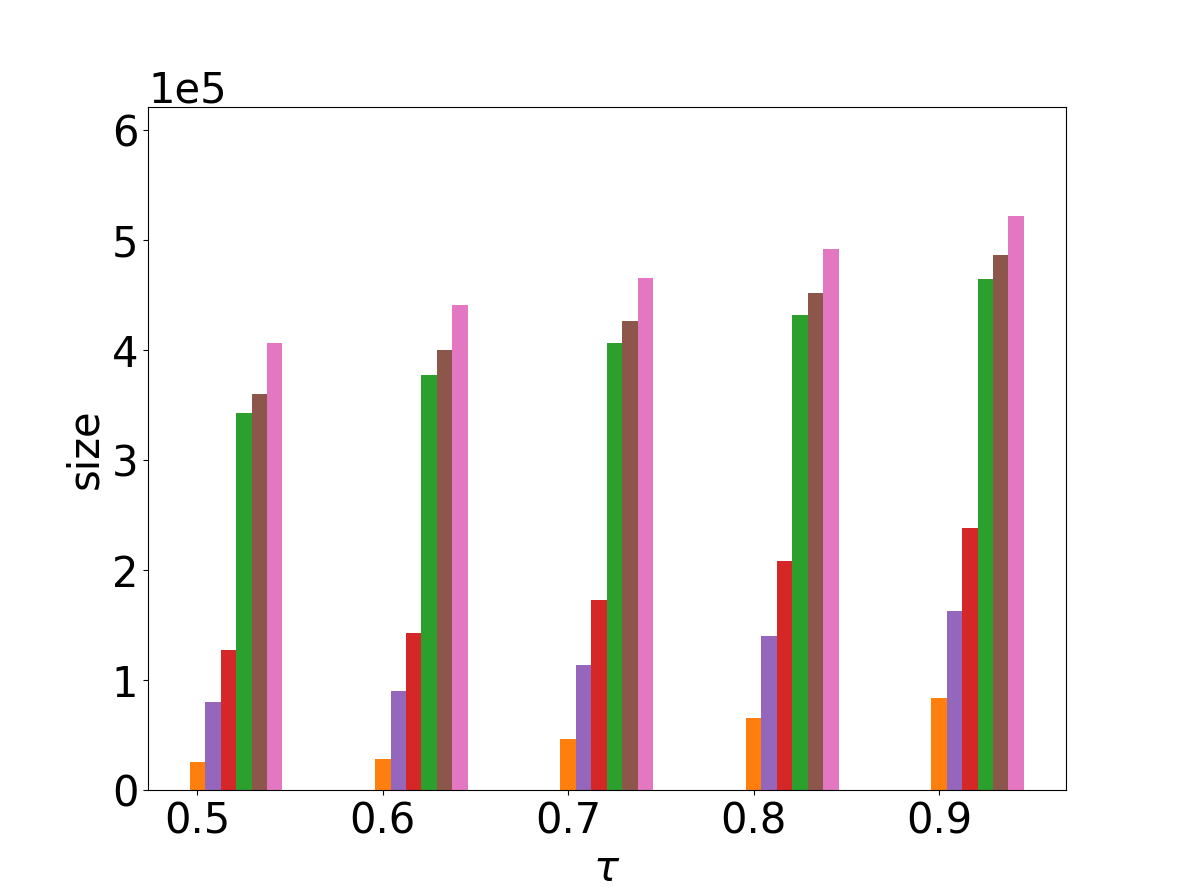}}
\subfloat[com-youtube \label{3g}]{\includegraphics[width=4.2cm, height=3.36cm]{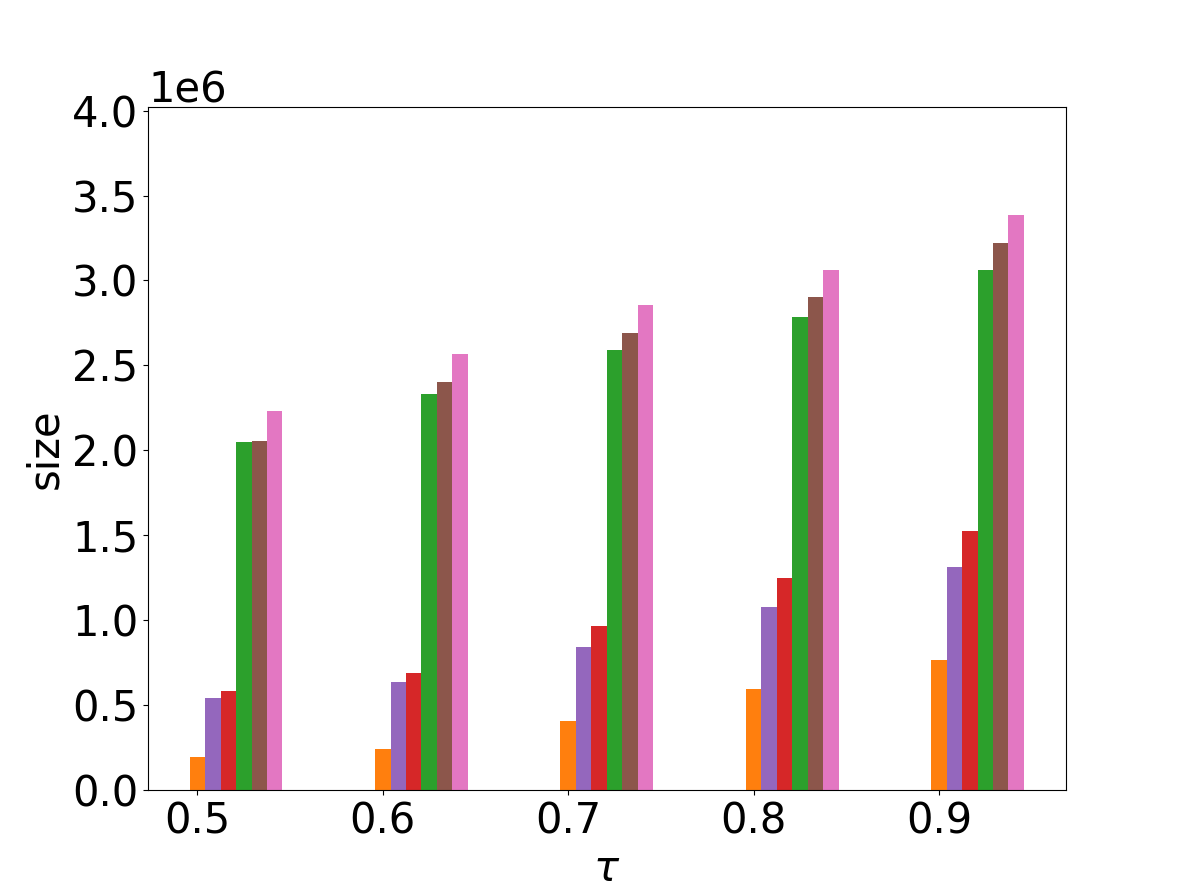}}
\subfloat[soc-pokec \label{3h}]{\includegraphics[width=4.2cm, height=3.36cm]{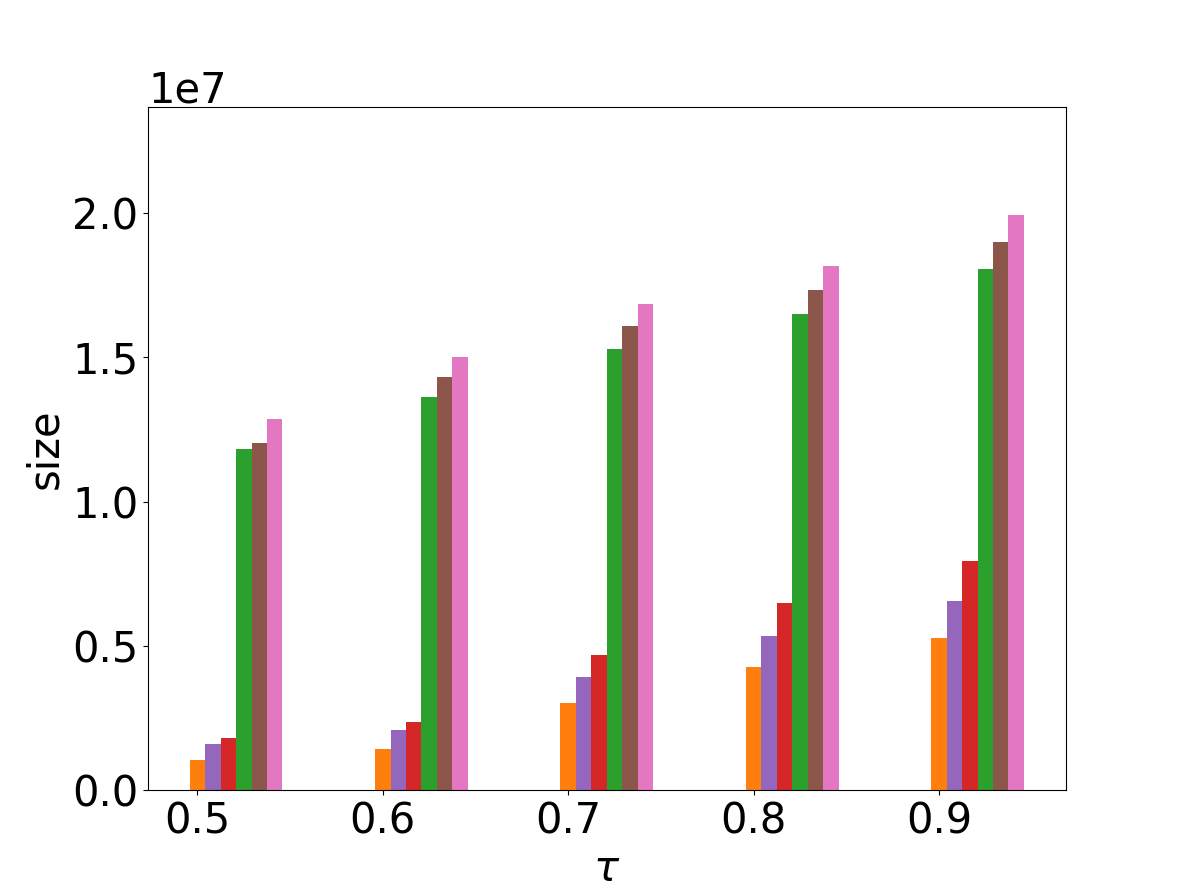}}
\subfloat[cit-Patents \label{3i}]{\includegraphics[width=4.2cm, height=3.36cm]{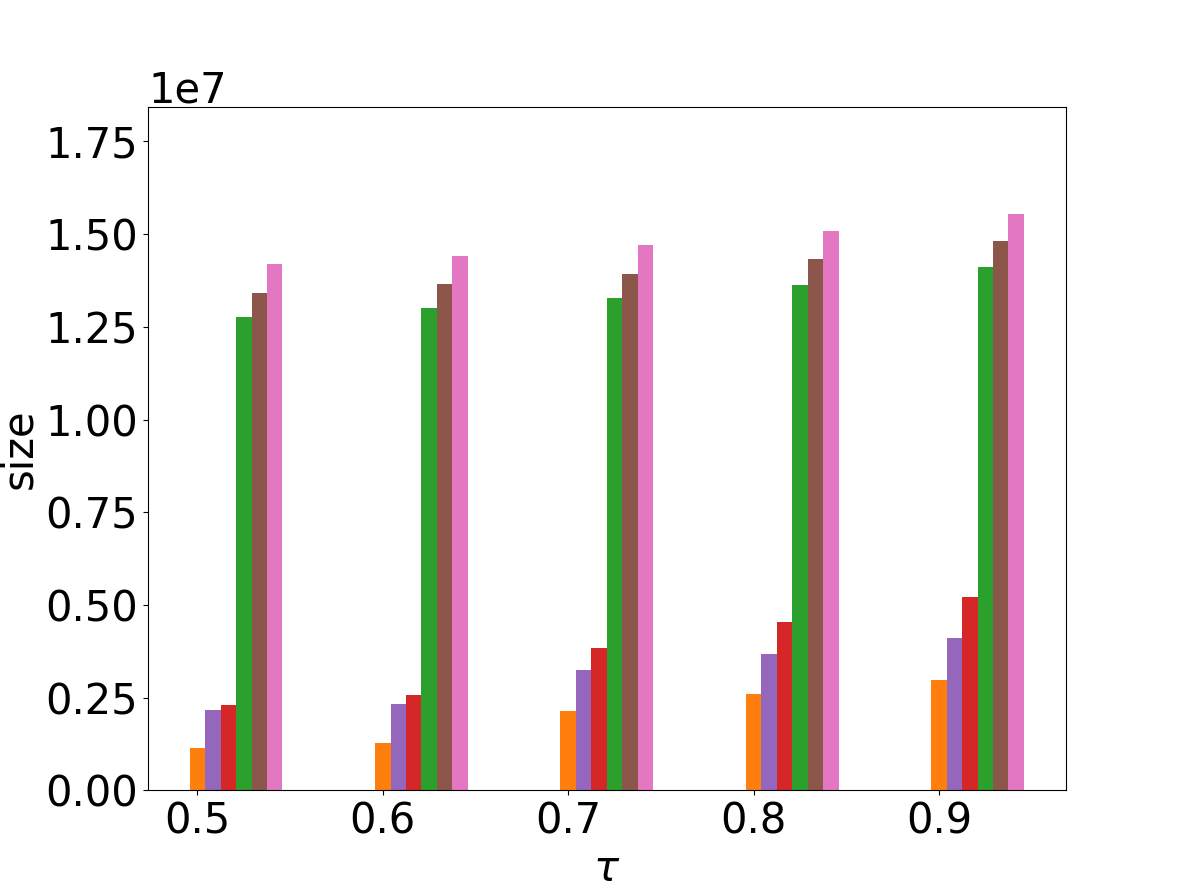}}


	\captionsetup{justification=centering}
	\caption{Summary size of $\tau$-R$^+$MCE and $\tau$-RMCE on eight datasets with different orders, $\tau$ varies from 0.5 to 0.9, T bound as default}\label{fig:sorder}
	\vspace{-18pt}
\end{figure*}

\begin{figure*}[ht]
	\centering

	\subfloat[soc-Epinions1 \label{3a}]{\includegraphics[width=4.2cm, height=3.36cm]{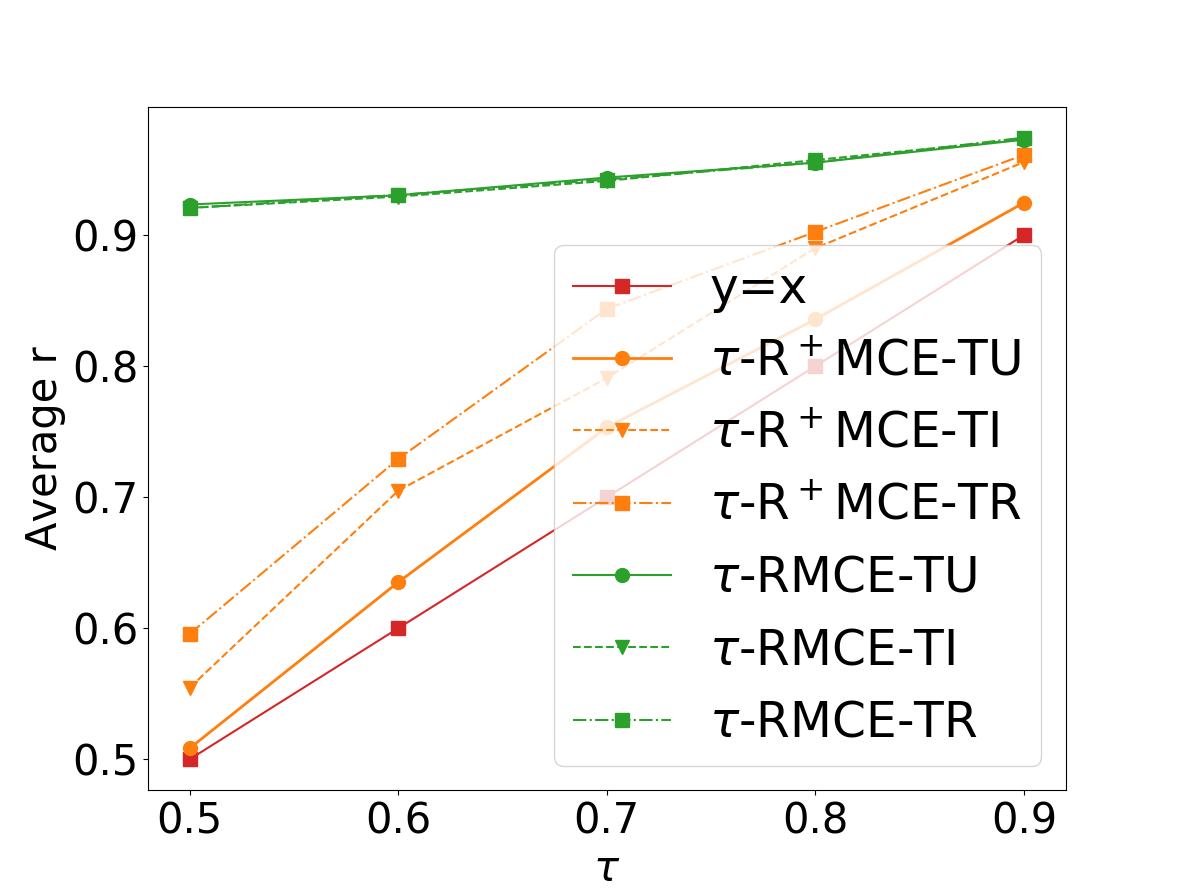}}
	\subfloat[loc-Gowalla \label{3b}]{\includegraphics[width=4.2cm, height=3.36cm]{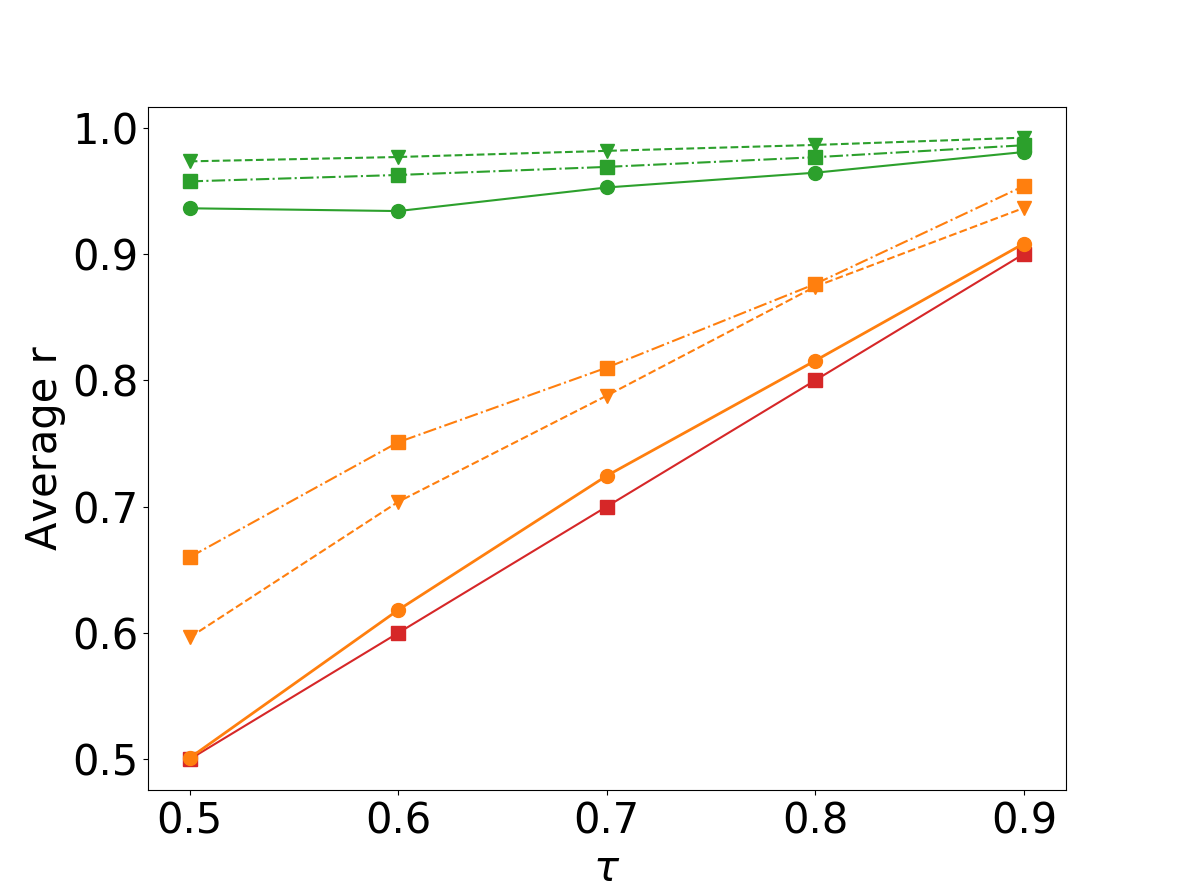}}
	\subfloat[amazon0302 \label{3c}]{\includegraphics[width=4.2cm, height=3.36cm]{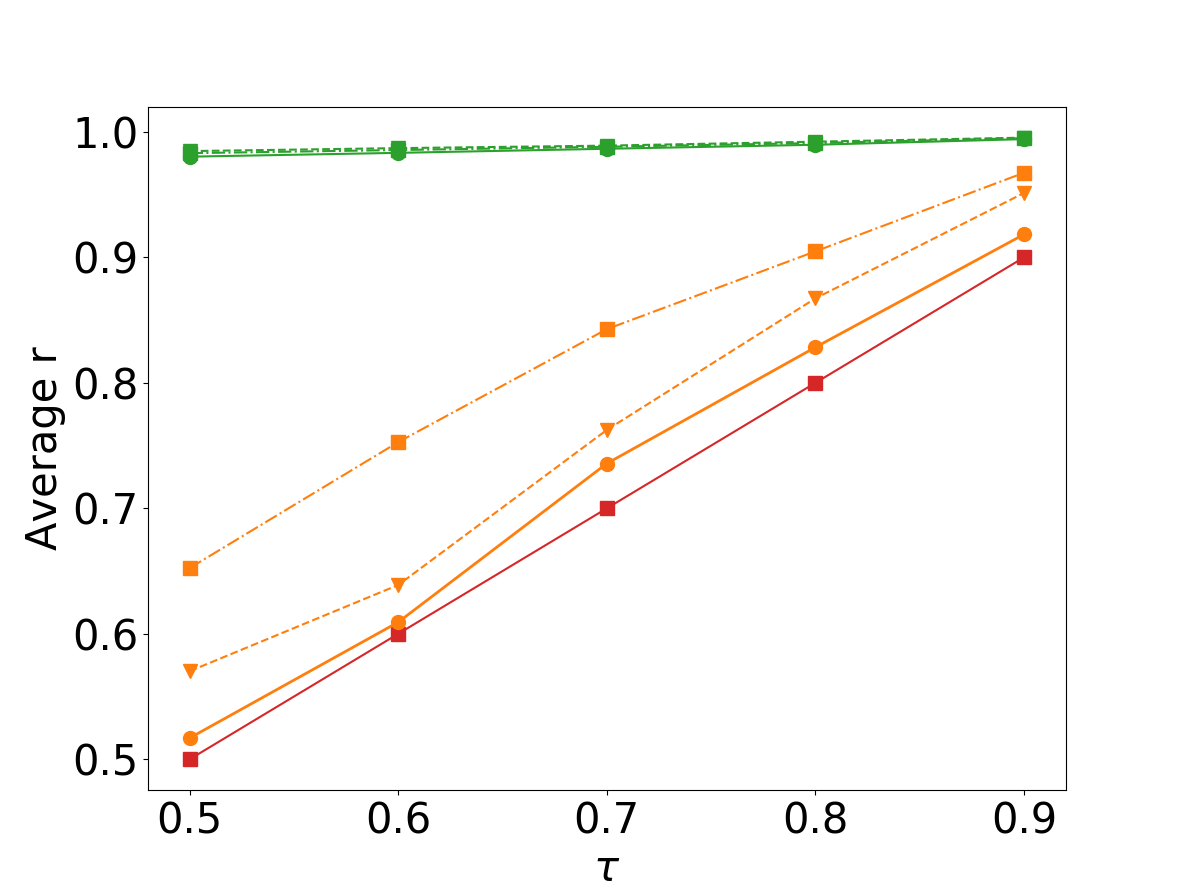}}
\subfloat[email-EuAll  \label{3d}]{\includegraphics[width=4.2cm, height=3.36cm]{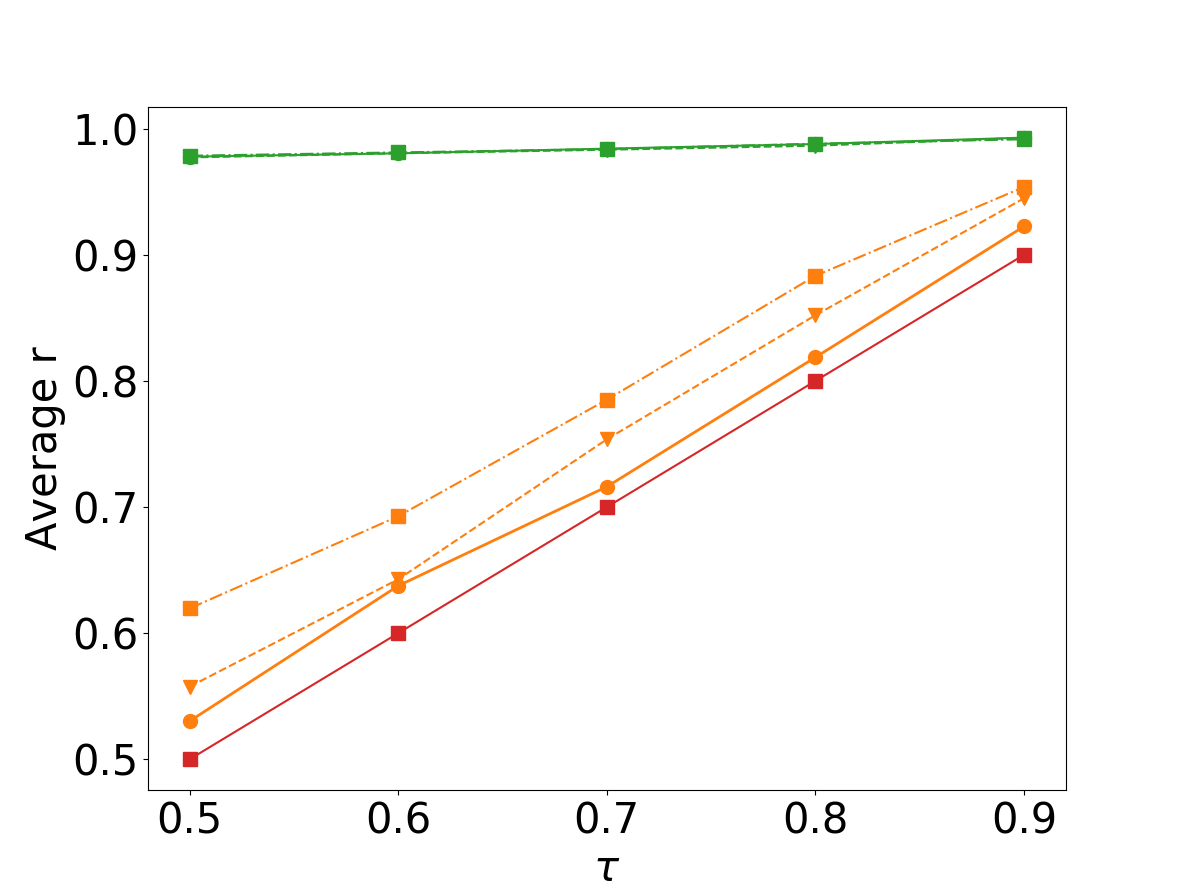}}	

\vspace{-11pt}

\subfloat[web-NotreDame \label{3f}]{\includegraphics[width=4.2cm, height=3.36cm]{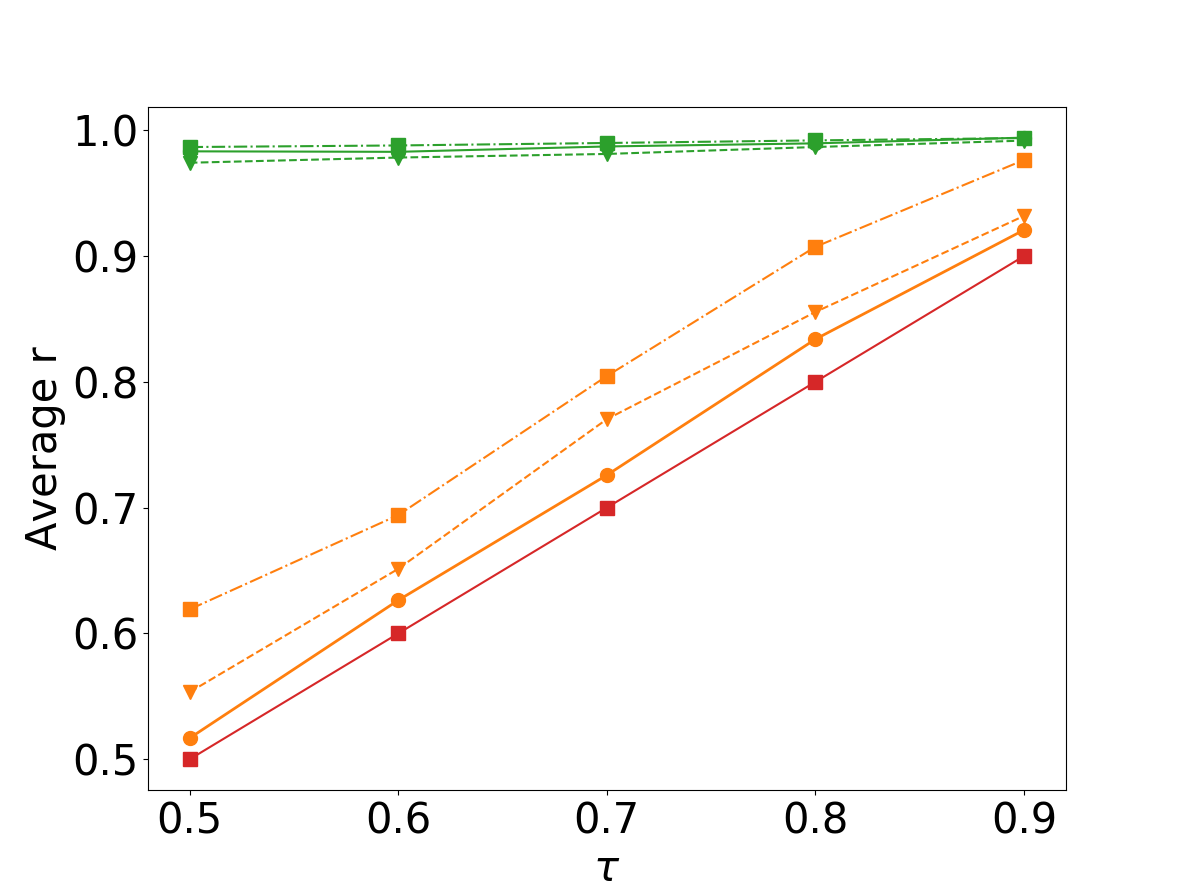}}
\subfloat[com-youtube \label{3g}]{\includegraphics[width=4.2cm, height=3.36cm]{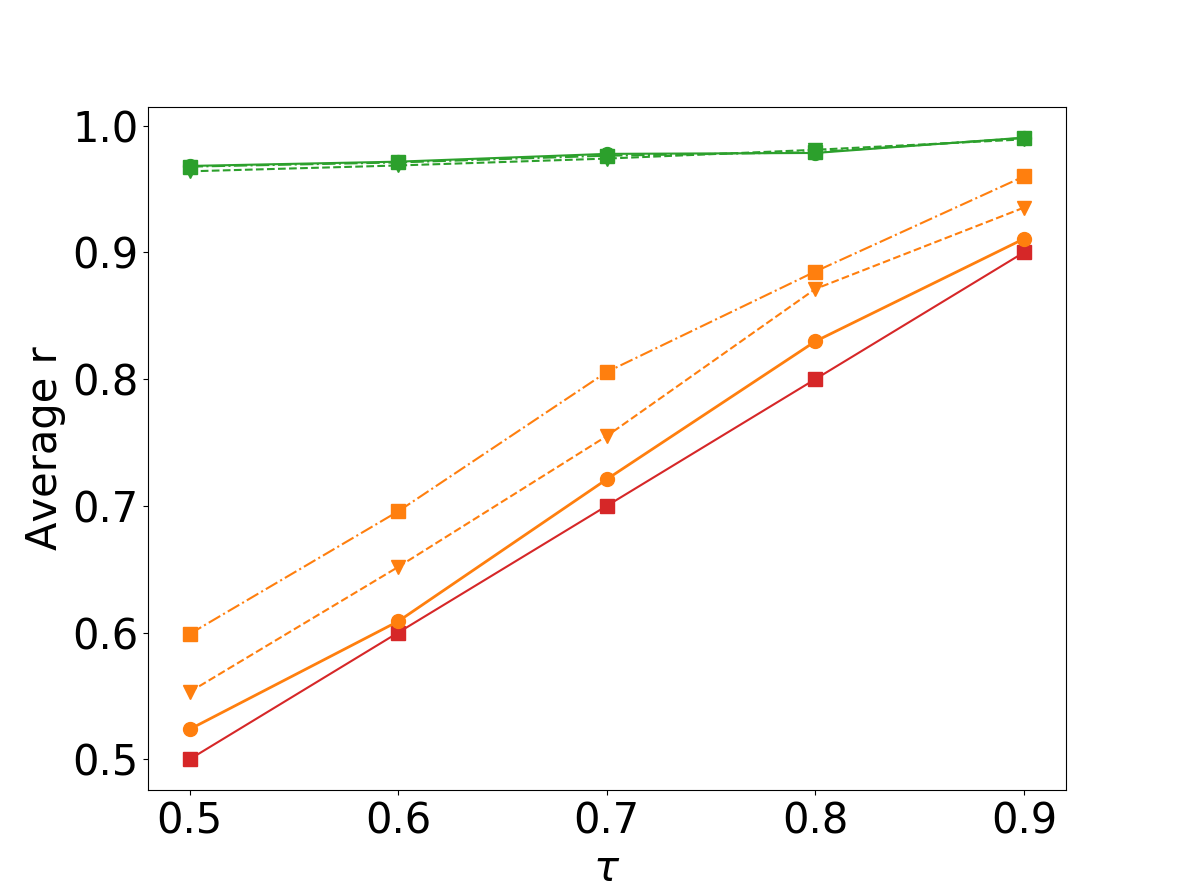}}
\subfloat[soc-pokec \label{3h}]{\includegraphics[width=4.2cm, height=3.36cm]{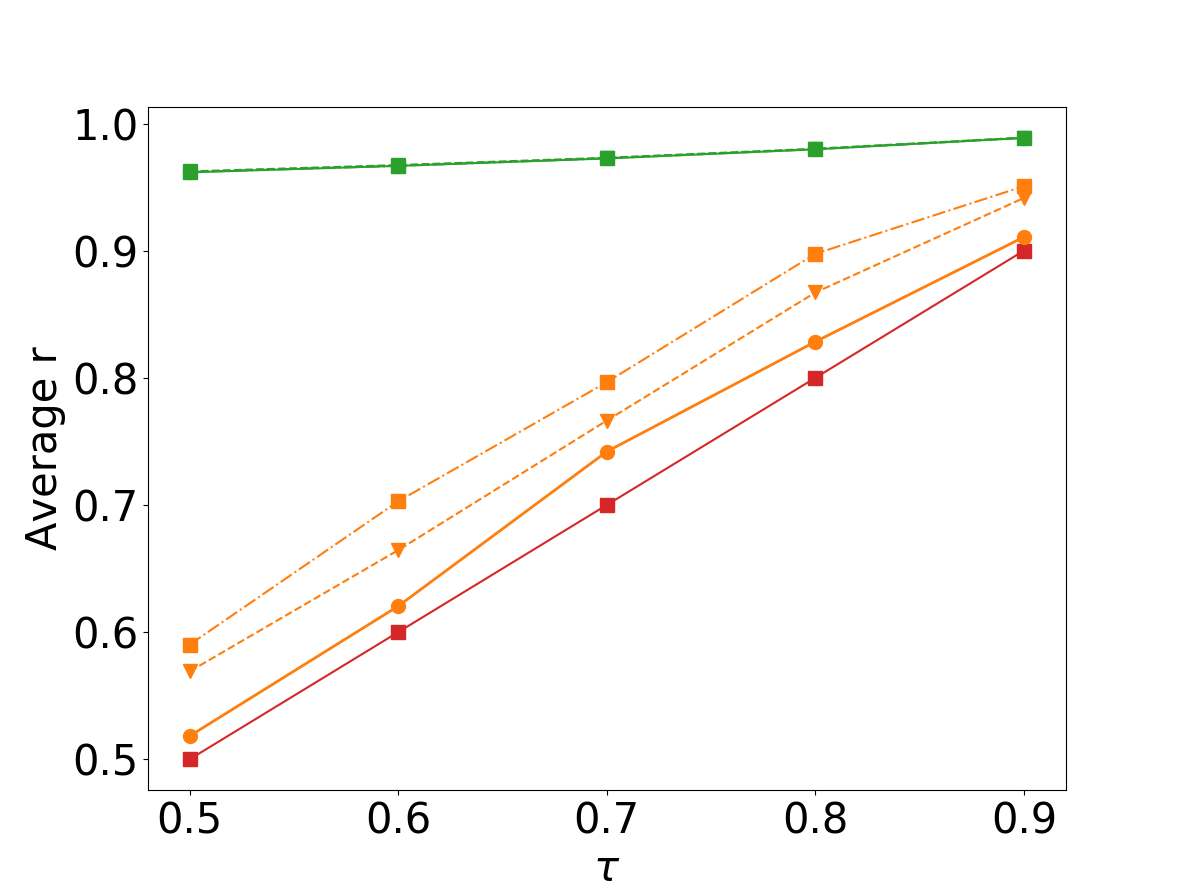}}
\subfloat[cit-Patents \label{3i}]{\includegraphics[width=4.2cm, height=3.36cm]{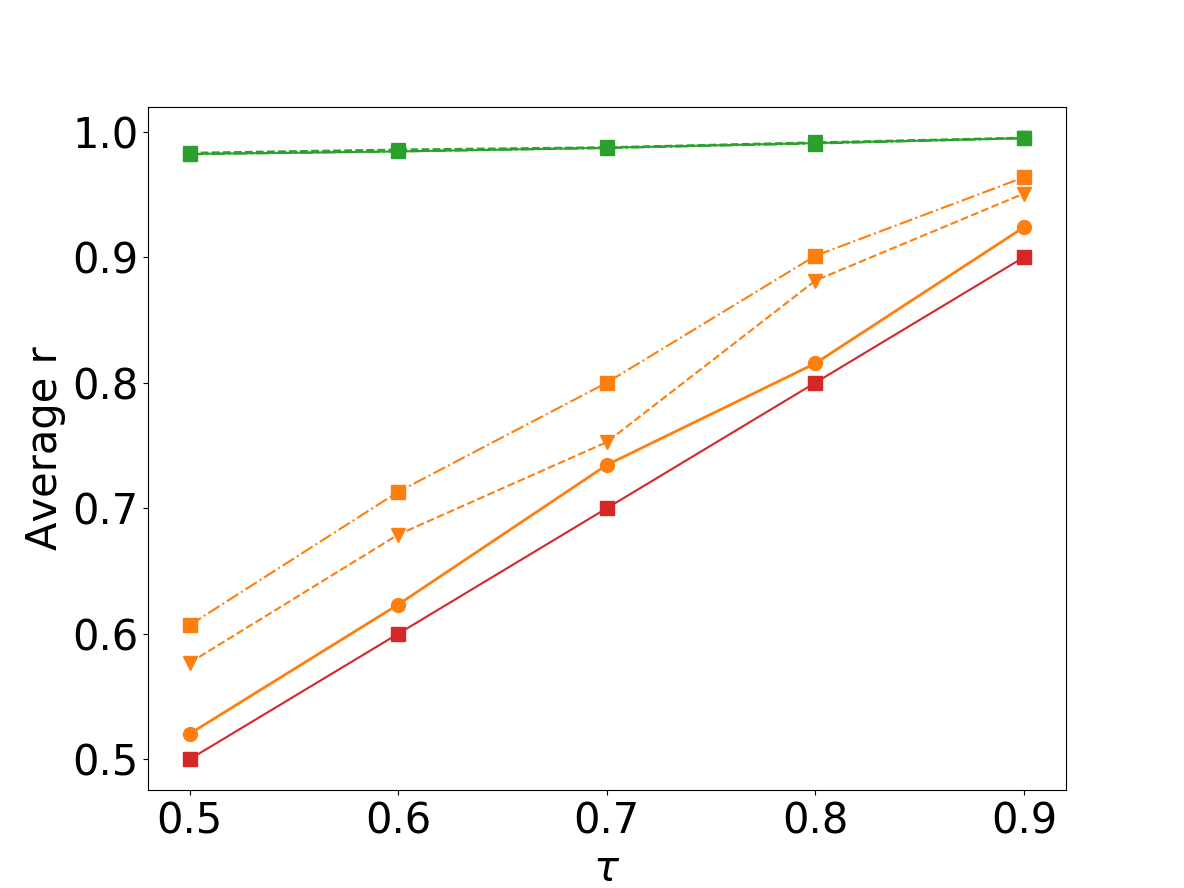}}


	\captionsetup{justification=centering}
	\caption{Average $r$ of $\tau$-R$^+$MCE and $\tau$-RMCE on eight datasets with different orders, $\tau$ varies from 0.5 to 0.9, T bound as default}\label{fig:OrderTau}
	\vspace{-10pt}
\end{figure*}

\subsubsection{Summary size}
\label{sec:ss}

We implemented $\tau$-RMCE and $\tau$-R$^+$MCE with the best configurations using truss bound and truss order, which are denoted by $\tau$-RMCE-TU and $\tau$-R$^+$MCE-TU respectively. 
The results are shown in Fig.~\ref{fig:stotal}. 
 We see that $\tau$-R$^+$MCE-TU consistently outperforms $\tau$-RMCE-TU on all datasets with all the $\tau$ values. 


When $\tau = 0.9$, $\tau$-R$^+$MCE-TU significantly reduces more than $50\%$ output cliques vs. $\tau$-RMCE-TU on all datasets,  
three of  which (Fig.~\ref{1g},~\ref{1h},~\ref{1i})  achieve $70\%$, 
and two of which (Fig.~\ref{1d},~\ref{1f}) even achieve more than $80\%$. 
When $\tau$ decreases, the difference is more dramatic,
i.e., the percentage of reduction monotonically increases. 
At $\tau = 0.5$, the reduction for all datasets is more than $70\%$, 
two of which (Fig.~\ref{1a},~\ref{1b}) reach $85\%$, 
and five of which (Fig.~\ref{1d},~\ref{1f},~\ref{1g},~\ref{1h},~\ref{1i}) reach  $90\%$  to reduce the summary size by more than one order of magnitude.
This monotonic increasing trend implies that the performance of $\tau$-R$^+$MCE-TU performs more significantly than $\tau$-RMCE-TU along with $\tau$ decreasing. 
 This is because for a small threshold, $\tau$-RMCE includes more unnecessary cliques whose visibilities are greater than $\tau$ into the summary with high probabilities,  
which confirms our intuition in Section \ref{sec:new} that  $s_{opt}(r)$ should be set to  $0$ for $r\in [\tau, 1]$. 
 Another reason is that for a clique $C$ whose visibility is close to $0$, $s(r)$ forces $\tau$-RMCE to output $C$ immediately, while $\tau$-R$^+$MCE considers the potential that $C$ may be covered by some future cliques, thus more carefully outputs such a clique with a proper probability. 
To show the robustness of our proposed method, we tested the algorithms on eight real-world datasets with different scales. 
The results show that $\tau$-R$^+$MCE achieves relatively better performance on large graphs. 
For the convenience of our discussion, now we focus on the results at $\tau = 0.5$. 
We see that all the five datasets (Fig.~\ref{1d},~\ref{1f},~\ref{1g},~\ref{1h},~\ref{1i}) that have more than $90\%$ reductions  are the top five largest graphs among all eight datasets.
This implies that our proposed method are more capable to handle contemporary large scale graphs than the state-of-the-art approach.

 \subsubsection{Effect of vertex orders}
 \label{sc:6.1.2}
To see  to what extent the performance of $\tau$-R$^+$MCE can be further improved by employing a vertex order with strong locality, we implemented $\tau$-R$^+$MCE and $\tau$-RMCE with three types of orders: random order (R), degeneracy order (I) and truss order (U).   
The default bound was set as truss bound (T). 
The results are shown in Fig.~\ref{fig:sorder}. 

Fig.~\ref{fig:sorder} shows that truss order consistently outperforms degeneracy order  and random order for both $\tau$-R$^+$MCE and $\tau$-RMCE, 
while generally  degeneracy order is superior to random order except one exception 
($\tau$-R$^+$MCE on {\it loc-Gowalla} at $\tau = 0.8$). 
Now we focus our discussion on $\tau$-R$^+$MCE. 
Generally $\tau$-R$^+$MCE-TI 
is superior to
 $\tau$-R$^+$MCE-TR for all $\tau$ values on 7 out of 8 datasets (except {\it web-NotreDame}), 
while the reduction percentage  ($10\%\sim  20\%$) is not significant. 
However, the reduction for $\tau$-R$^+$MCE-TU vs. $\tau$-R$^+$MCE-TI is much dramatic: 
at $\tau = 0.5$, the reduction percentage varies from $33\%$ ({\it soc-pokec}) to $83\%$ ({\it com-youtube}), and 5 out of 8 achieve  more than $50\%$ (except {\it soc-Epinions1, amazon0103, soc-pokec}).

{To further show how different orders help our approach get close to the optimal, we calculate the actual average $r$ for each dataset w.r.t different $\tau$ values. 
 We take every maximal clique generated by MCE into consideration. If a search subtree should be pruned at a certain stage, we keep searching the subtree to calculate $r$ of the discarded cliques while disallowing these cliques to be added into  summary $\mathcal{S}$.}  
{Results are shown in Fig.~\ref{fig:OrderTau}. 
This set of experiments show that with $\tau$ decreasing from $0.9$ to $0.5$, 
the average visibility $r$ of $\tau$-RMCE changes very mildly and keeps no less than $0.9$ for all the datasets, 
regardless which vertex order is implemented. 
For datasets except {\it loc-Gowalla}, the  lines of $\tau$-RMCE overlap with each other, which implies that different types of vertex orders may have  limited impact on the state-of-the-art approach.} 
{However, 
the three  lines of $\tau$-R$^+$MCE drop sharply from around $0.95$ to below $0.6$ for every dataset. They  are thus much closer to the optimal reference line (represented by the  user-specified threshold in red). 
$\tau$-R$^+$MCE-TR shows the worst performance among the tested three vertex orders,  of which the line drops from around $0.95$ to $0.6$ for all the datasets. 
 $\tau$-R$^+$MCE-TI shows better effectiveness (which drops from around $0.95$ to $0.55$) than $\tau$-R$^+$MCE-TR,  
and $\tau$-R$^+$MCE-TU shows the best performance (which drops from $0.93$ to $0.52$). 
We see that the setting of truss order U successfully helps $\tau$-R$^+$MCE step forward to the optimal reference line, which leaves a narrow gap between them.} 

Fig.~\ref{fig:sorder} and Fig.~\ref{fig:OrderTau} confirm our assumption that the effectiveness of $\tau$-R$^+$MCE can be further improved by properly reordering vertices. The newly designed truss order significantly outperforms the degeneracy order by a large margin due to strong locality provided by the cohesiveness of $k$-truss.

\subsubsection{Effect of bounds}
\label{sc:6.1.3}
To see  to what extent the effectiveness of $\tau$-R$^+$MCE can be further improved by employing a tight bound, we implemented $\tau$-R$^+$MCE and $\tau$-RMCE with three different bounds: H bound (H), core bound (C) and truss bound (T).  
Truss order (U) was set to be the default.  
The results are shown in Fig.~\ref{fig:sbound}. 

We see that for both $\tau$-R$^+$MCE and $\tau$-RMCE, the performance of effectiveness consistently follows this order: T outperforms C, and C outperforms H. 
When we focus on $\tau$-R$^+$MCE, results show that $\tau$-R$^+$MCE-CU reduces the summary size vs. $\tau$-R$^+$MCE-HU by less than $10\%$ for all $\tau$ values on all datasets. 
However, the reduction between $\tau$-R$^+$MCE-TU and $\tau$-R$^+$MCE-CU ranges from $21\%$ to $43\%$. At $\tau = 0.5$, the percentage achieves more than $30\%$ for four out of eight datasets (except {\it email-EuAll, web-NotreDame, com-youtube, cit-Patents}). 

{
Fig.~\ref{fig:BoundTau} reports the average of  $r$ for $\tau$-RMCE and $\tau$-R$^+$MCE with different bounds. The results are similar to Fig.~\ref{fig:OrderTau}. 
We see that the three lines of $\tau$-RMCE stay much closer with each other for all datasets, 
and the line of $\tau$-RMCE-TU keeps slightly lower than the other two lines. 
This implies that various bounds can provide only limited improvements to the existing sampling approach. 
Whereas for $\tau$-R$^+$MCE, we see that a better bound helps the three lines move towards the required threshold by a significant margin. 
The gap between $\tau$-R$^+$MCE-TU and $\tau$-R$^+$MCE-CU is much more dramatic than that between $\tau$-R$^+$MCE-CU and $\tau$-R$^+$MCE-HU, which confirms the superiority of the newly applied truss bound T.}  

Fig.~\ref{fig:sbound} and Fig.~\ref{fig:BoundTau} confirm the fact that the effectiveness of $\tau$-R$^+$MCE can be further improved by employing tight bounds. 
Although the extent of benefit brought by good bounds is inferior to that brought by vertex orders with strong locality, 
our proposed truss bound still surpasses the state-of-the-art core bound by a significant margin.

%

\begin{figure*}[ht]
\vspace{-12pt}
	\centering

	\subfloat[soc-Epinions1 \label{3a}]{\includegraphics[width=4.2cm, height=3.36cm]{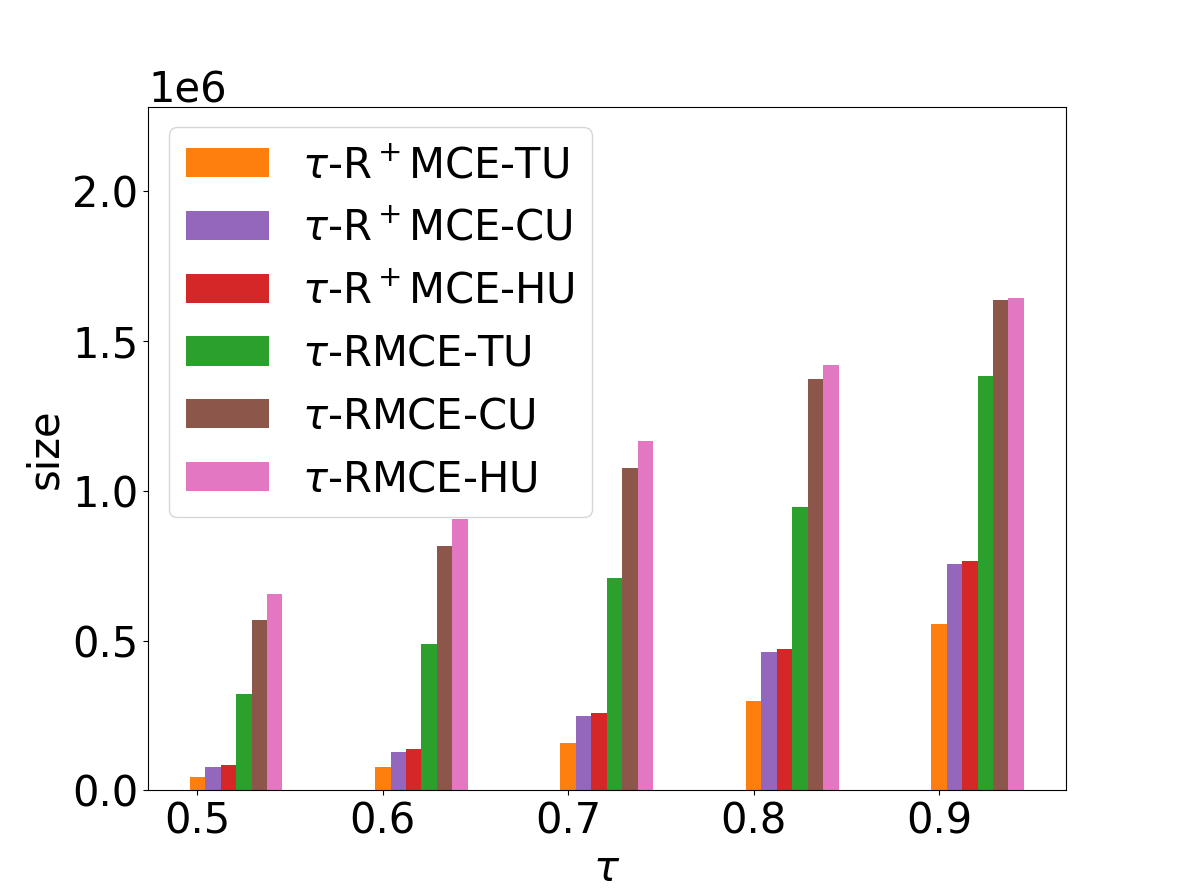}}
	\subfloat[loc-Gowalla \label{3b}]{\includegraphics[width=4.2cm, height=3.36cm]{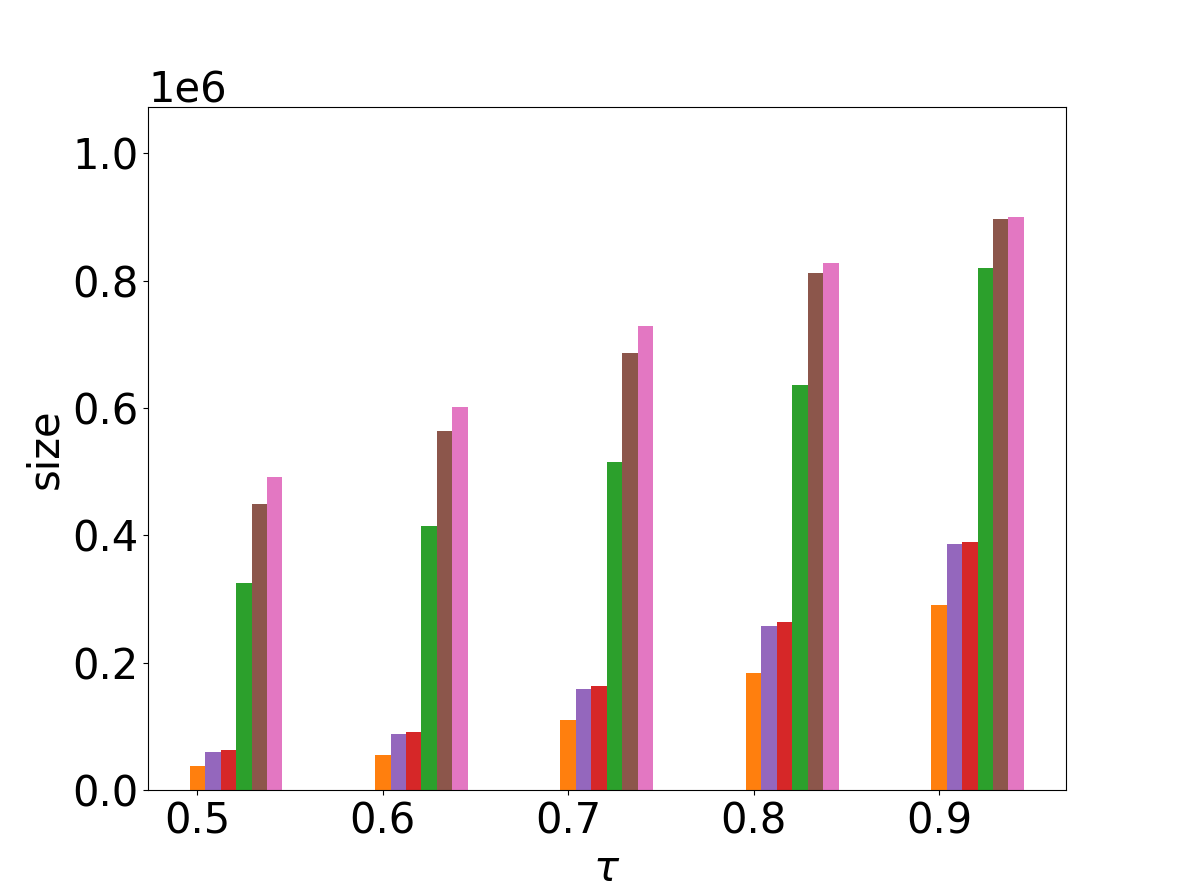}}
	\subfloat[amazon0302 \label{3c}]{\includegraphics[width=4.2cm, height=3.36cm]{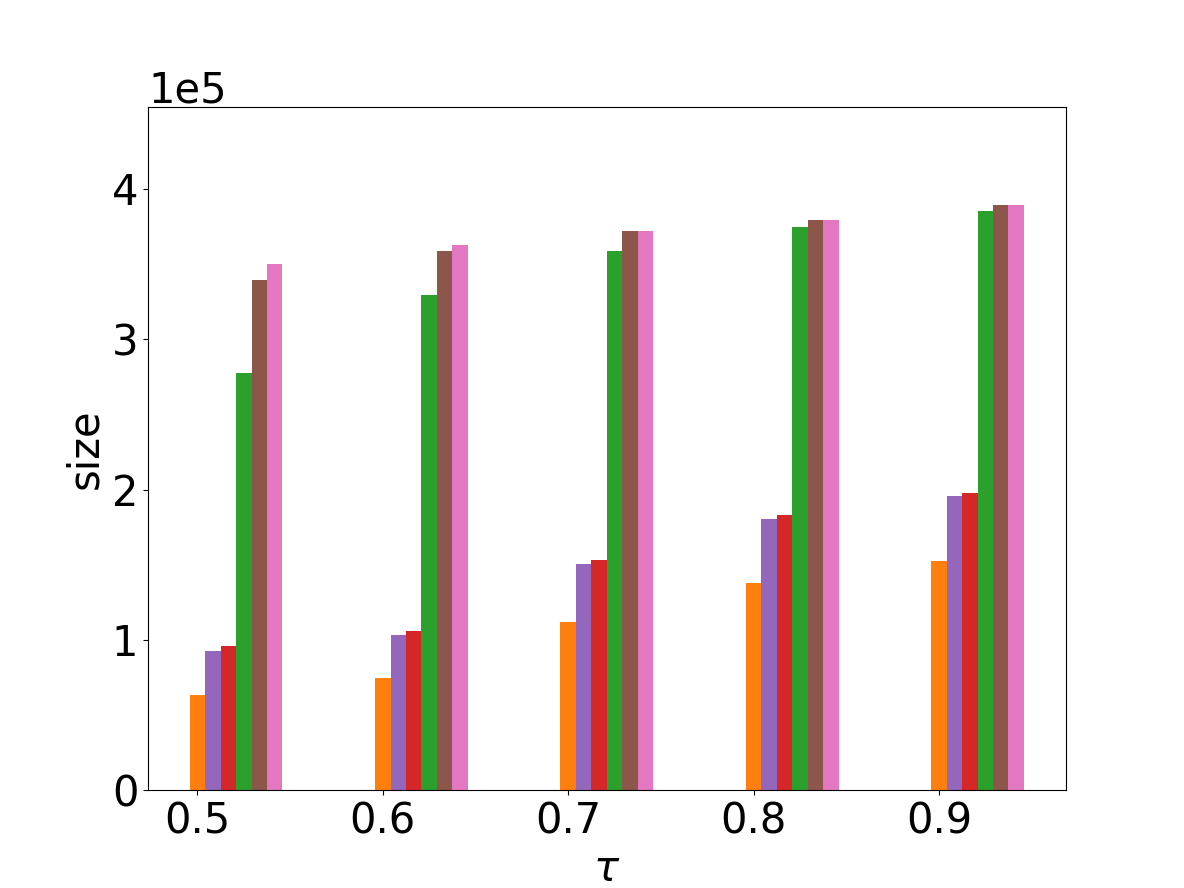}}
\subfloat[email-EuAll  \label{3d}]{\includegraphics[width=4.2cm, height=3.36cm]{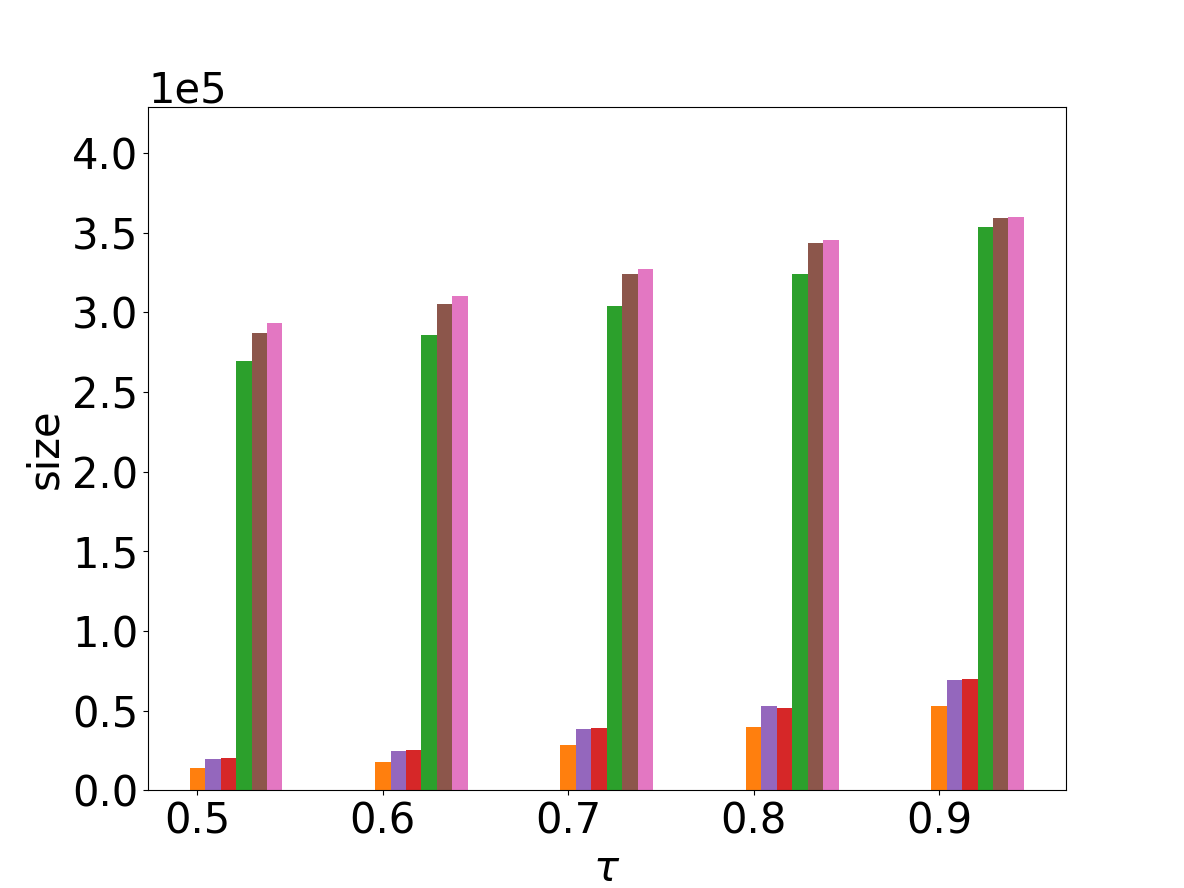}}	
\vspace{-11pt}

\subfloat[web-NotreDame \label{3f}]{\includegraphics[width=4.2cm, height=3.36cm]{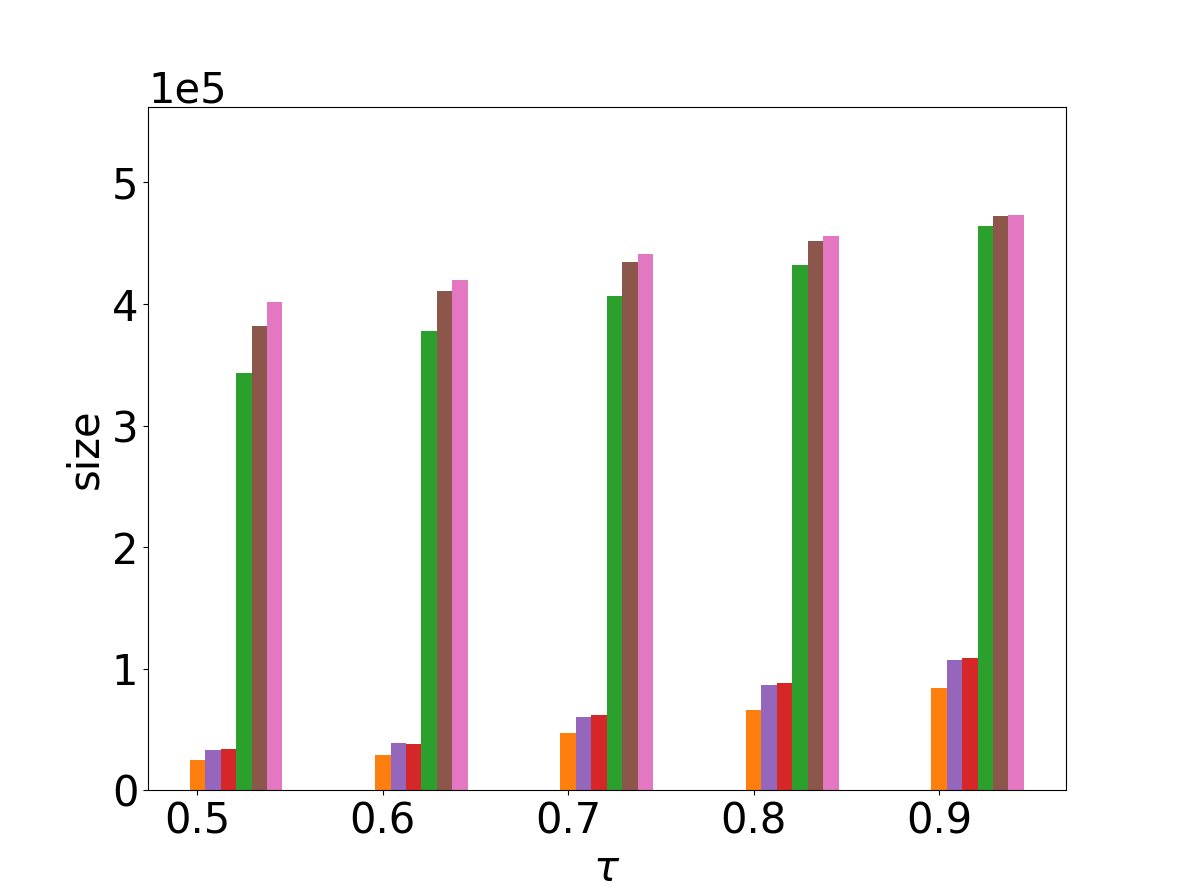}}
\subfloat[com-youtube \label{3g}]{\includegraphics[width=4.2cm, height=3.36cm]{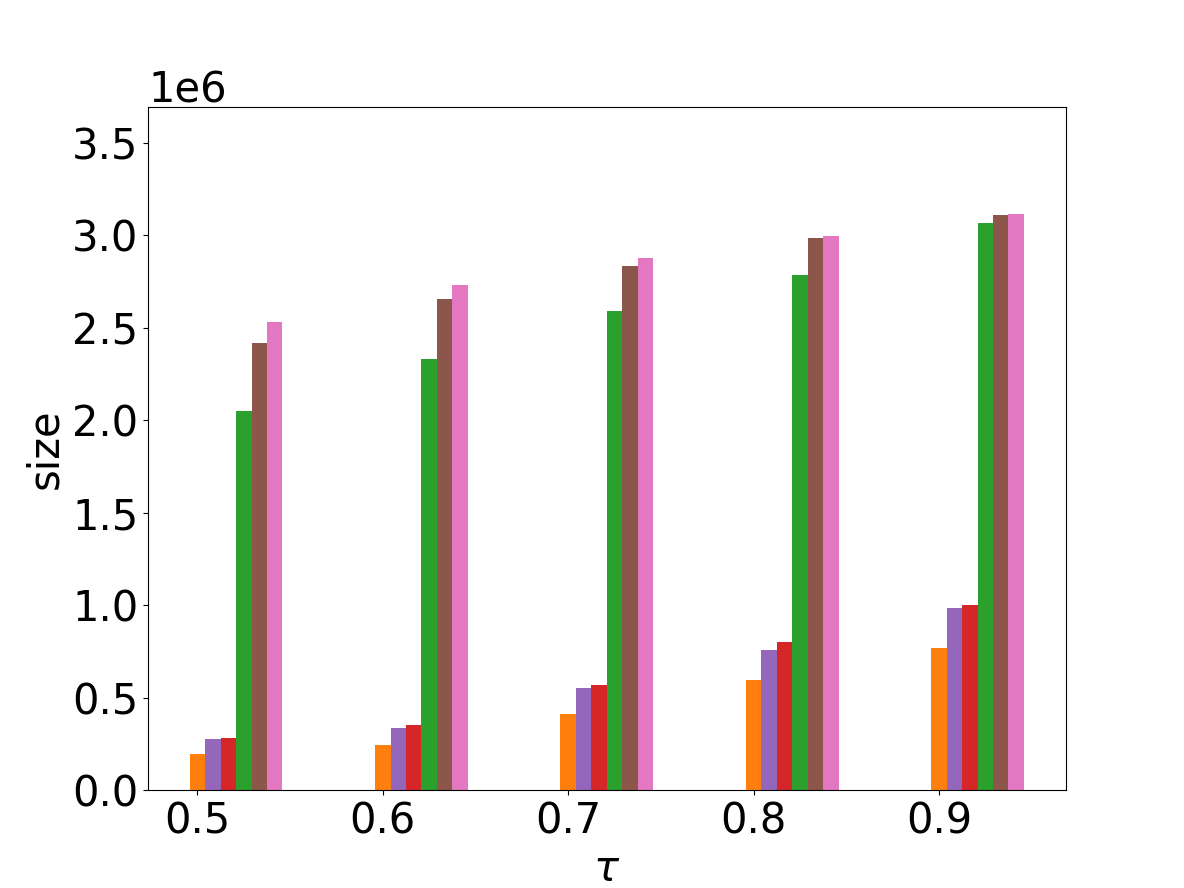}}
\subfloat[soc-pokec \label{3h}]{\includegraphics[width=4.2cm, height=3.36cm]{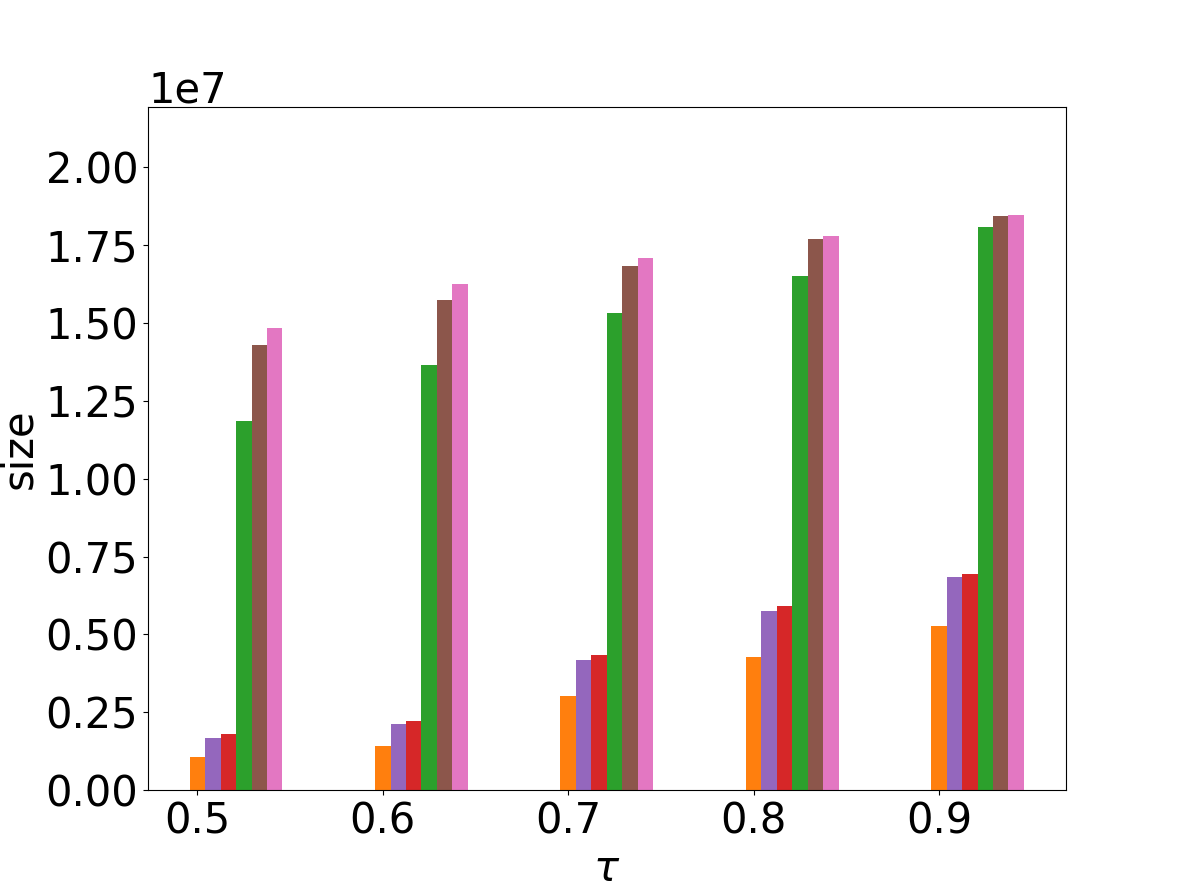}}
\subfloat[cit-Patents \label{3i}]{\includegraphics[width=4.2cm, height=3.36cm]{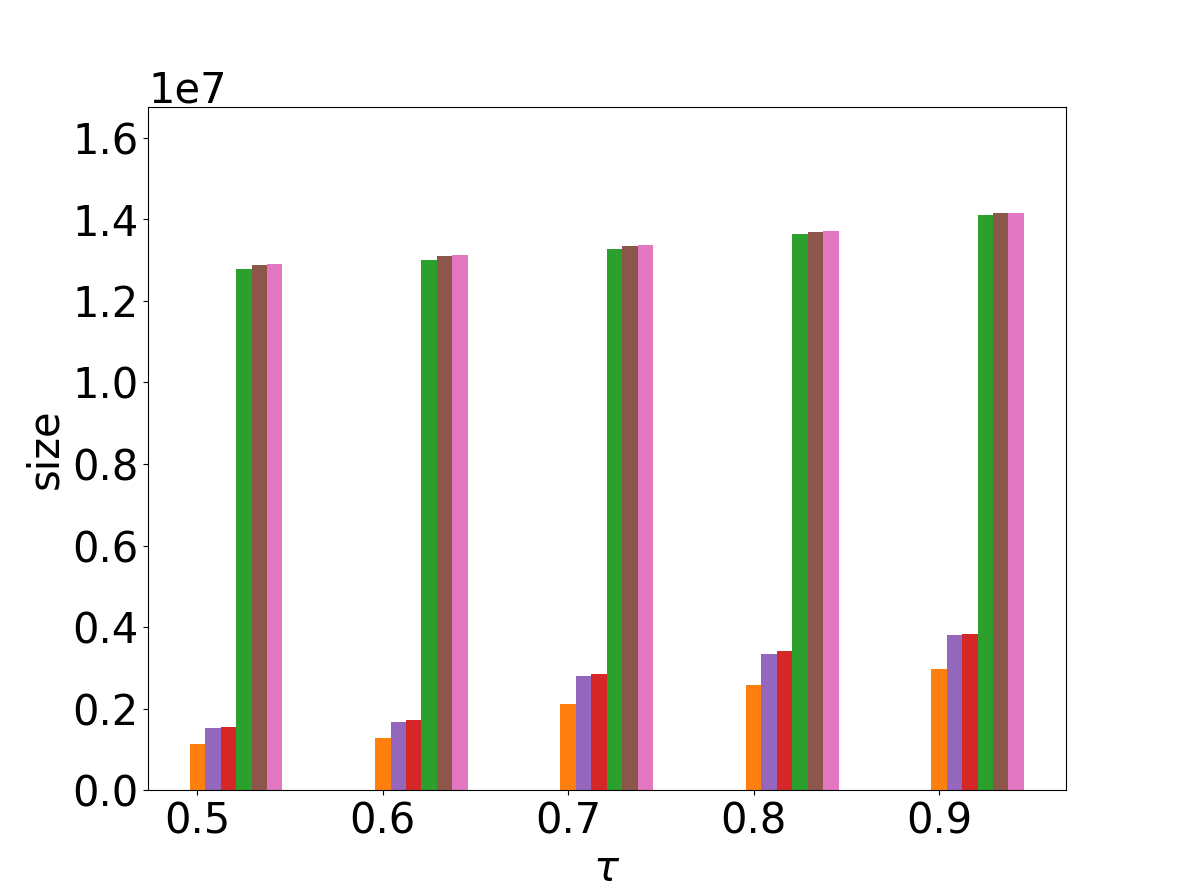}}


	\captionsetup{justification=centering}
	\caption{Summary size of $\tau$-R$^+$MCE and $\tau$-RMCE on eight datasets with different bounds, $\tau$ varies from 0.5 to 0.9, U order as default}\label{fig:sbound}
	\vspace{-10pt}
\end{figure*}

\begin{figure*}[ht]
\vspace{-10pt}
	\centering

	\subfloat[soc-Epinions1 \label{3a}]{\includegraphics[width=4.2cm, height=3.36cm]{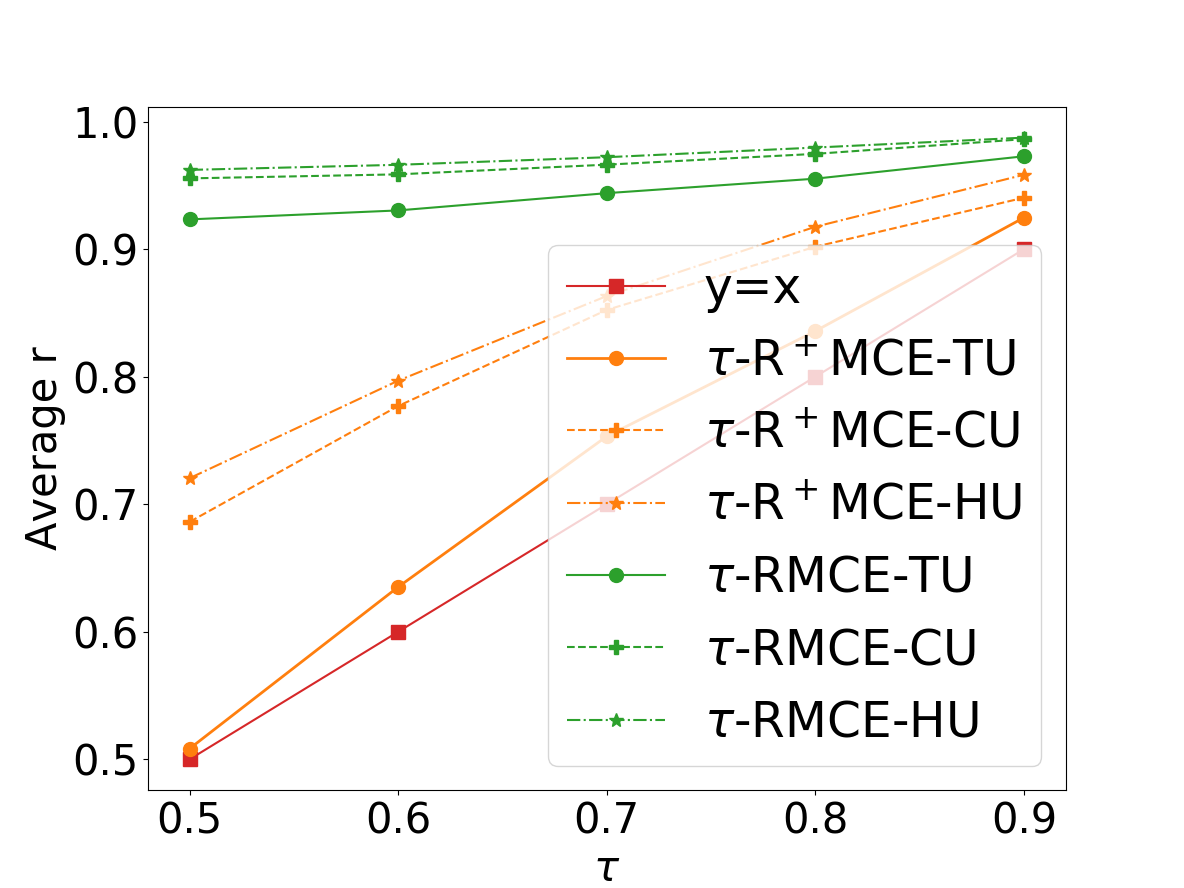}}
	\subfloat[loc-Gowalla \label{3b}]{\includegraphics[width=4.2cm, height=3.36cm]{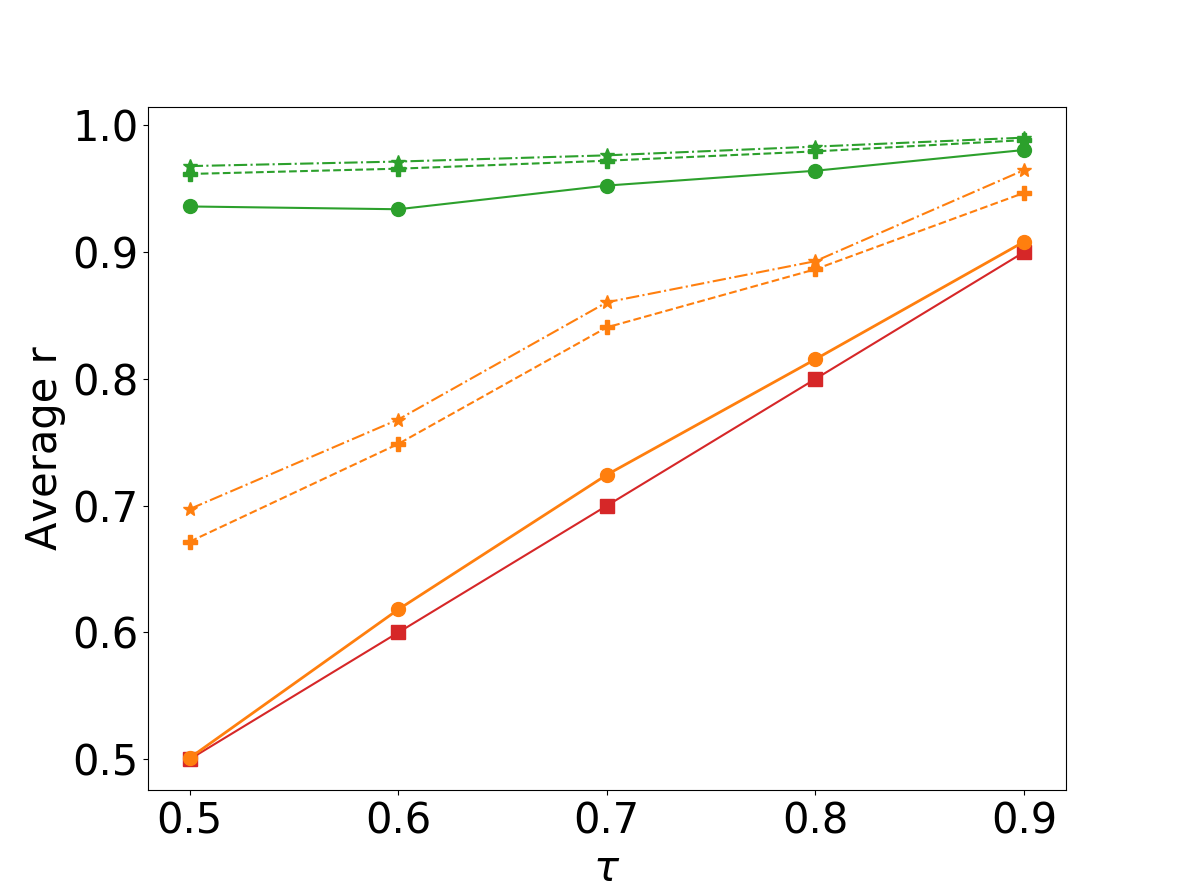}}
	\subfloat[amazon0302 \label{3c}]{\includegraphics[width=4.2cm, height=3.36cm]{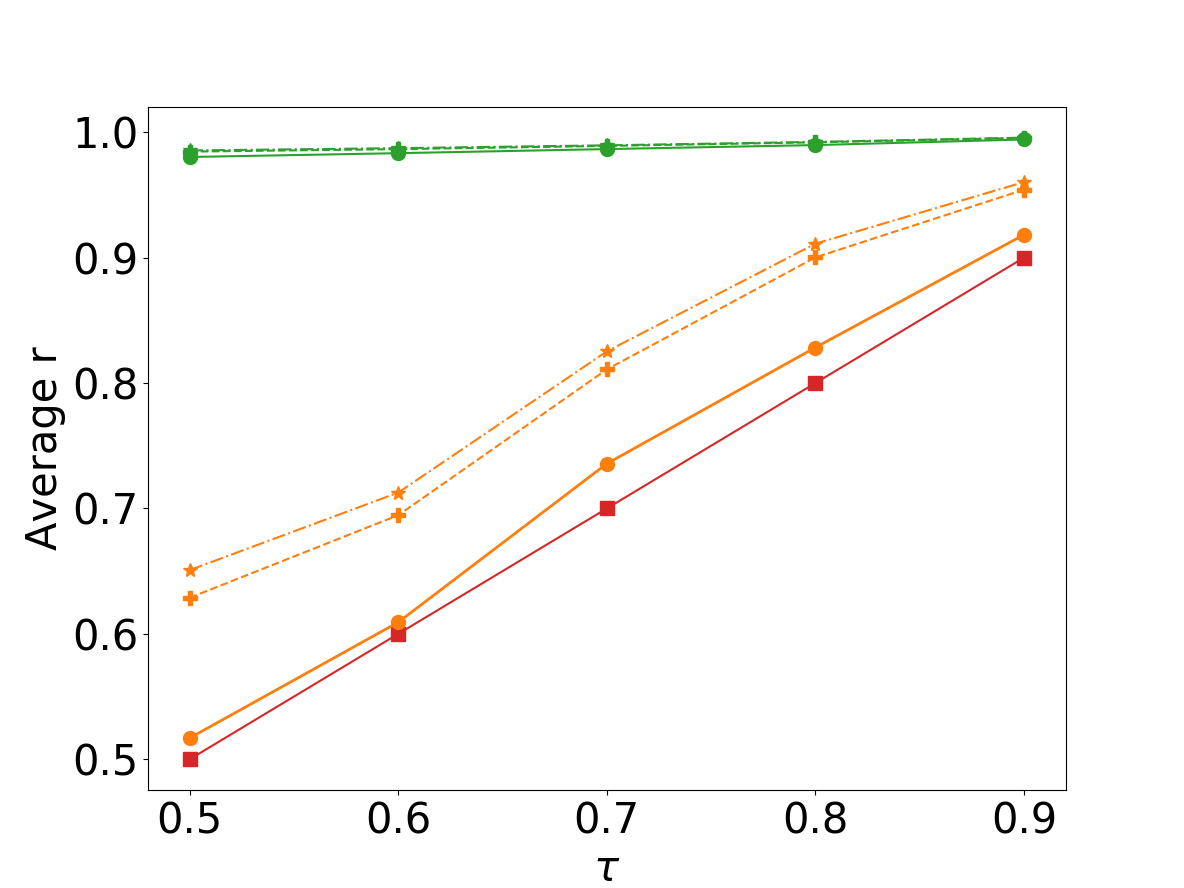}}
\subfloat[email-EuAll  \label{3d}]{\includegraphics[width=4.2cm, height=3.36cm]{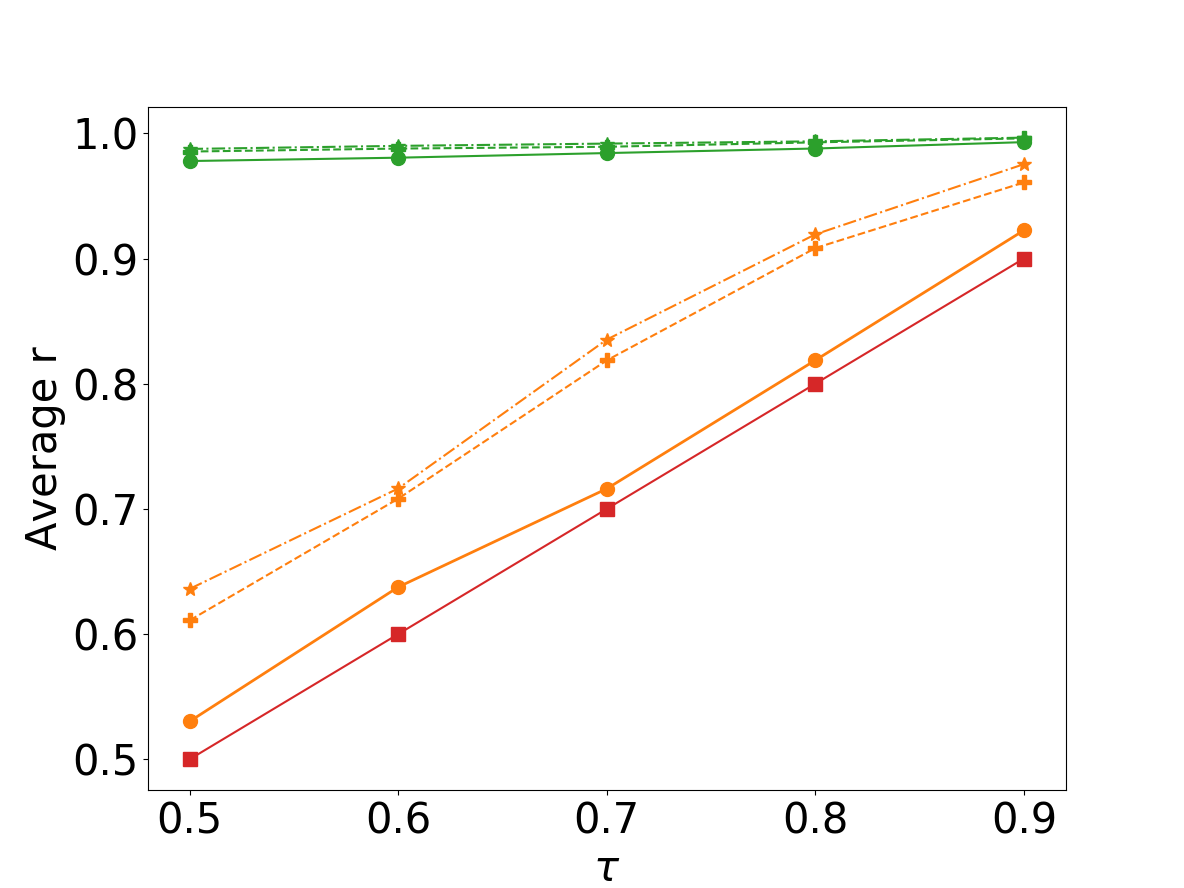}}	
\vspace{-11pt}

\subfloat[web-NotreDame \label{3f}]{\includegraphics[width=4.2cm, height=3.36cm]{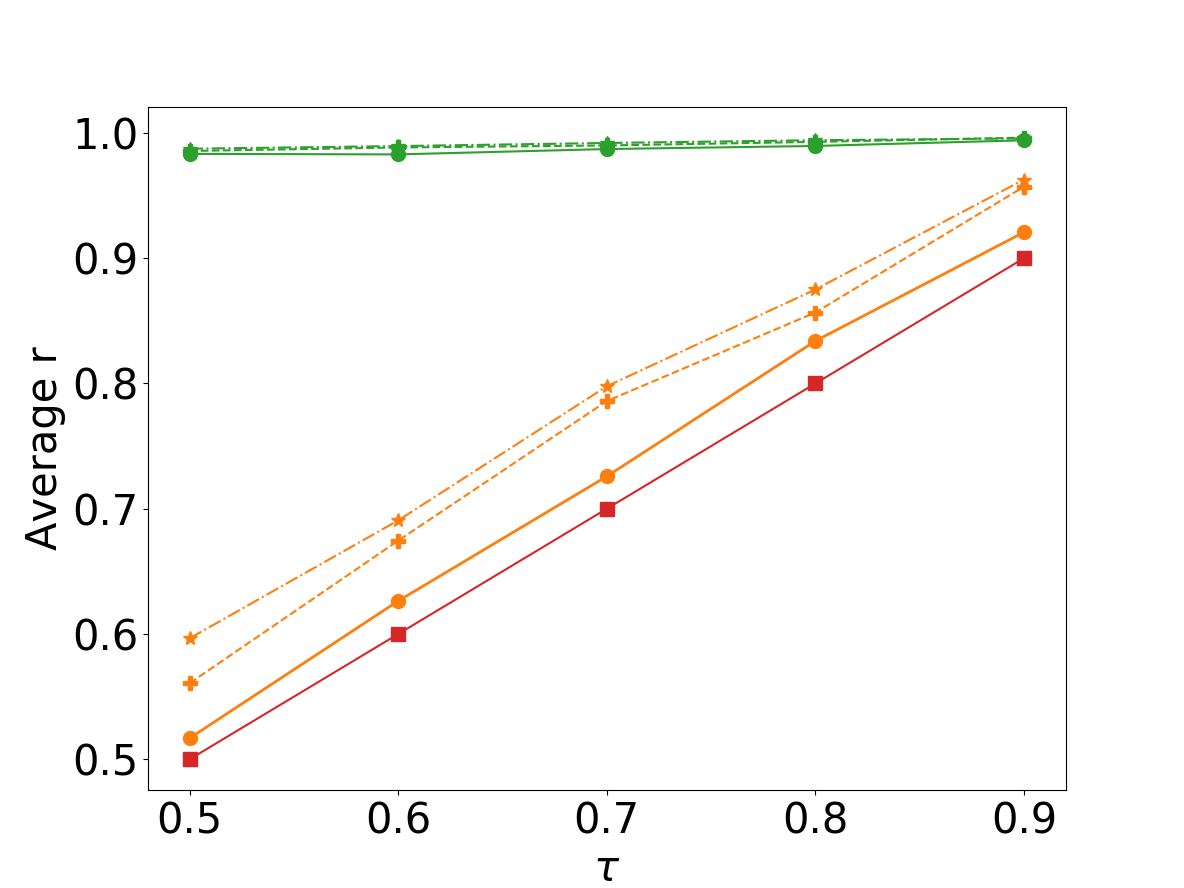}}
\subfloat[com-youtube \label{3g}]{\includegraphics[width=4.2cm, height=3.36cm]{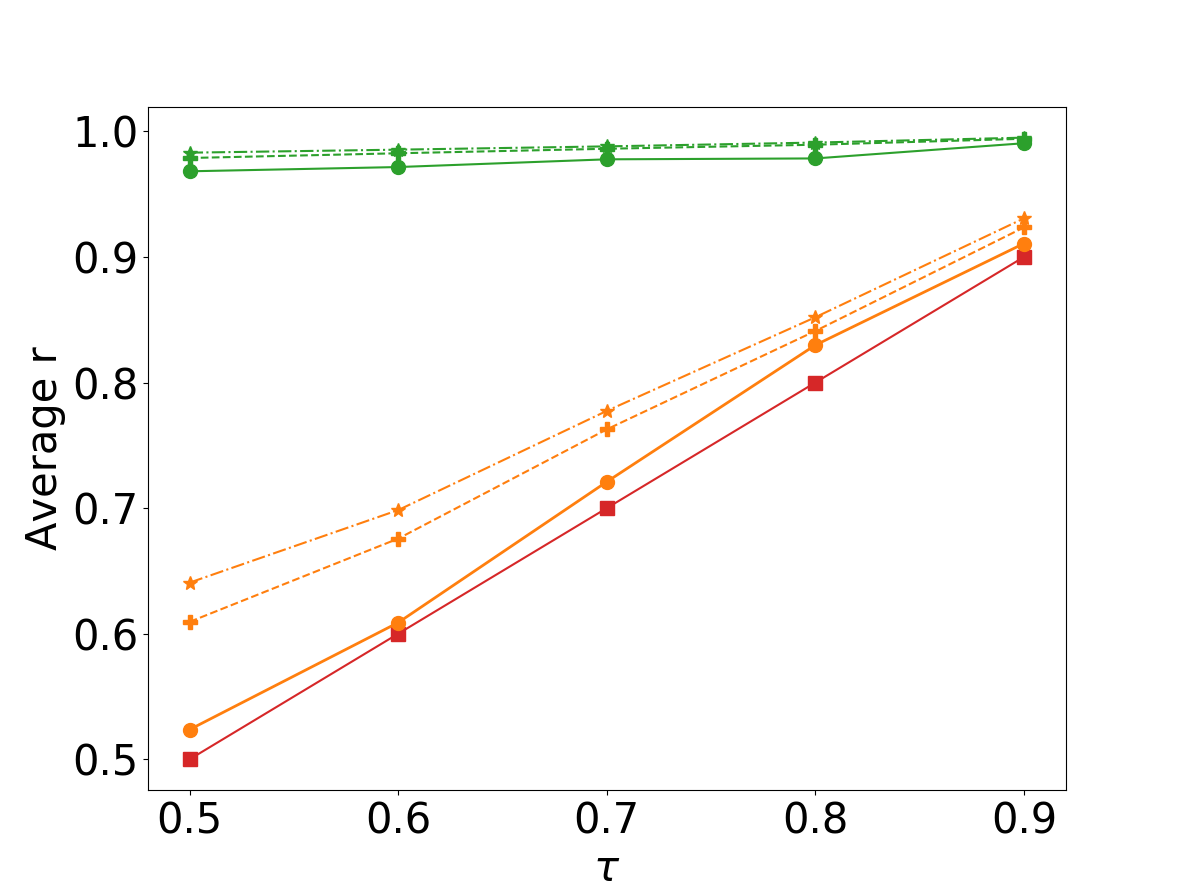}}
\subfloat[soc-pokec \label{3h}]{\includegraphics[width=4.2cm, height=3.36cm]{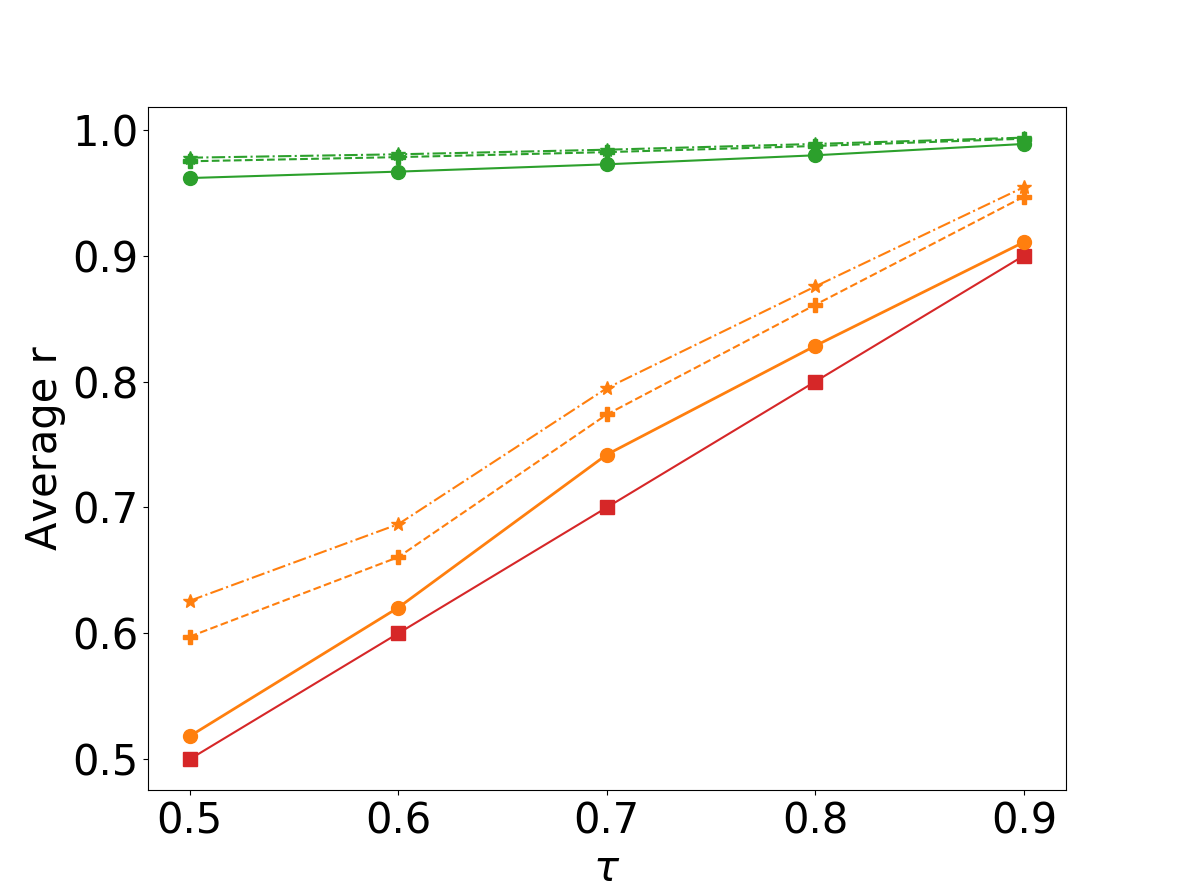}}
\subfloat[cit-Patents \label{3i}]{\includegraphics[width=4.2cm, height=3.36cm]{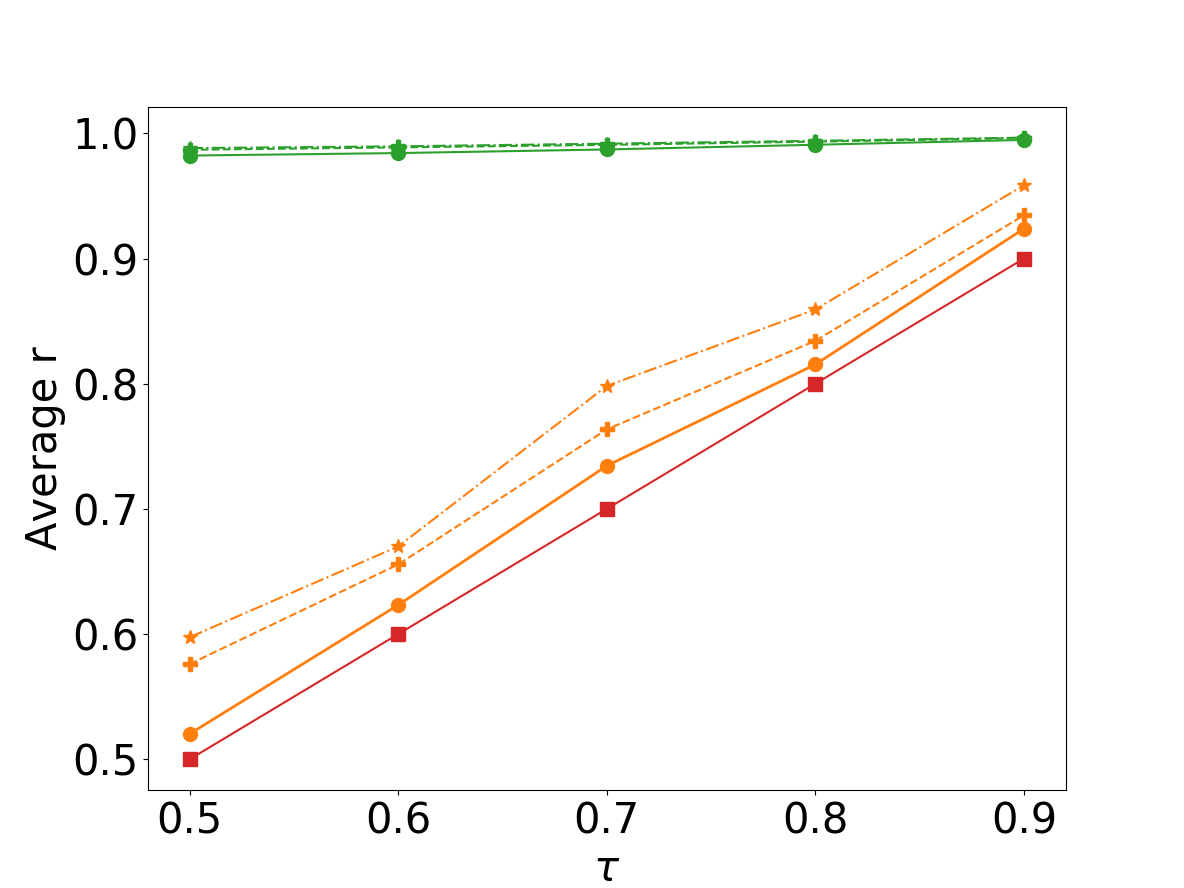}}


	\captionsetup{justification=centering}
	\caption{Average $r$ of $\tau$-R$^+$MCE and $\tau$-RMCE on eight datasets with different bounds, $\tau$ varies from 0.5 to 0.9, U order as default}\label{fig:BoundTau}
	\vspace{-10pt}
\end{figure*}

\begin{figure*}[ht]
\vspace{-12pt}
	\centering

	\subfloat[soc-Epinions1 \label{3a}]{\includegraphics[width=4.2cm, height=3.36cm]{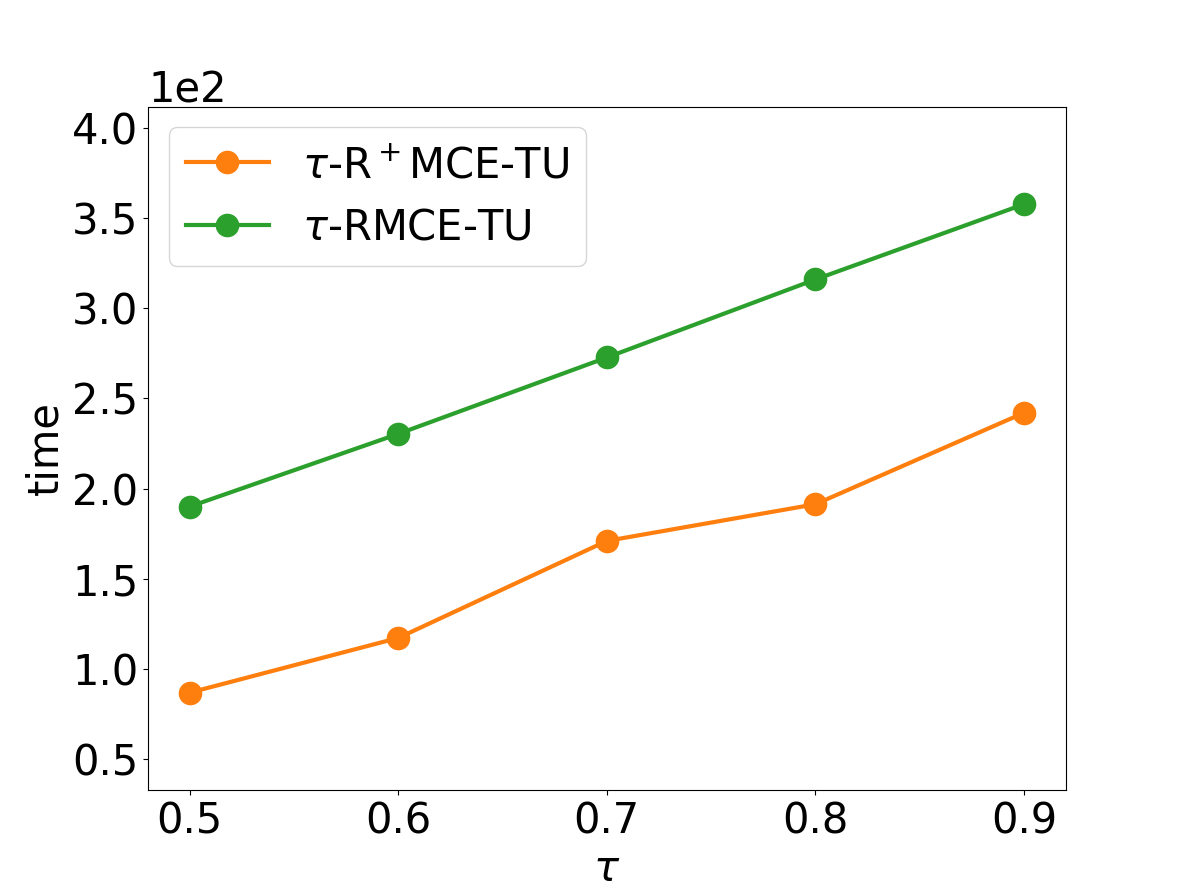}}
	\subfloat[loc-Gowalla \label{3b}]{\includegraphics[width=4.2cm, height=3.36cm]{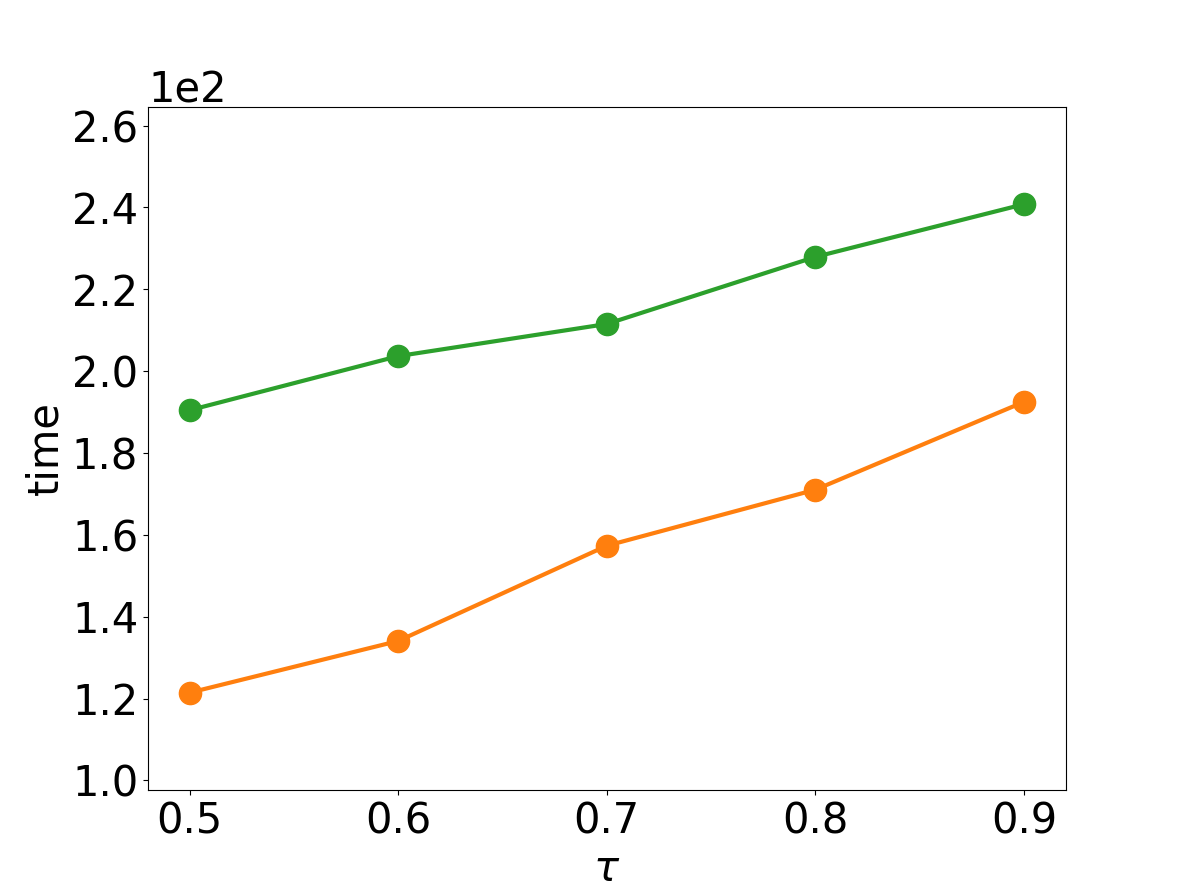}}
	\subfloat[amazon0302 \label{3c}]{\includegraphics[width=4.2cm, height=3.36cm]{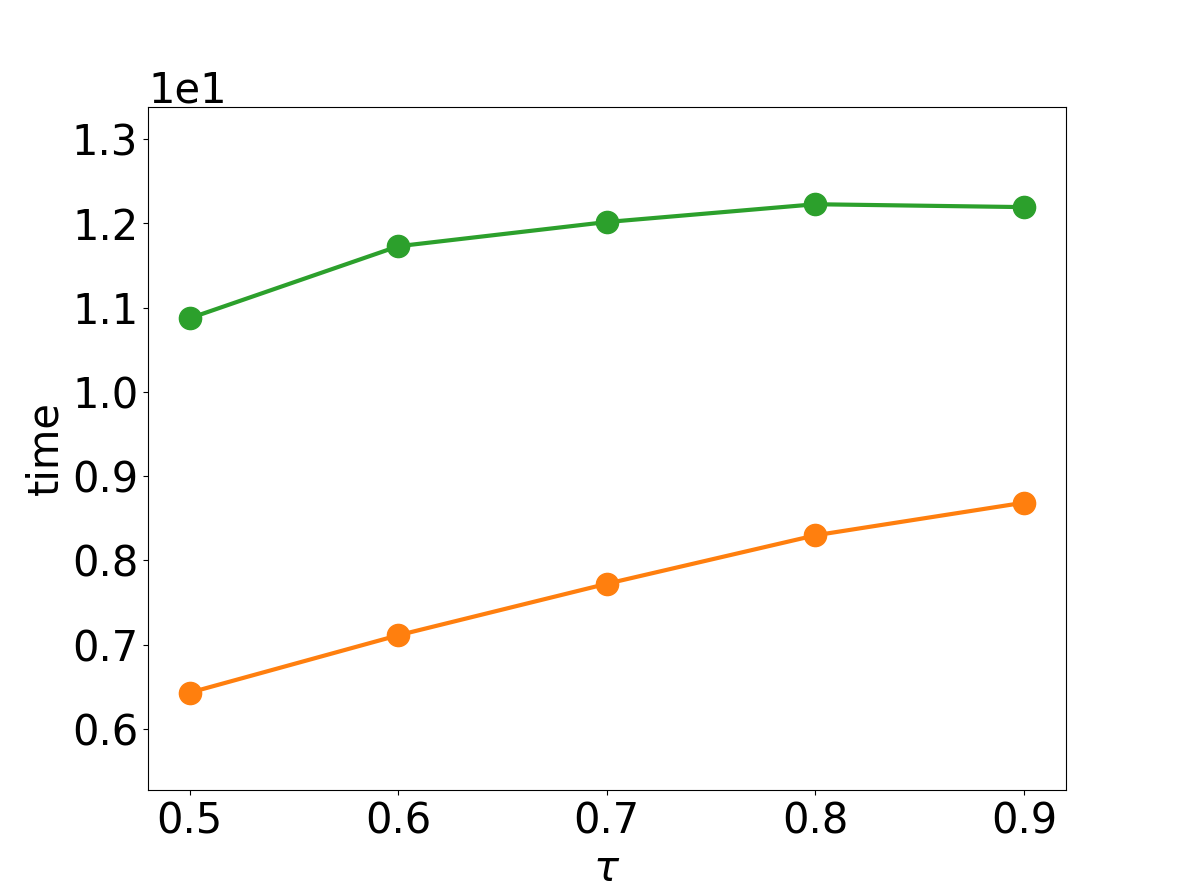}}
\subfloat[email-EuAll  \label{3d}]{\includegraphics[width=4.2cm, height=3.36cm]{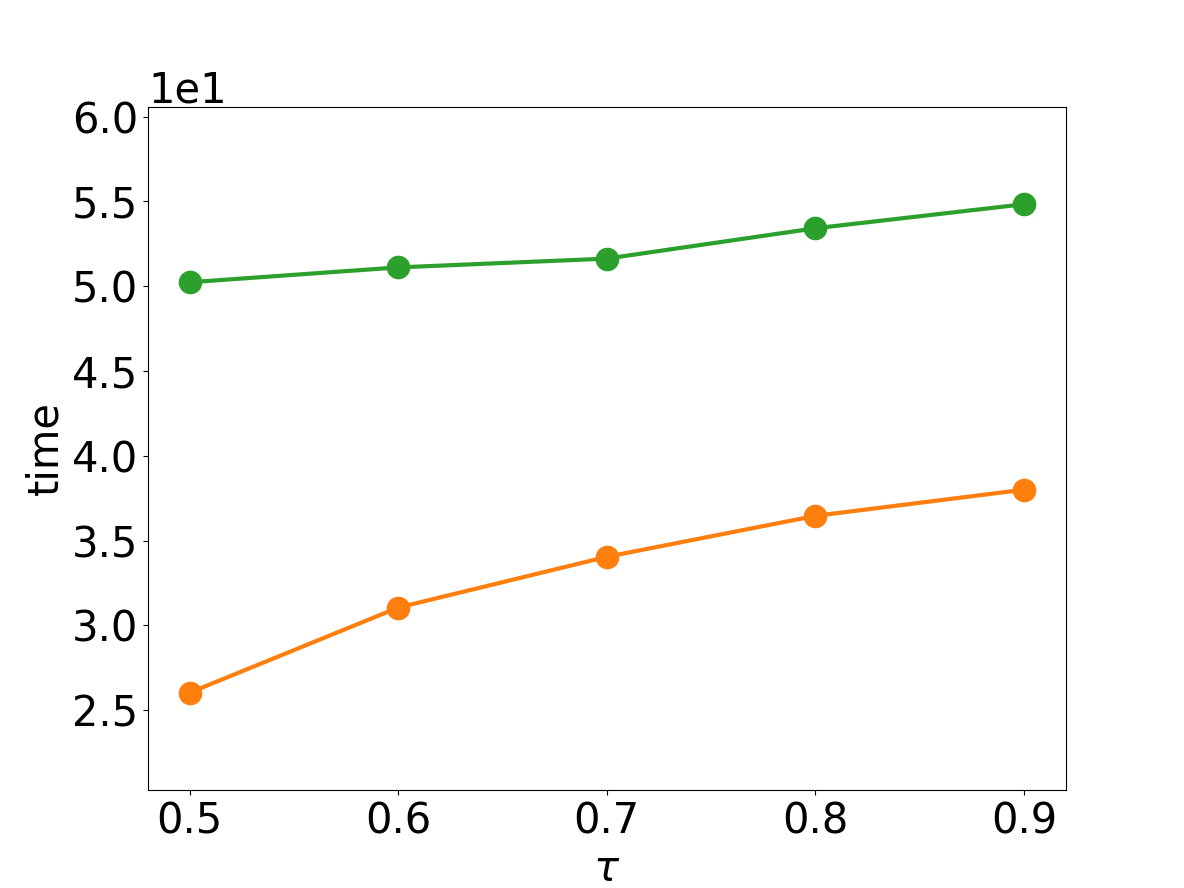}}	

\vspace{-11pt}

\subfloat[web-NotreDame \label{3f}]{\includegraphics[width=4.2cm, height=3.36cm]{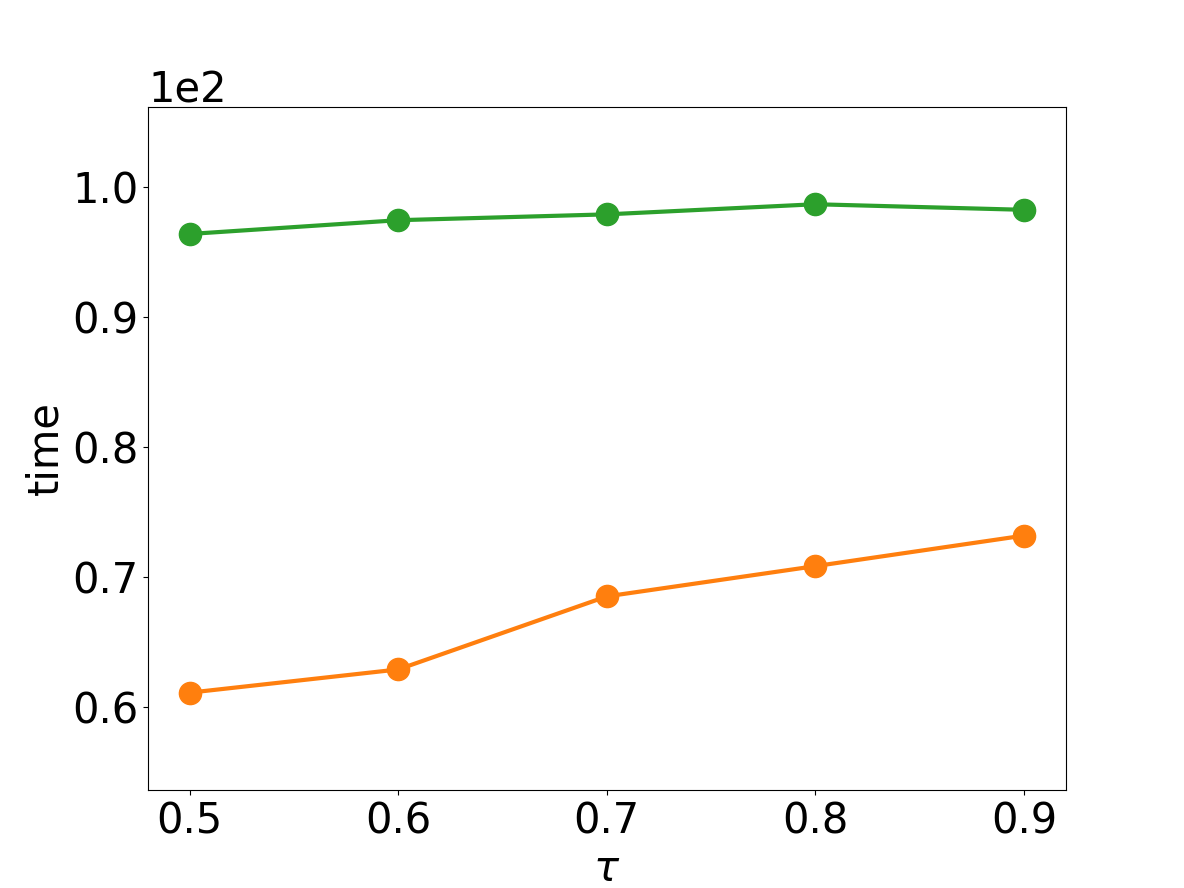}}
\subfloat[com-youtube \label{3g}]{\includegraphics[width=4.2cm, height=3.36cm]{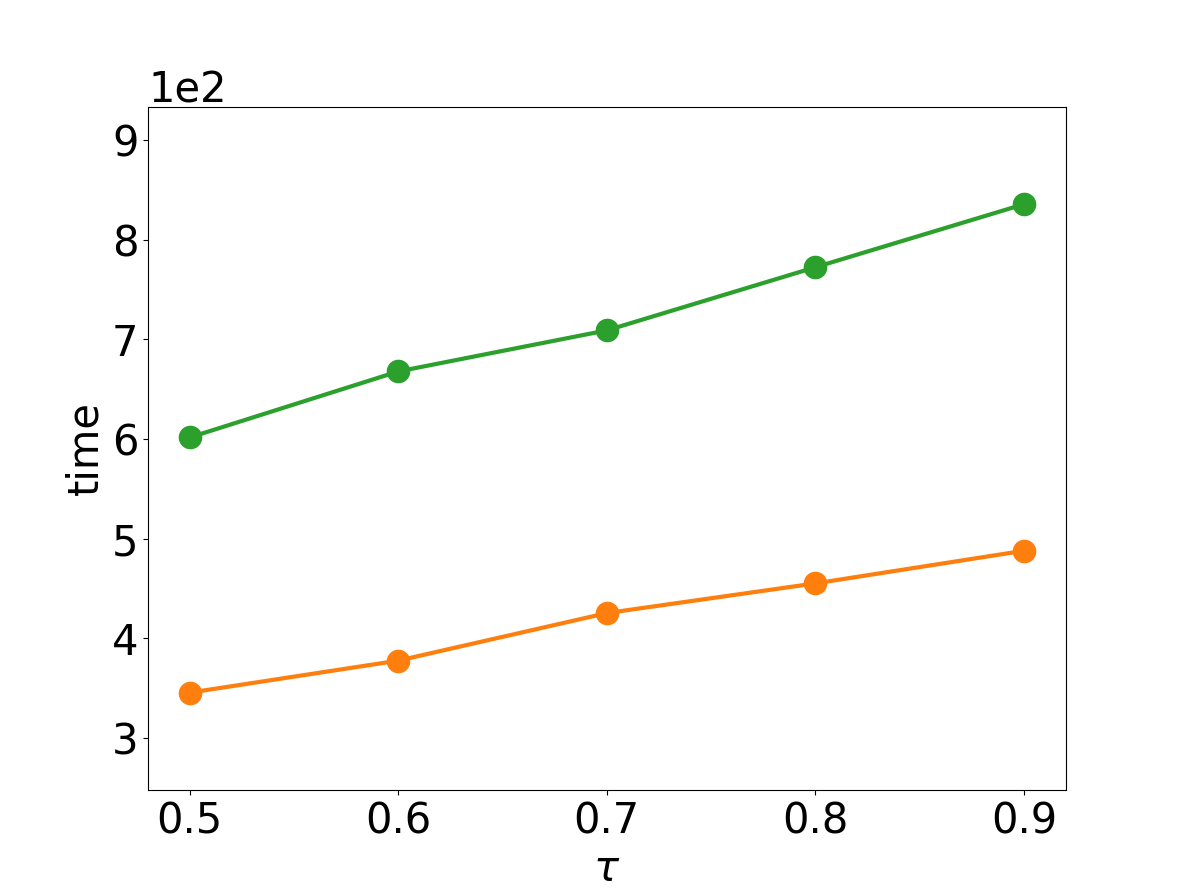}}
\subfloat[soc-pokec \label{3h}]{\includegraphics[width=4.2cm, height=3.36cm]{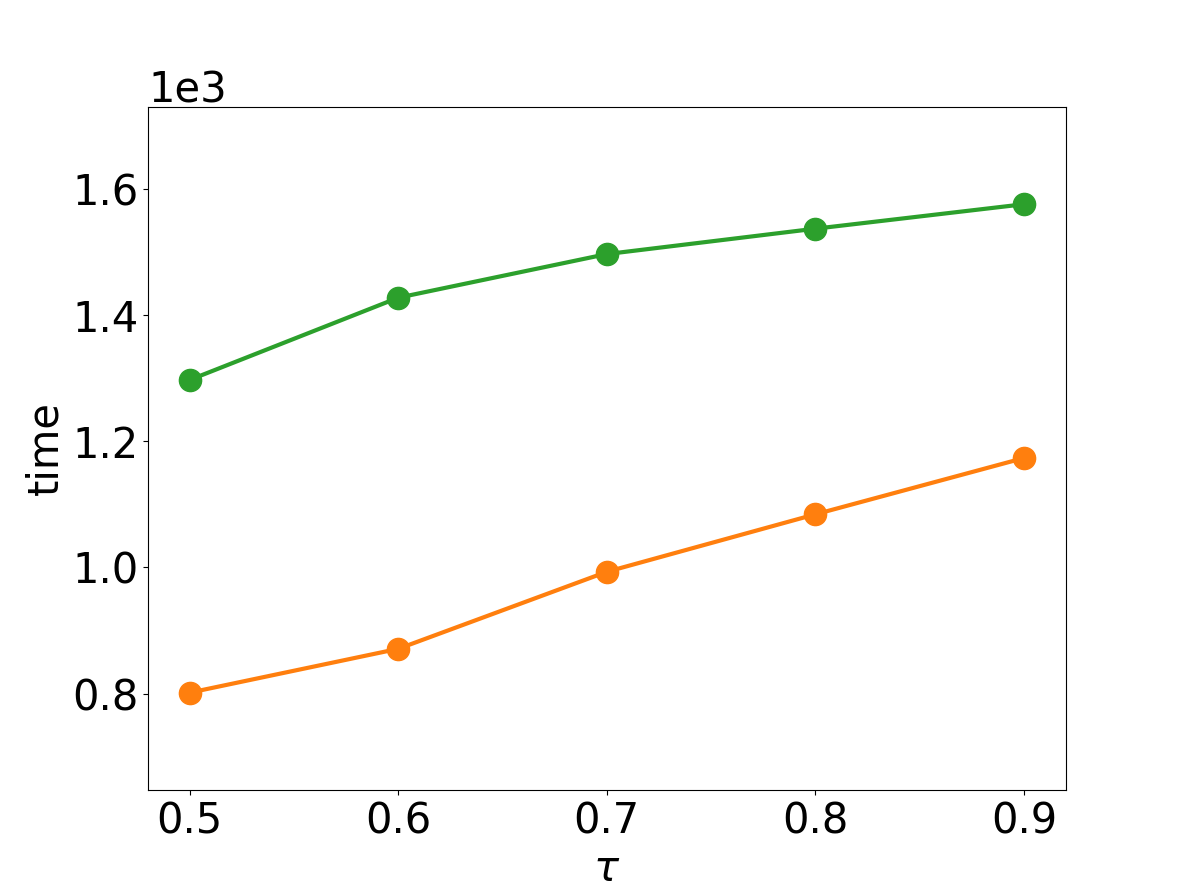}}
\subfloat[cit-Patents \label{3i}]{\includegraphics[width=4.2cm, height=3.36cm]{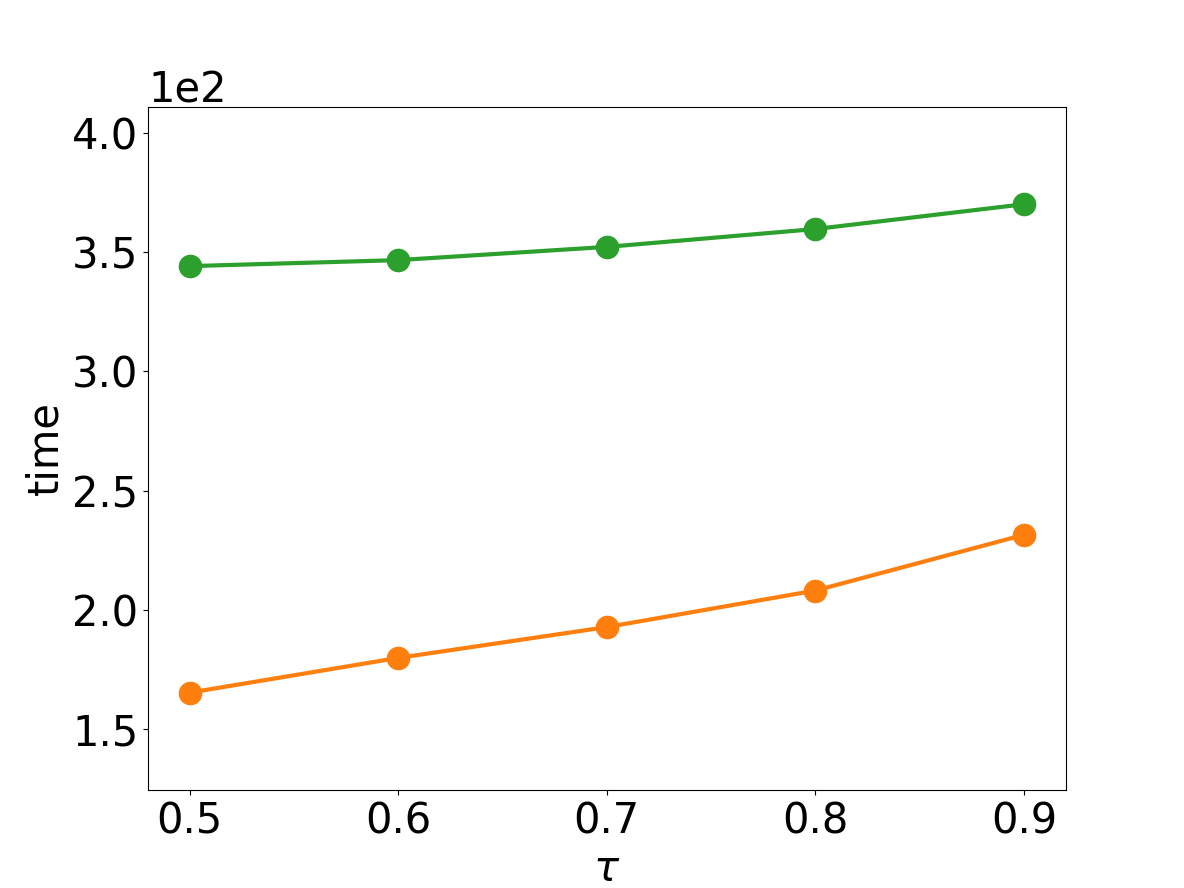}}


	\captionsetup{justification=centering}
	\caption{\small{Running time of $\tau$-R$^+$MCE and $\tau$-RMCE on eight datasets, $\tau$ varies from 0.5 to 0.9, T bound and U order as default}}\label{fig:ttotal}
	\vspace{-10pt}
\end{figure*}

{ 
\subsection{Efficiency}
\label{sec:time}

{
While our main concern in this paper is the output size,
the efficiency of $\tau$-RMCE and $\tau$-R$^+$MCE (with three types of bounds and orders) is also  reported. 
To provide a fuller discussion of the efficiency, we plotted both the total running time and the memory requirement. 
\subsubsection{Running time}
We compare the total running time of $\tau$-R$^+$MCE and  $\tau$-RMCE with default setting of U and T (see Fig.~\ref{fig:ttotal}). 
Results show that $\tau$-R$^+$MCE consistently surpasses $\tau$-RMCE on eight datasets. 
When $\tau = 0.9$, the time reduction is more than $20\%$ for all datasets, among which four datasets ({\it soc-Epinions1, email-EuAll, com-youtube, cit-Patents}) achieve $30\%$. 
When $\tau = 0.5$, this percentage exceeds $35\%$ for all datasets, and five of them ({\it soc-Epinions1, amazon0302, email-EuAll, com-youtube, cit-Patents}) 
achieve more than $40\%$. 

To get a full understanding of why our proposed method benefits efficiency 
(although our initial purpose is to target the effectiveness), 
we recorded the first-result time, 
that is, the duration from the beginning to the first maximal clique being included into summary. 
We found that the result varies very little for different  bounds and vertex orders. 
Thus we use Table~\ref{frt} to briefly summarize the results (T and U are set as default, $\tau=0.9$). 
The first-result time takes up only a very small proportion (less than $7\%$) of the total running time, and this holds for both $\tau$-R$^+$MCE and  $\tau$-RMCE on eight datasets with all the $\tau$ values. 
The fact is that most of the running time (more than $93\%$) is consumed by the enumeration procedure,  
which implies that the benefited efficiency of  $\tau$-R$^+$MCE comes from the early pruning power that speeds up the enumeration recursion.  
The search tree of $\tau$-R$^+$MCE does not have to explore as deep as $\tau$-RMCE does to finally determine whether to discard a candidate clique, thus
less  time is wasted on growing cliques that would result in redundancy. 

\begin{figure*}[ht]
\vspace{-15pt}
	\centering

	\subfloat[soc-Epinions1 \label{3a}]{\includegraphics[width=4.2cm, height=3.36cm]{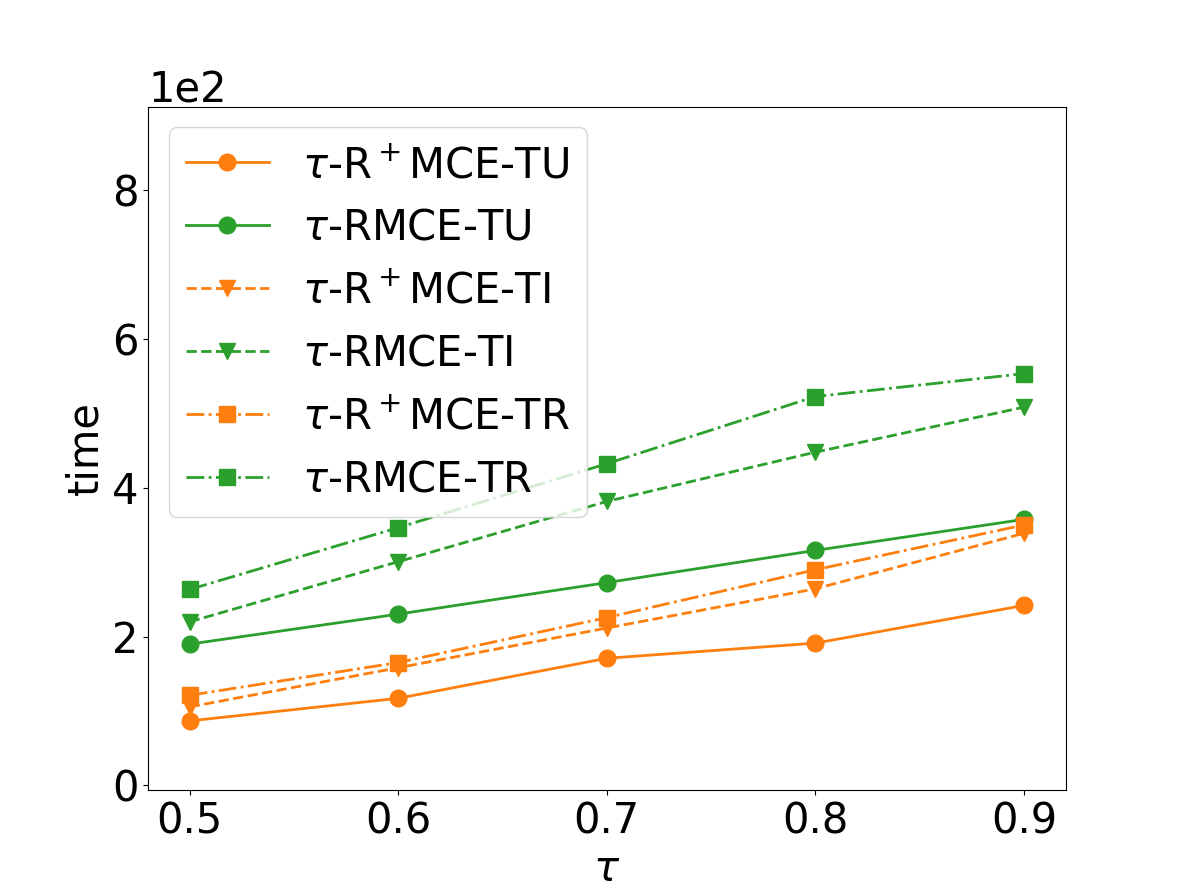}}
	\subfloat[loc-Gowalla \label{3b}]{\includegraphics[width=4.2cm, height=3.36cm]{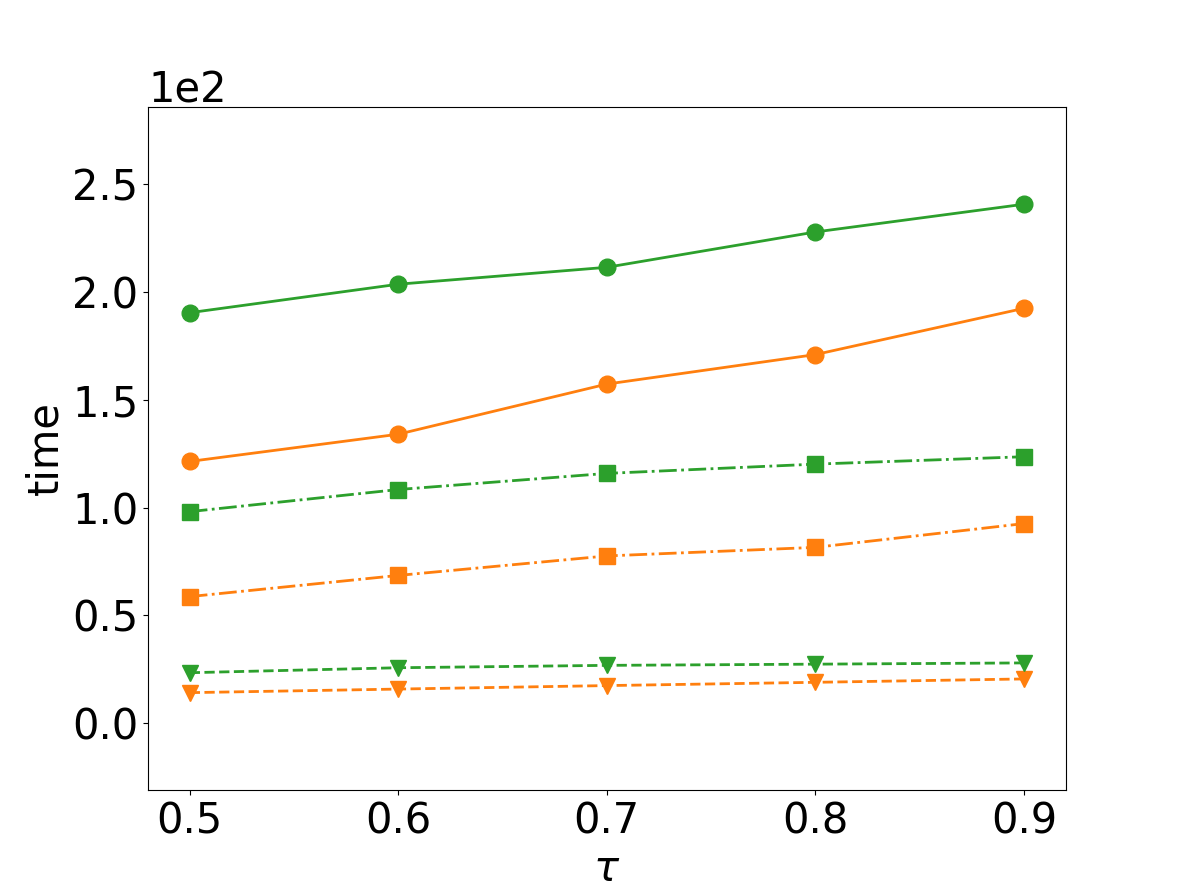}}
	\subfloat[amazon0302 \label{3c}]{\includegraphics[width=4.2cm, height=3.36cm]{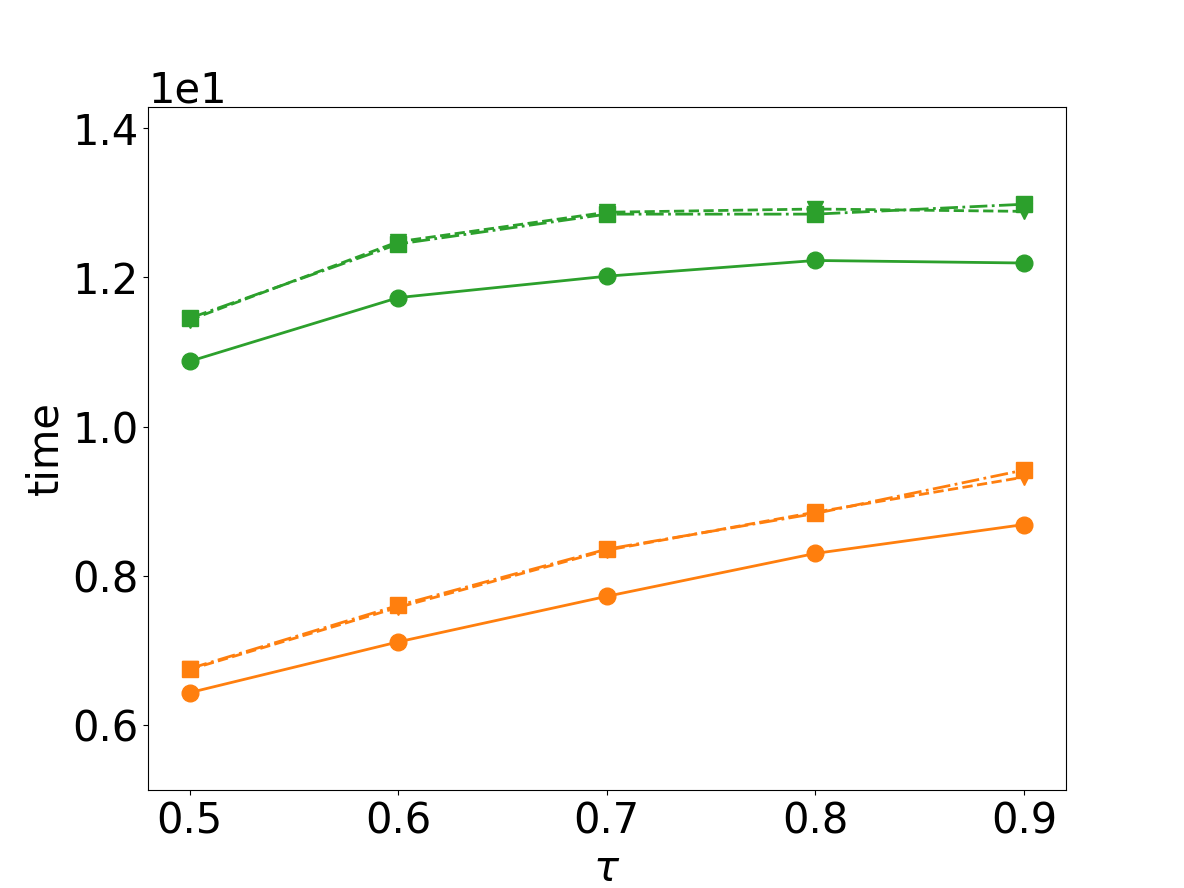}}
\subfloat[email-EuAll  \label{3d}]{\includegraphics[width=4.2cm, height=3.36cm]{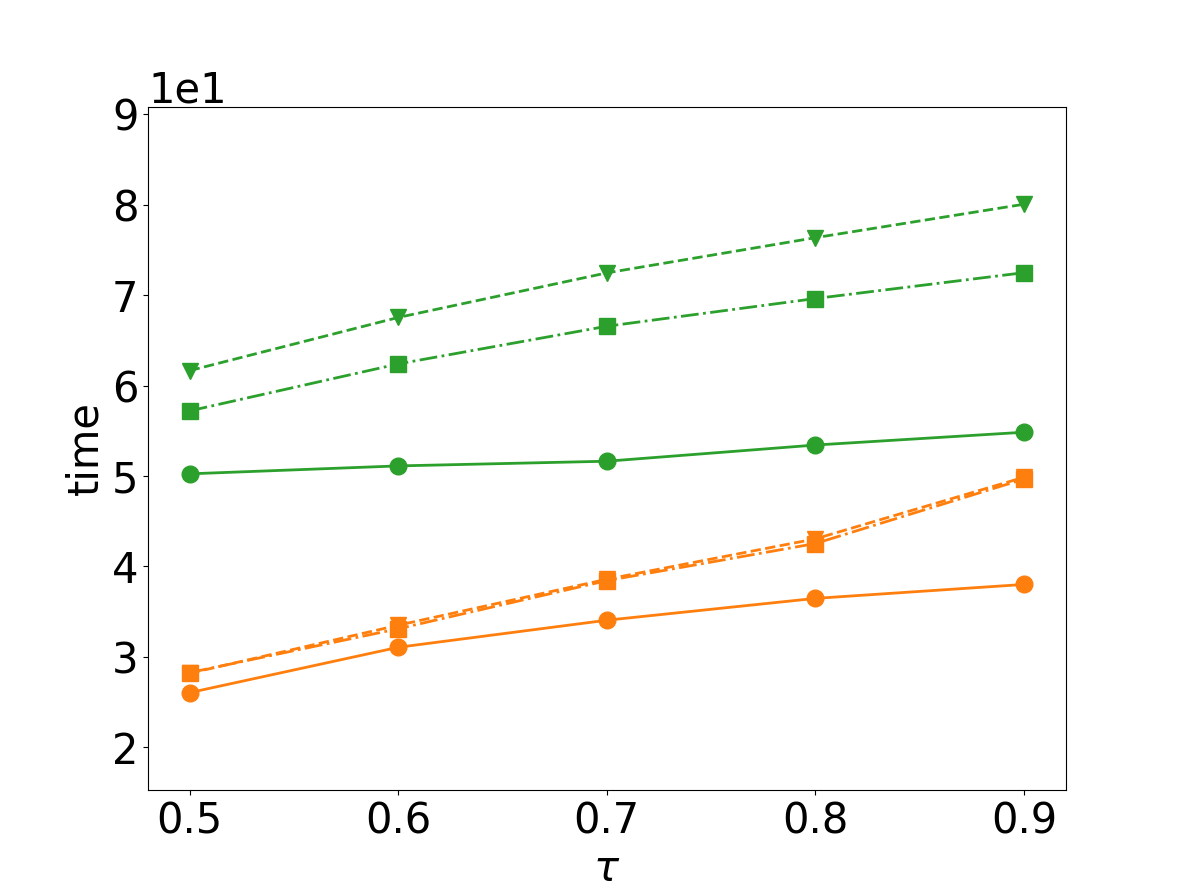}}	

\vspace{-11pt}

\subfloat[web-NotreDame \label{3f}]{\includegraphics[width=4.2cm, height=3.36cm]{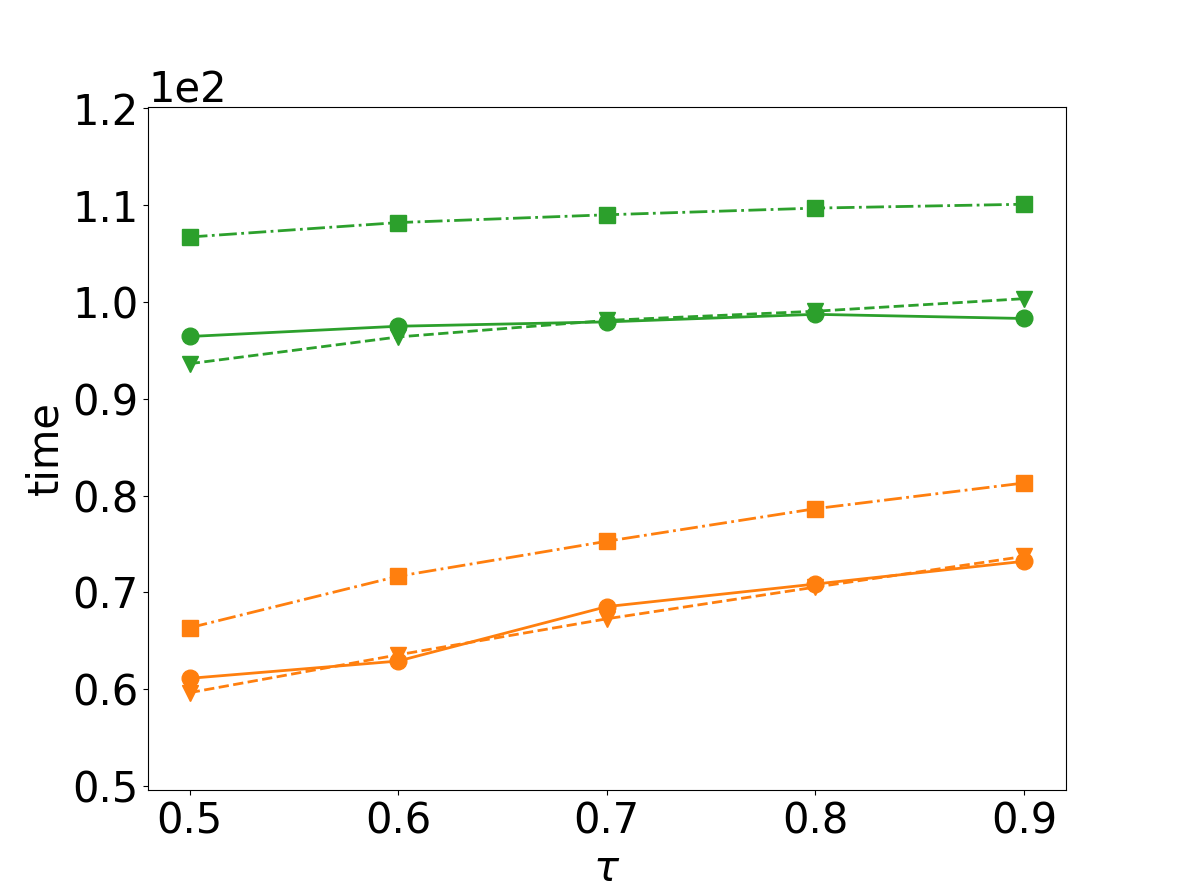}}
\subfloat[com-youtube \label{3g}]{\includegraphics[width=4.2cm, height=3.36cm]{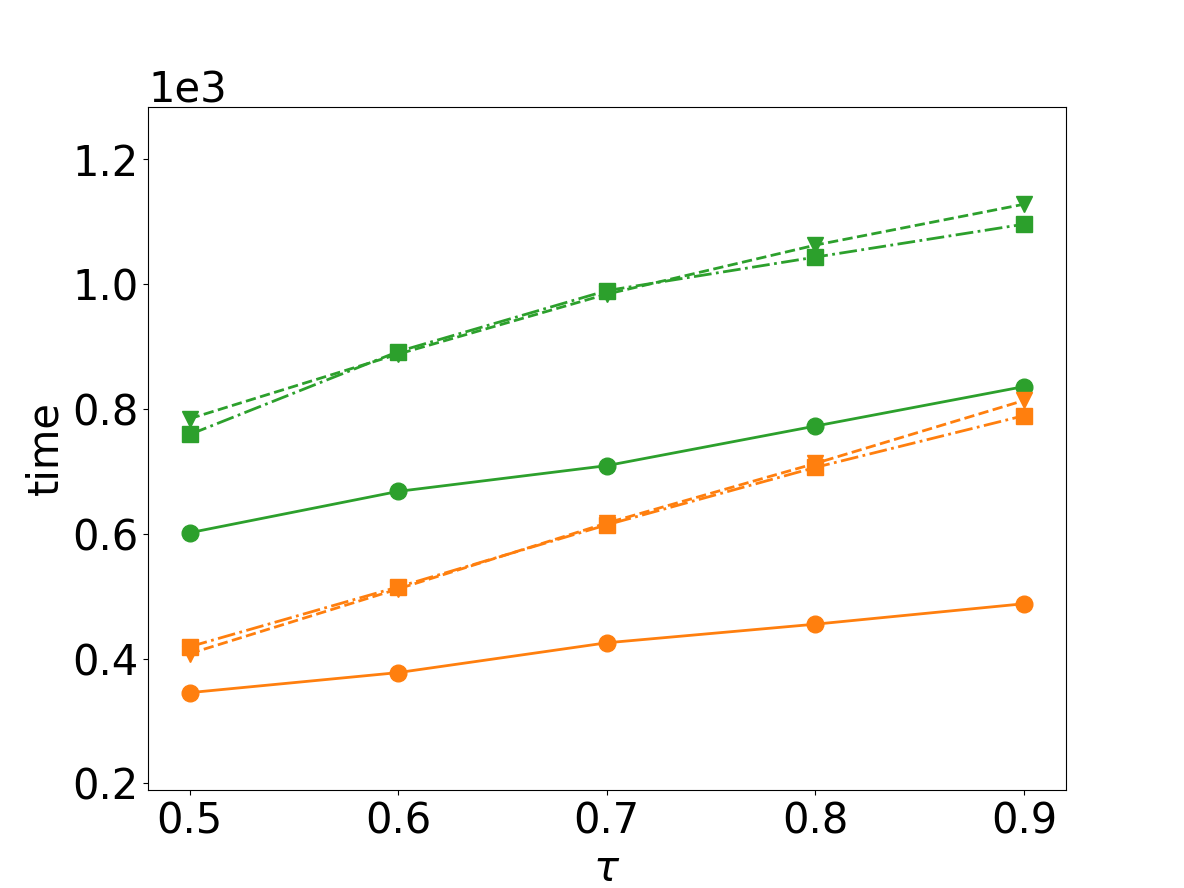}}
\subfloat[soc-pokec \label{3h}]{\includegraphics[width=4.2cm, height=3.36cm]{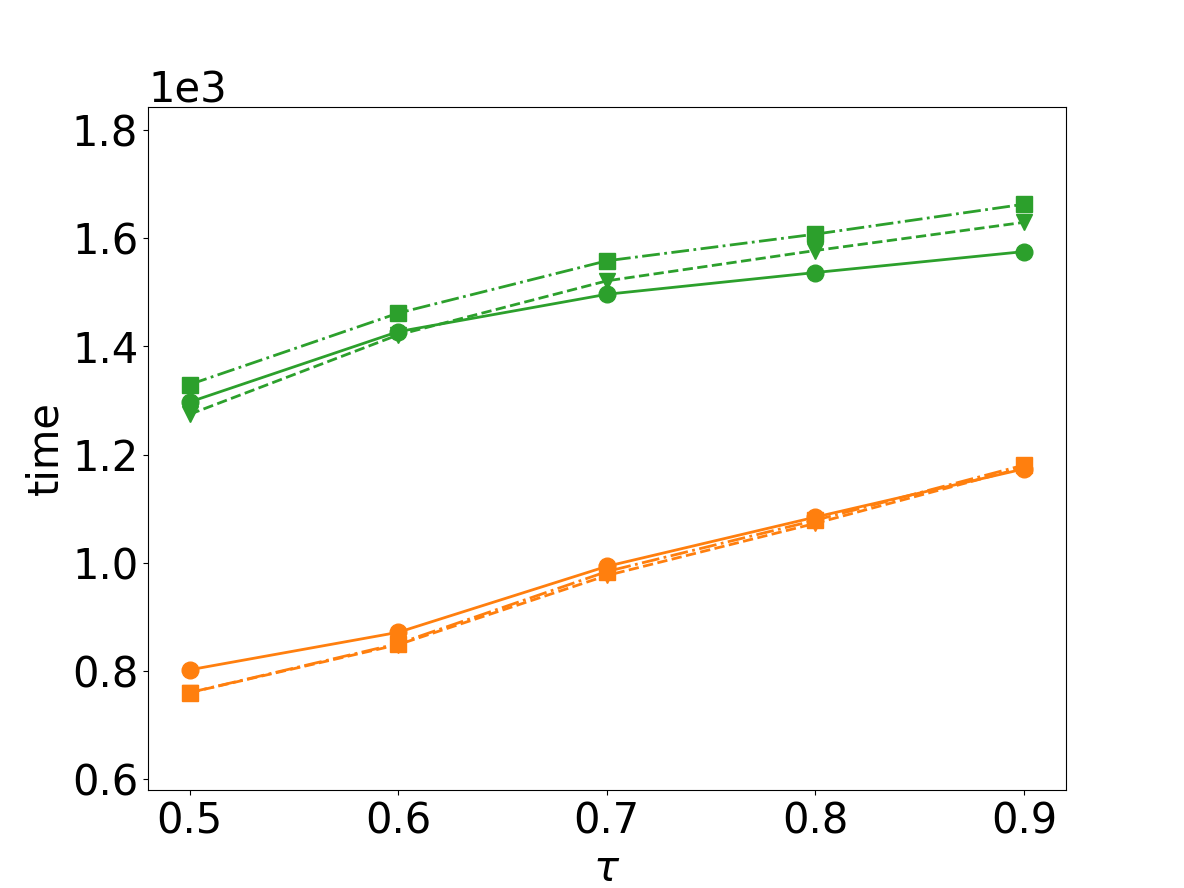}}
\subfloat[cit-Patents \label{3i}]{\includegraphics[width=4.2cm, height=3.36cm]{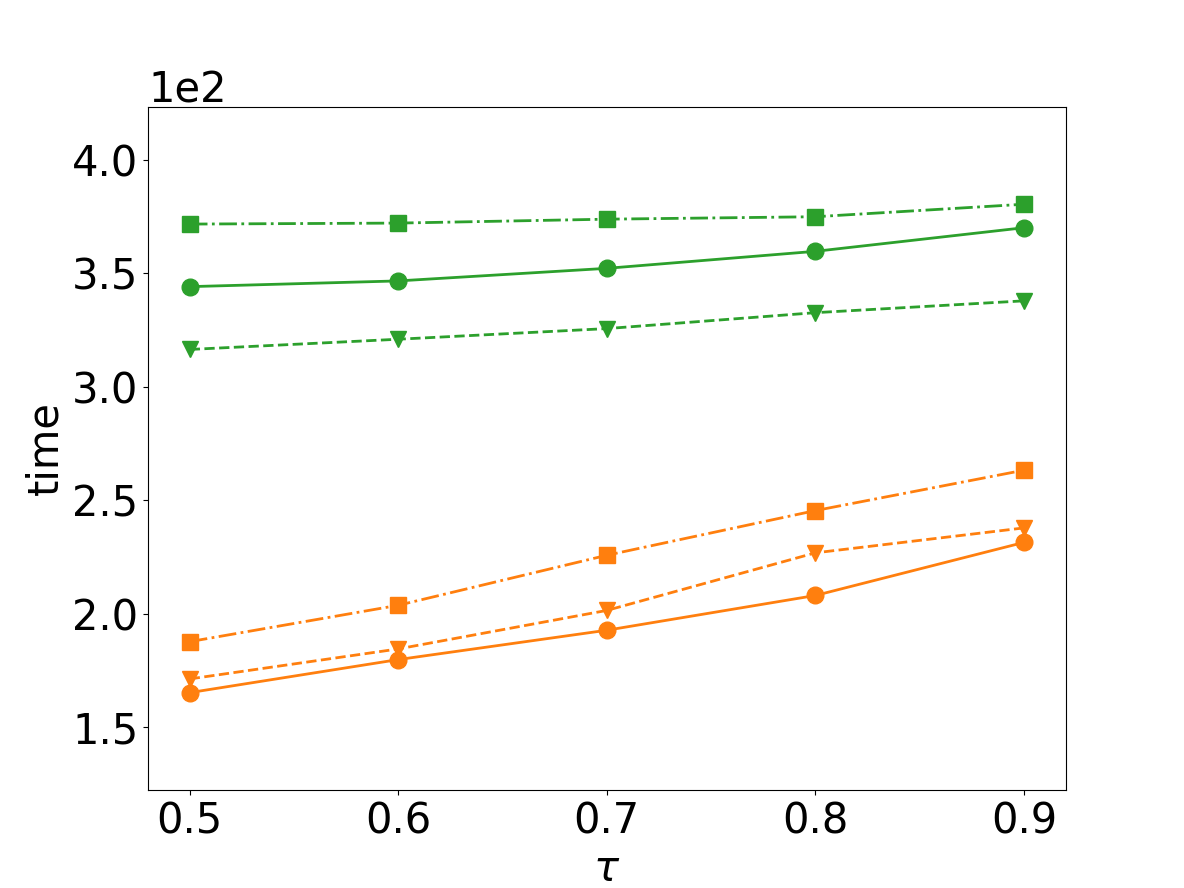}}

%
%
	\captionsetup{justification=centering}
	\caption{\small{Running time of $\tau$-R$^+$MCE and $\tau$-RMCE on eight datasets with different orders, $\tau$ varies from 0.5 to 0.9, T  as default}}\label{fig:torder}
	\vspace{-20pt}
\end{figure*}

\begin{figure*}[ht]
	\centering

	\subfloat[soc-Epinions1 \label{3a}]{\includegraphics[width=4.2cm, height=3.36cm]{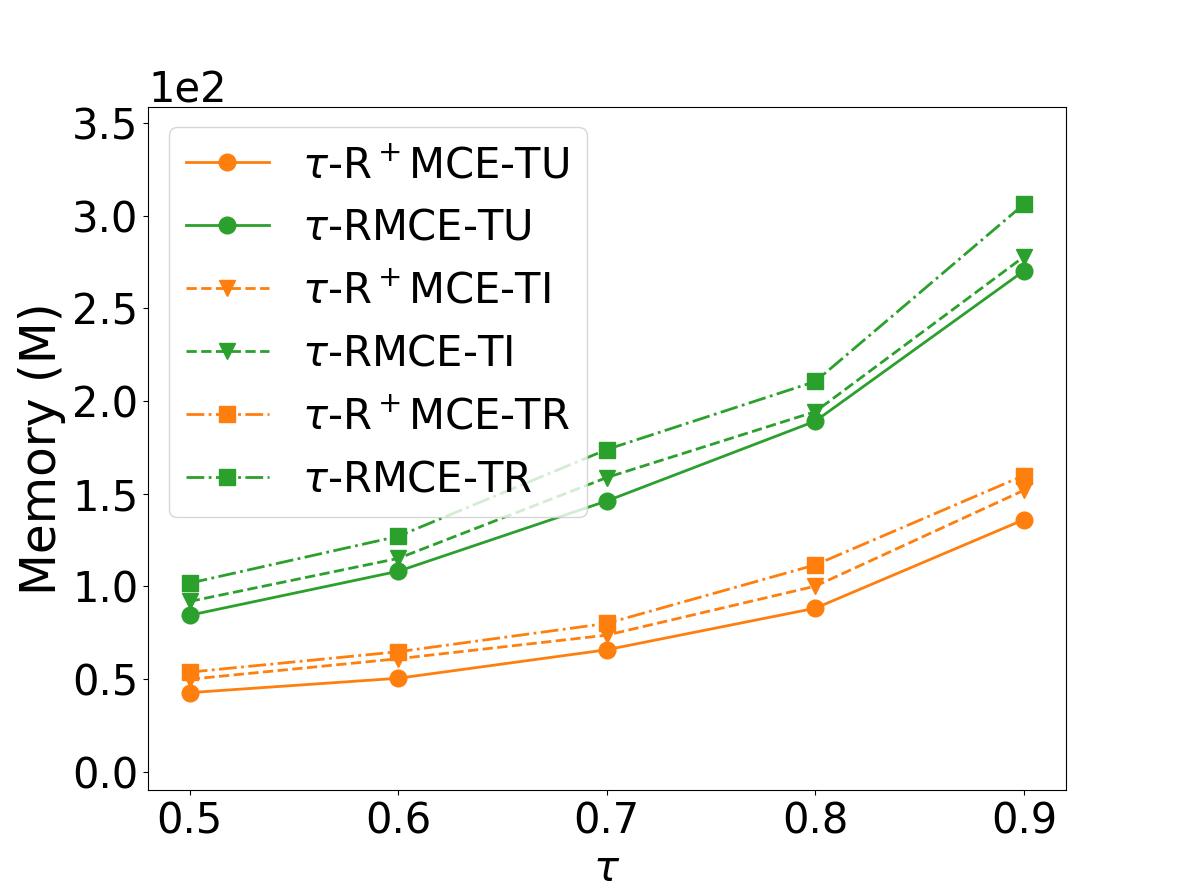}}
	\subfloat[loc-Gowalla \label{3b}]{\includegraphics[width=4.2cm, height=3.36cm]{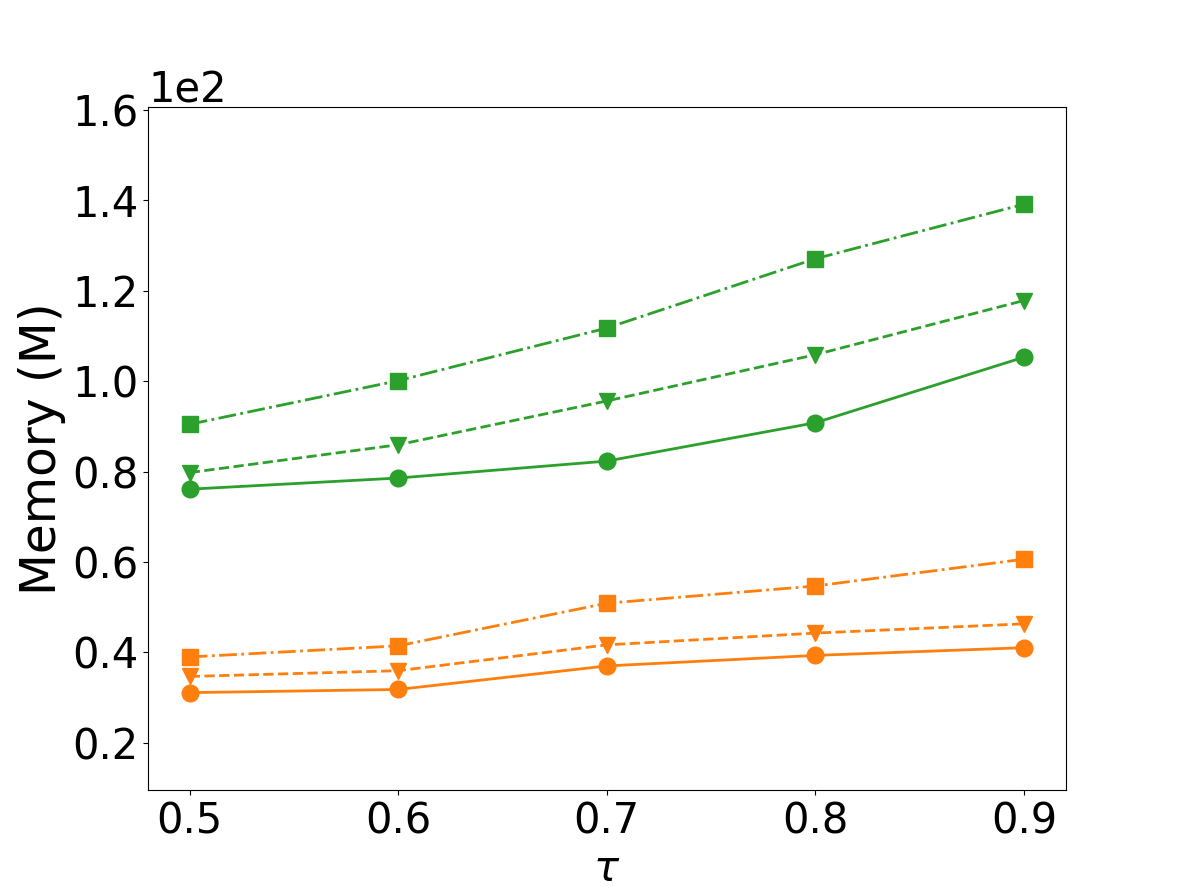}}
	\subfloat[amazon0302 \label{3c}]{\includegraphics[width=4.2cm, height=3.36cm]{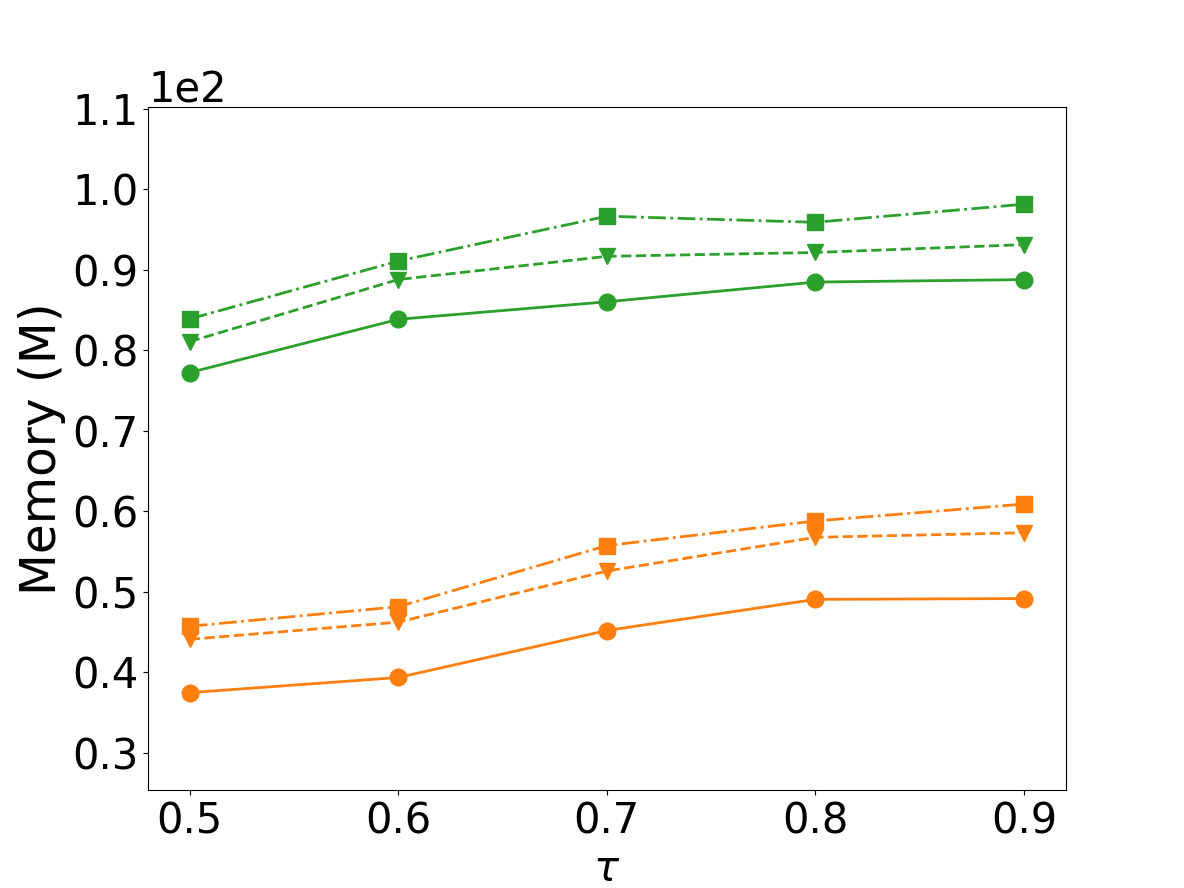}}
\subfloat[email-EuAll  \label{3d}]{\includegraphics[width=4.2cm, height=3.36cm]{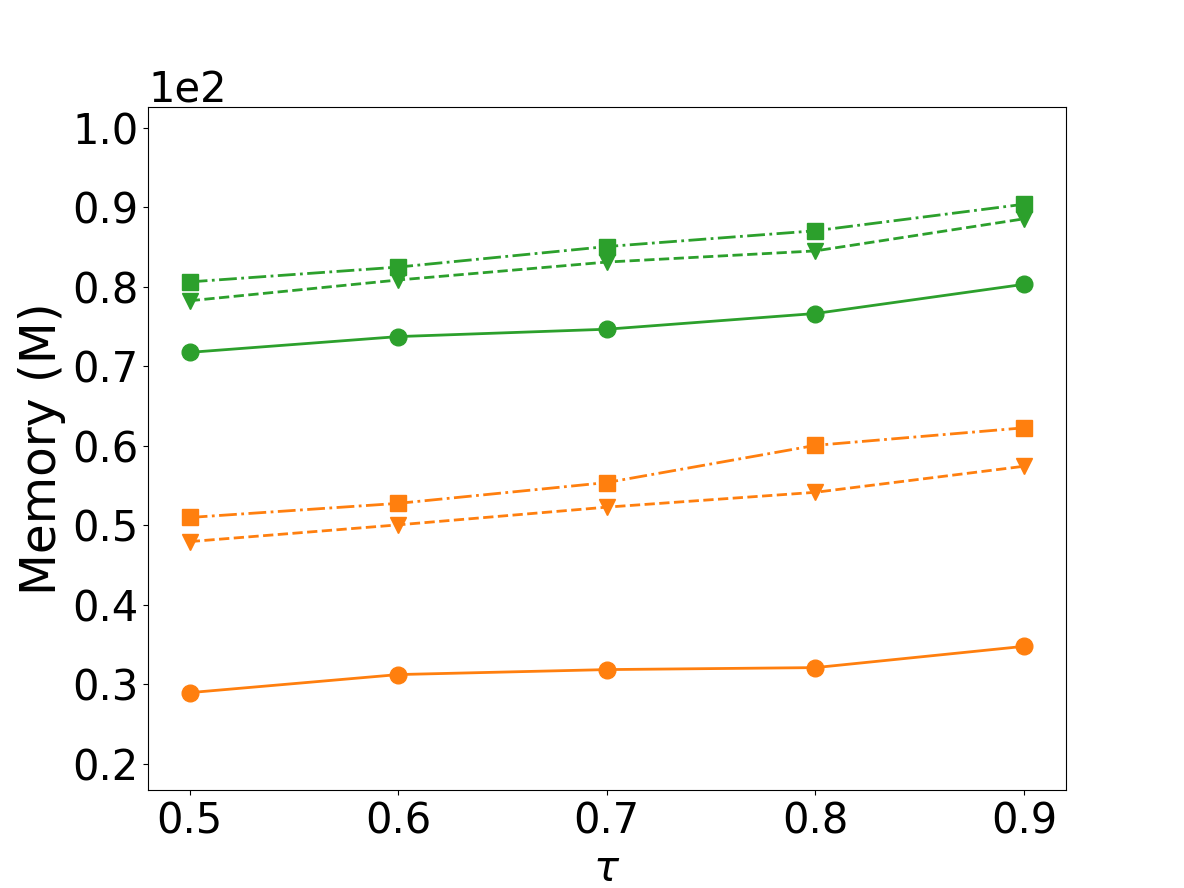}}	

\vspace{-11pt}

\subfloat[web-NotreDame \label{3f}]{\includegraphics[width=4.2cm, height=3.36cm]{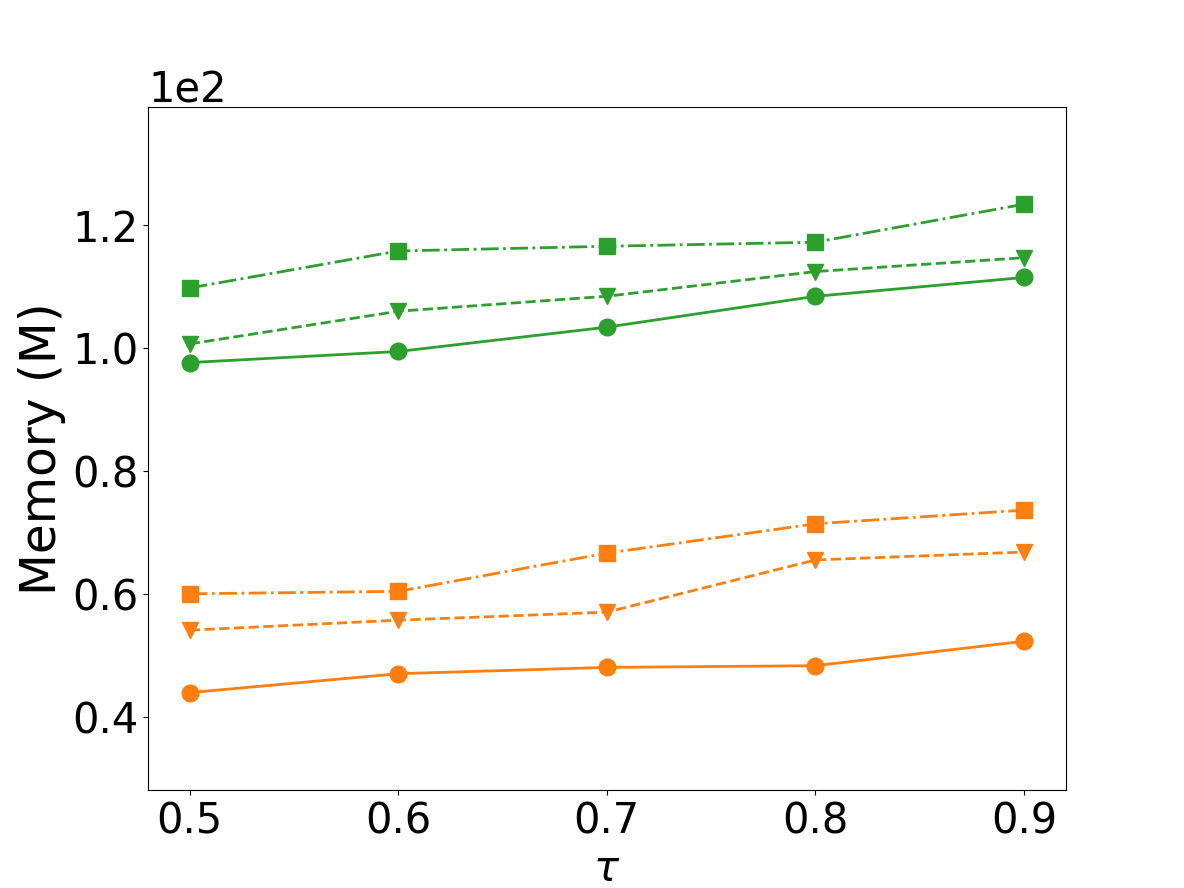}}
\subfloat[com-youtube \label{3g}]{\includegraphics[width=4.2cm, height=3.36cm]{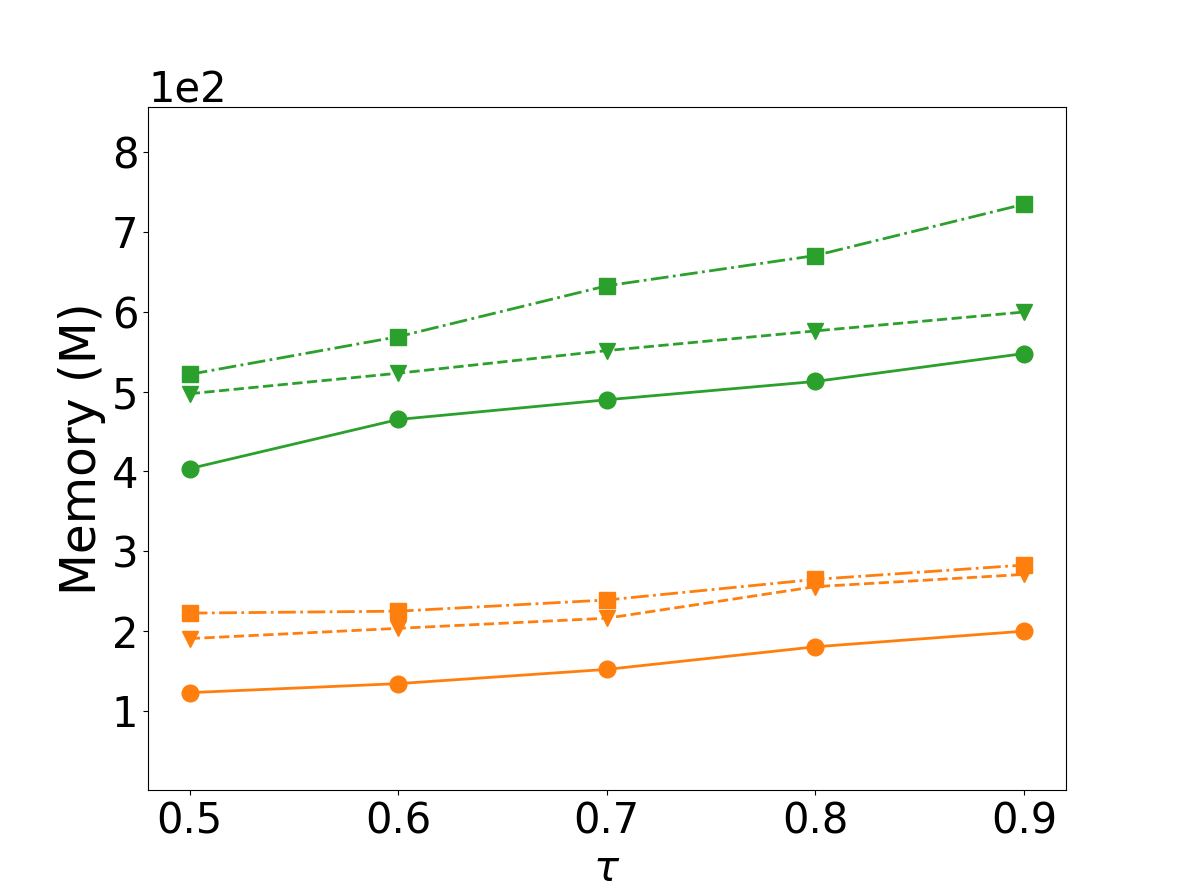}}
\subfloat[soc-pokec \label{3h}]{\includegraphics[width=4.2cm, height=3.36cm]{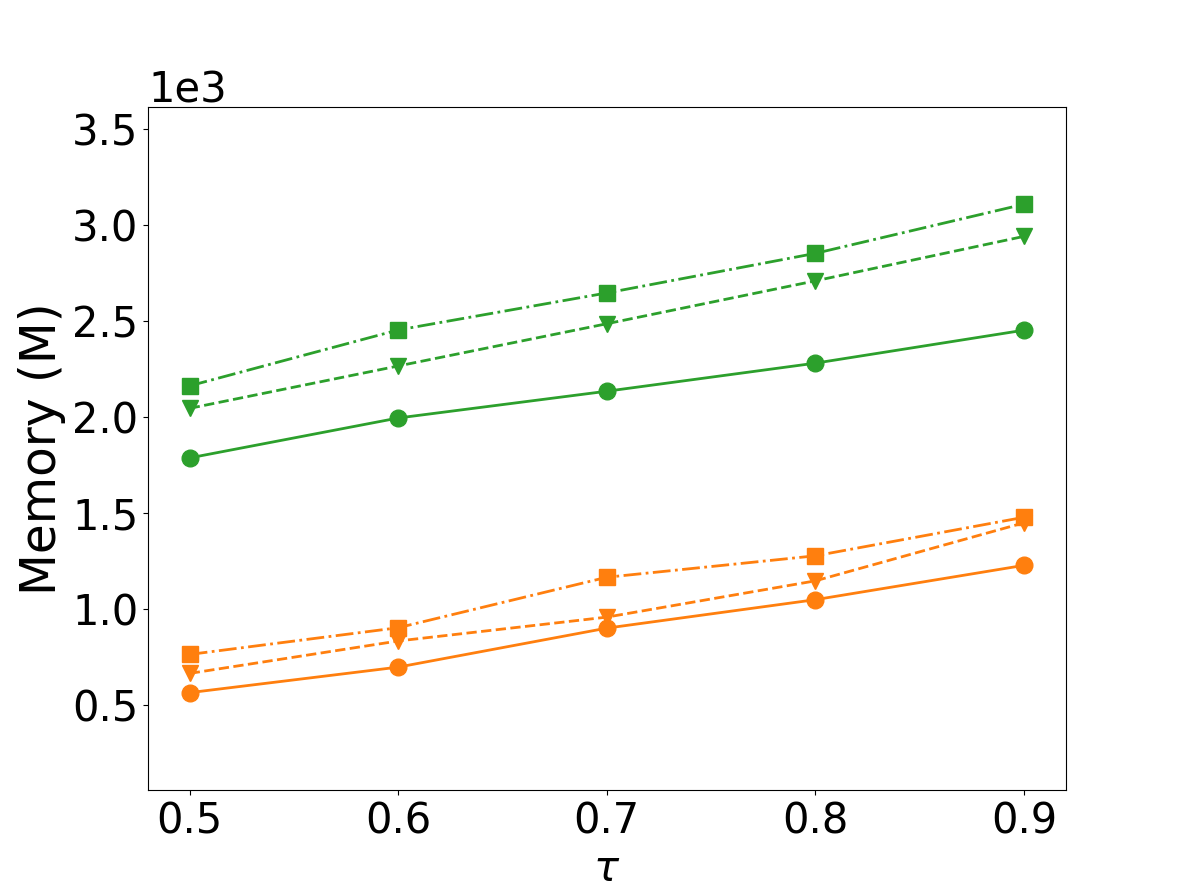}}
\subfloat[cit-Patents \label{3i}]{\includegraphics[width=4.2cm, height=3.36cm]{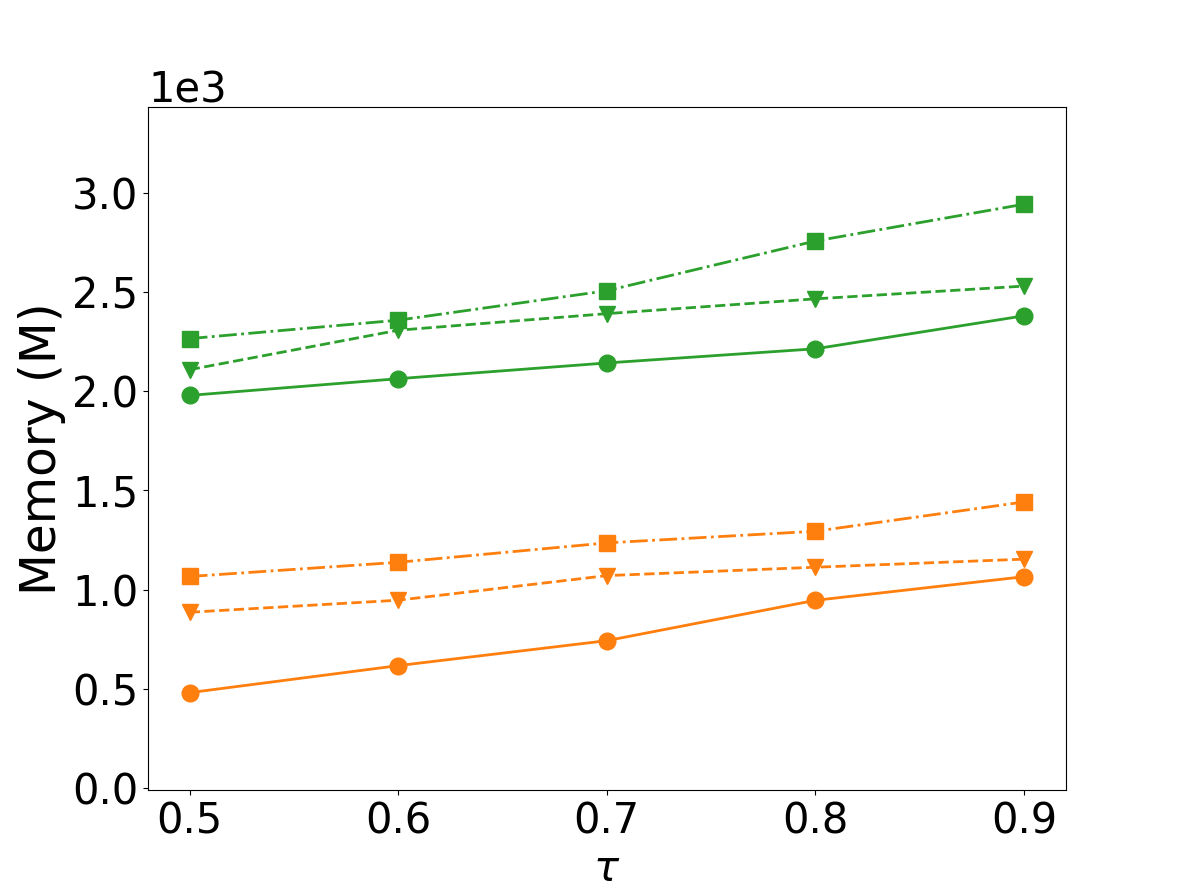}}

%
%
	\captionsetup{justification=centering}
	\caption{Memory requirement of $\tau$-R$^+$MCE and $\tau$-RMCE on eight datasets with different orders, $\tau$ varies from 0.5 to 0.9, T bound as default}\label{fig:morder}
	\vspace{-10pt}
\end{figure*}

\subsubsection{Efficiency of orders}
\label{sc:622}
\begin{figure*}[ht]
  	\centering
	\vspace{-15pt}

	\subfloat[soc-Epinions1 \label{3a}]{\includegraphics[width=4.2cm, height=3.36cm]{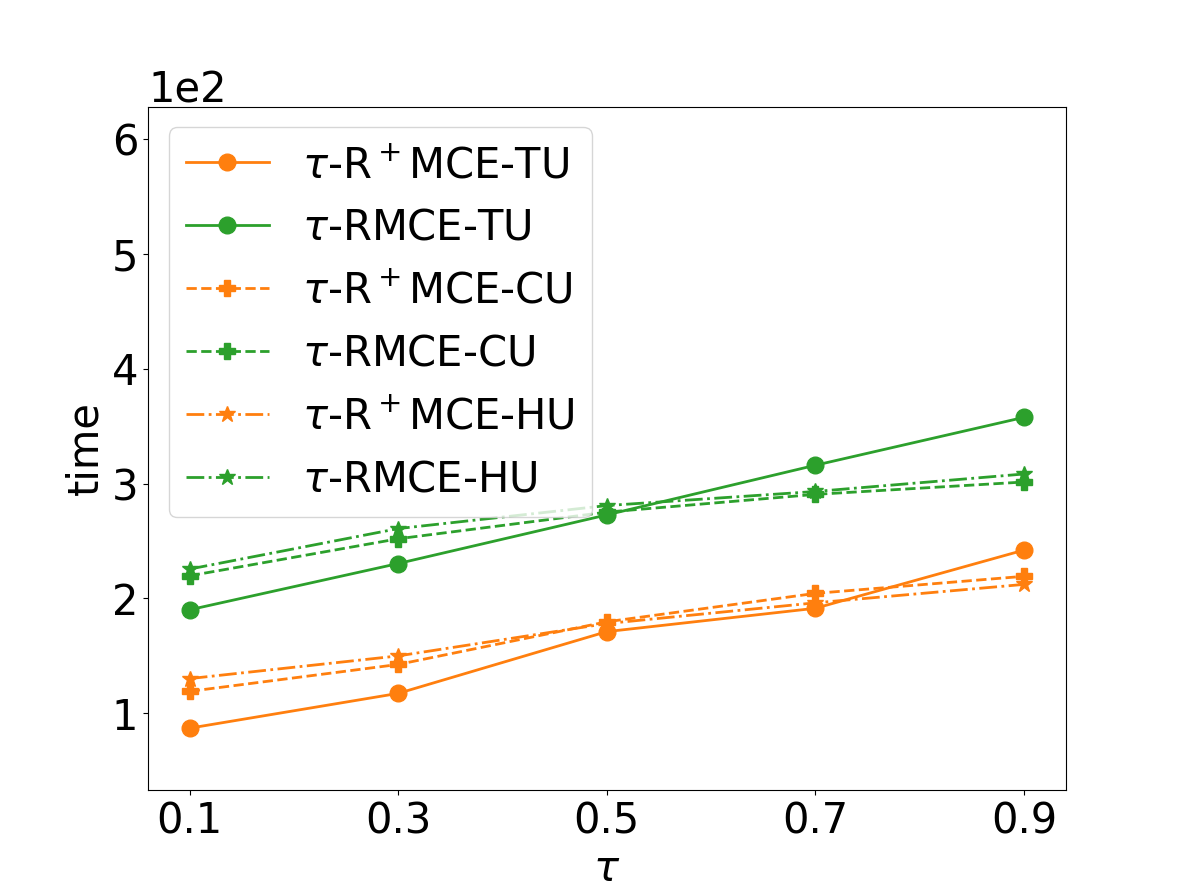}}
	\subfloat[loc-Gowalla \label{3b}]{\includegraphics[width=4.2cm, height=3.36cm]{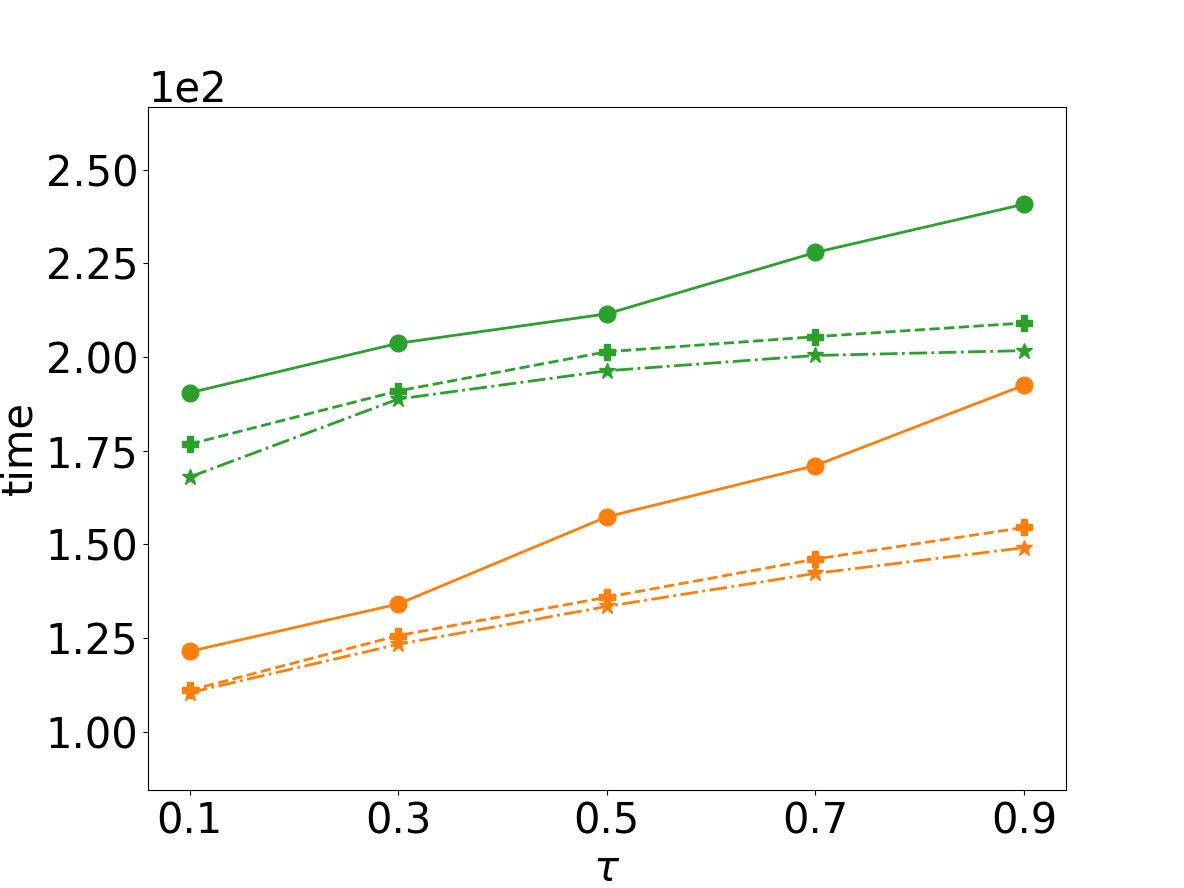}}
	\subfloat[amazon0302 \label{3c}]{\includegraphics[width=4.2cm, height=3.36cm]{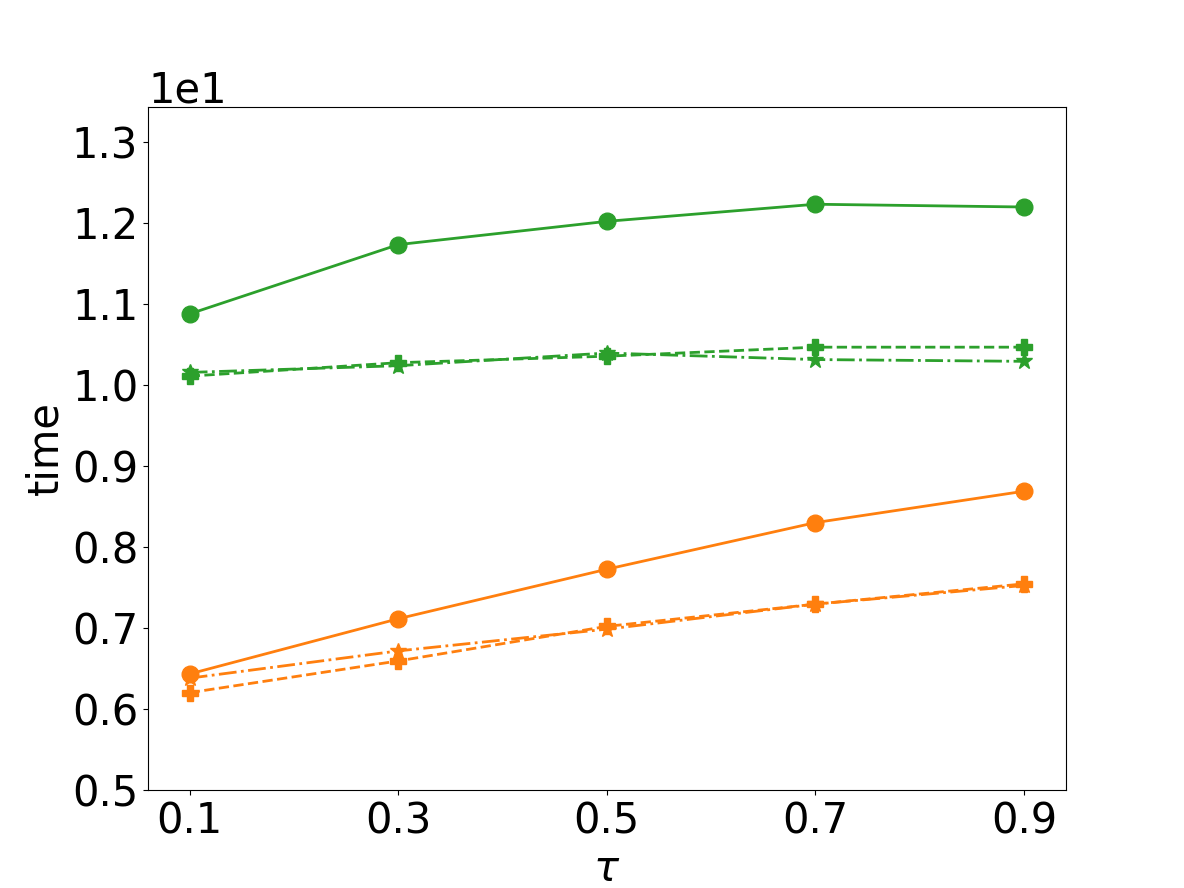}}
\subfloat[email-EuAll  \label{3d}]{\includegraphics[width=4.2cm, height=3.36cm]{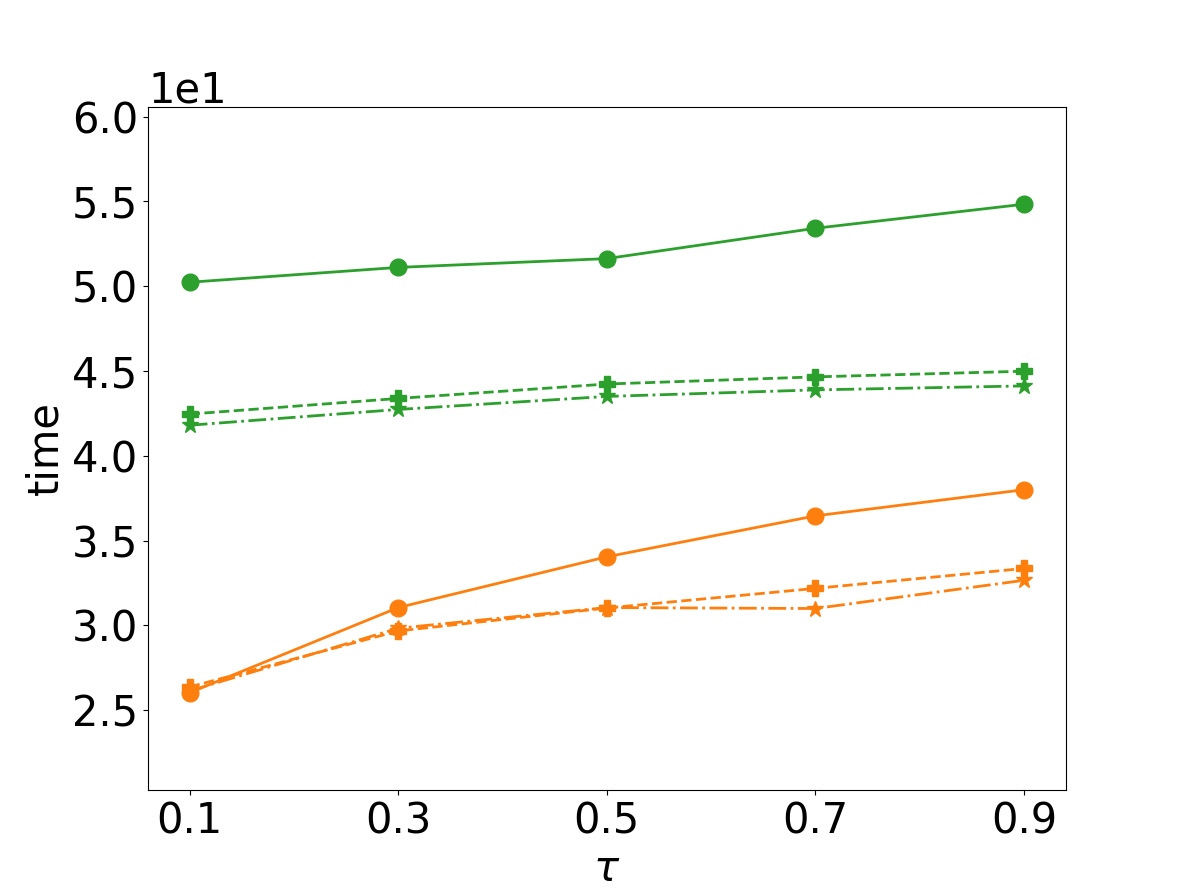}}	

\vspace{-13pt}

\subfloat[web-NotreDame \label{3f}]{\includegraphics[width=4.2cm, height=3.36cm]{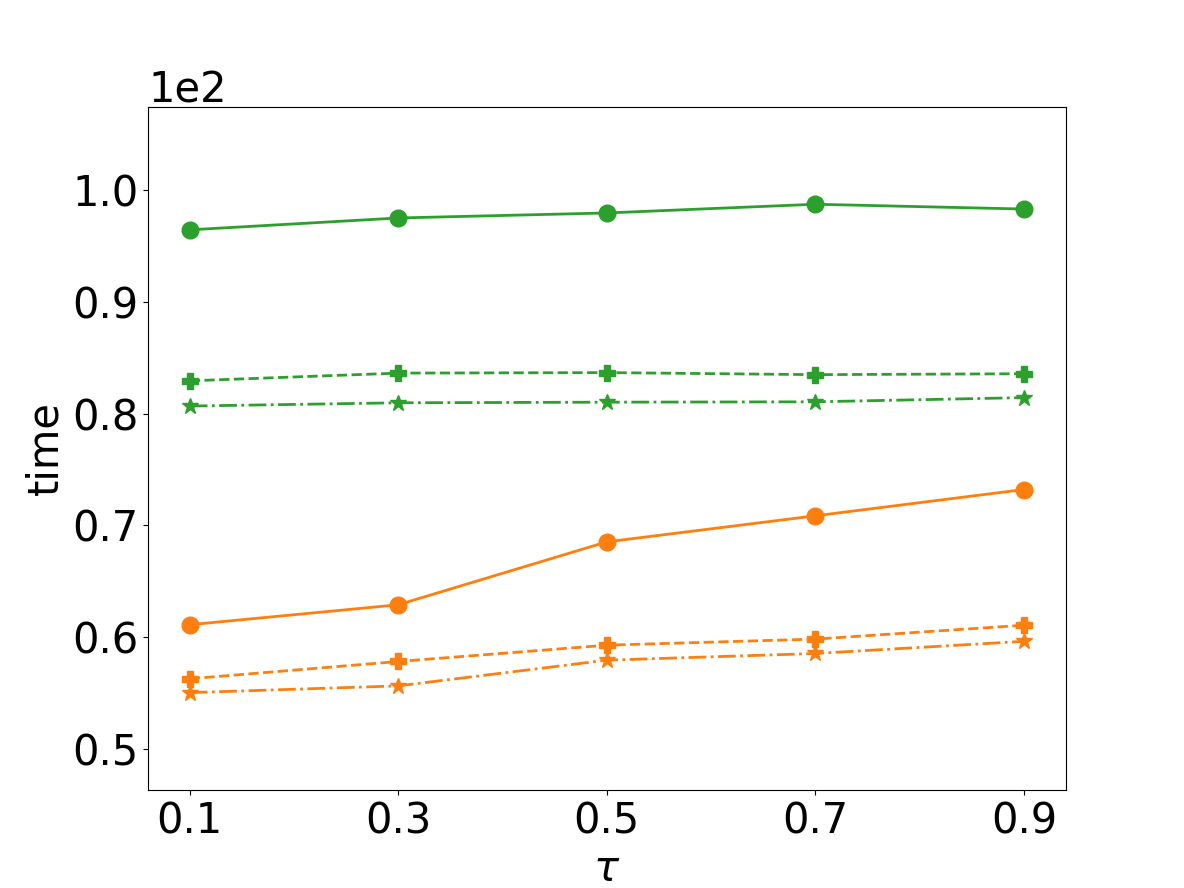}}
\subfloat[com-youtube \label{3g}]{\includegraphics[width=4.2cm, height=3.36cm]{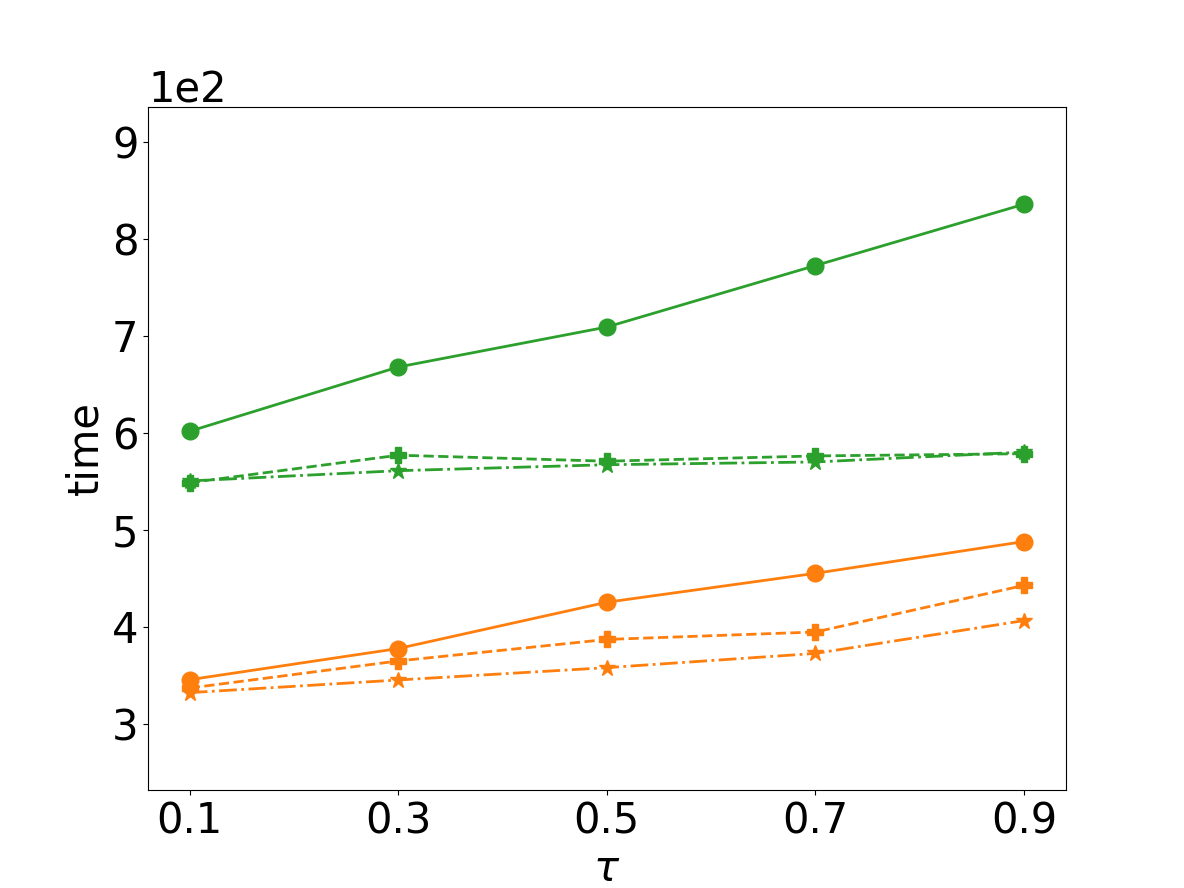}}
\subfloat[soc-pokec \label{3h}]{\includegraphics[width=4.2cm, height=3.36cm]{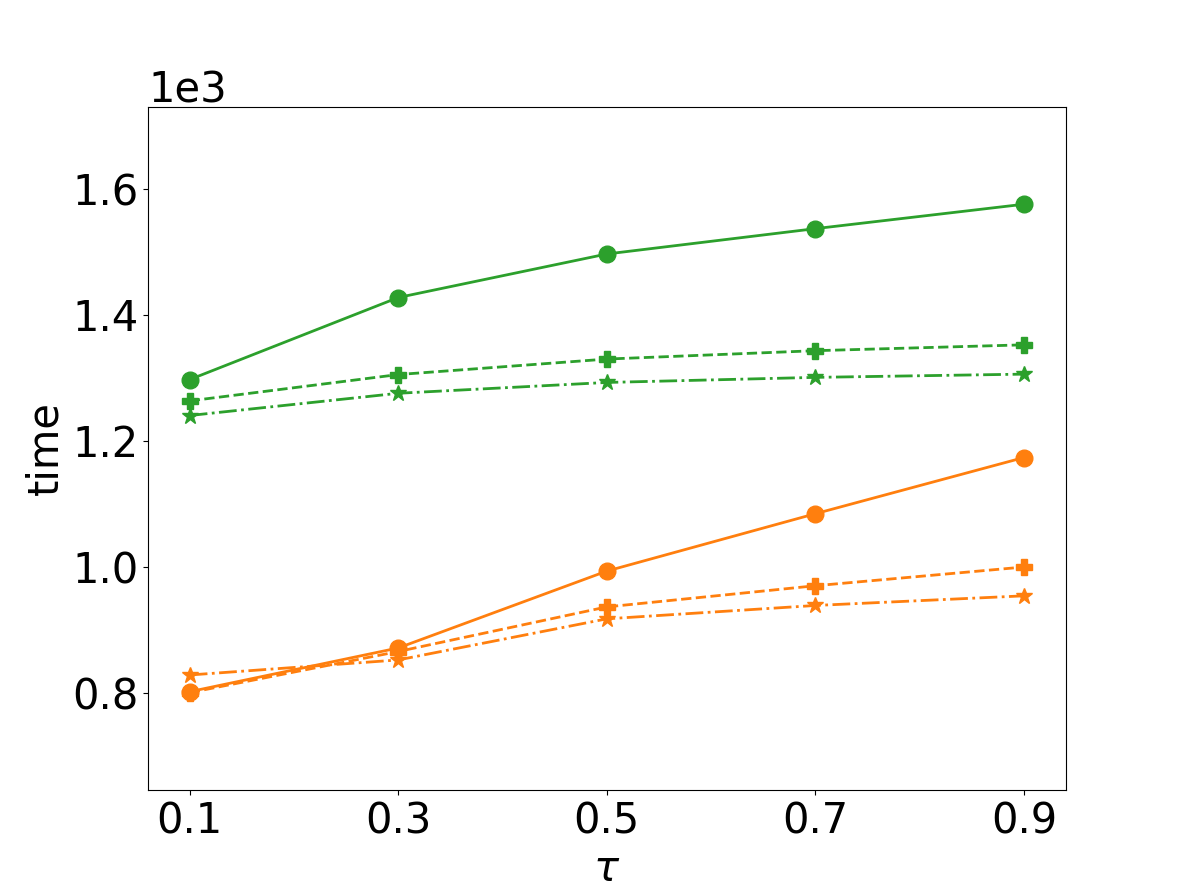}}
\subfloat[cit-Patents \label{3i}]{\includegraphics[width=4.2cm, height=3.36cm]{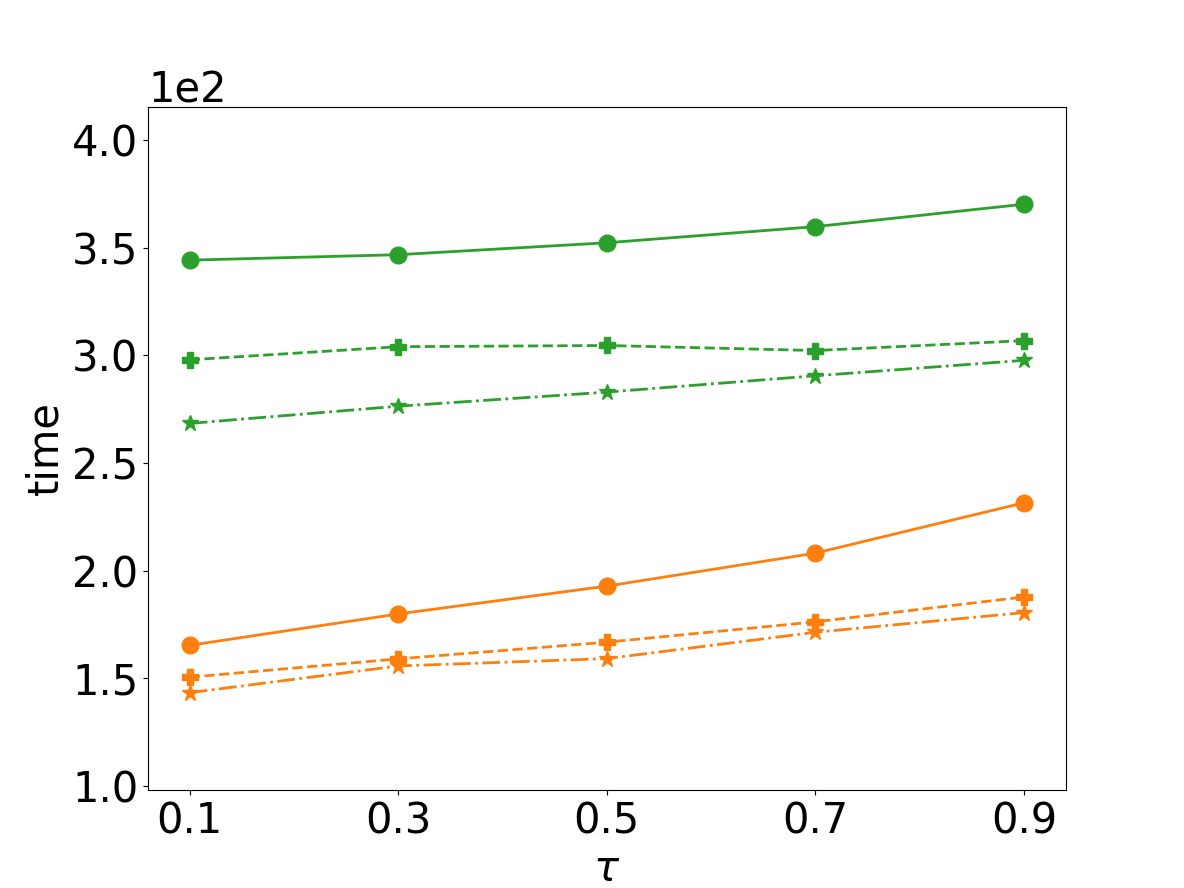}}

%
%
	\captionsetup{justification=centering}
	\caption{Running time of $\tau$-R$^+$MCE and $\tau$-RMCE on eight datasets with different bounds, $\tau$ varies from 0.5 to 0.9, U order as default}\label{fig:tbound}
	\vspace{-20pt}
\end{figure*}

%
%
%
%
%
%
%

\begin{figure*}[ht]
	\centering

	\subfloat[soc-Epinions1 \label{3a}]{\includegraphics[width=4.2cm, height=3.36cm]{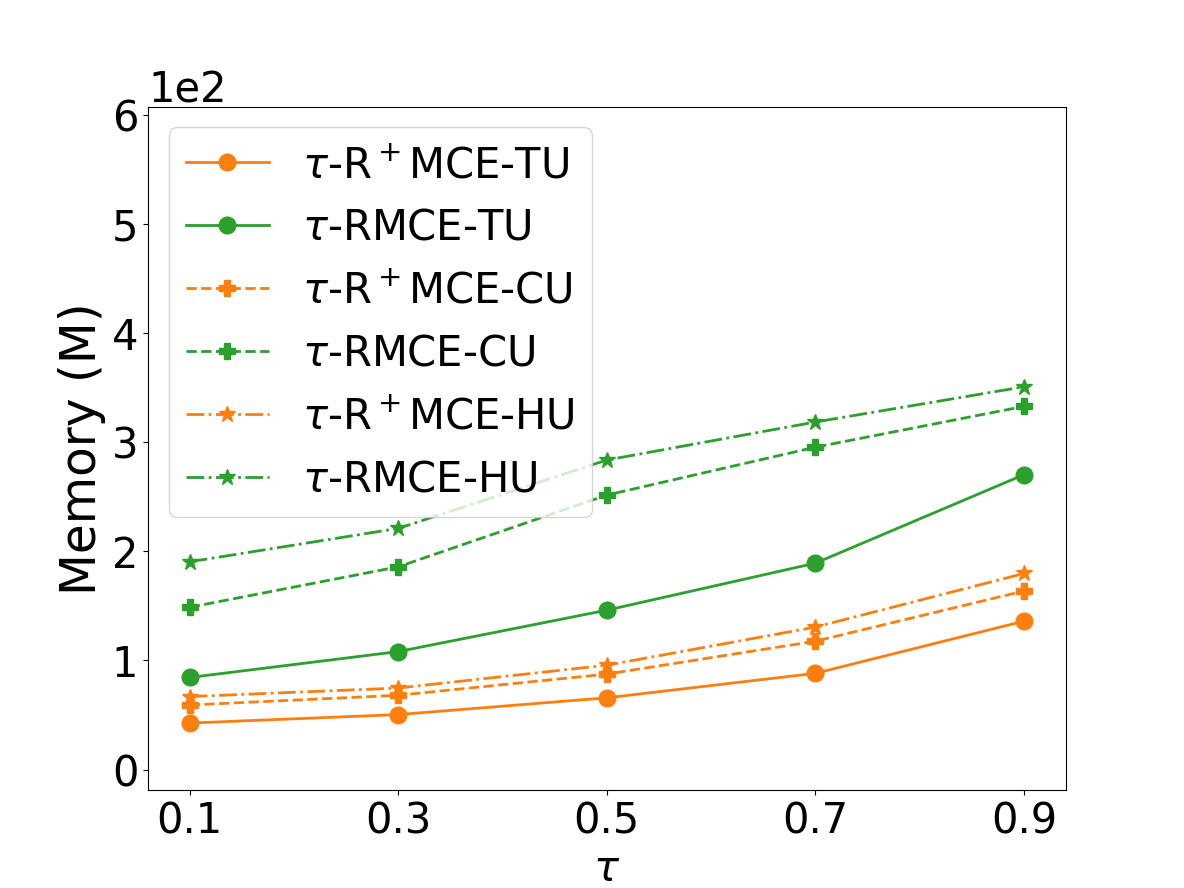}}
	\subfloat[loc-Gowalla \label{3b}]{\includegraphics[width=4.2cm, height=3.36cm]{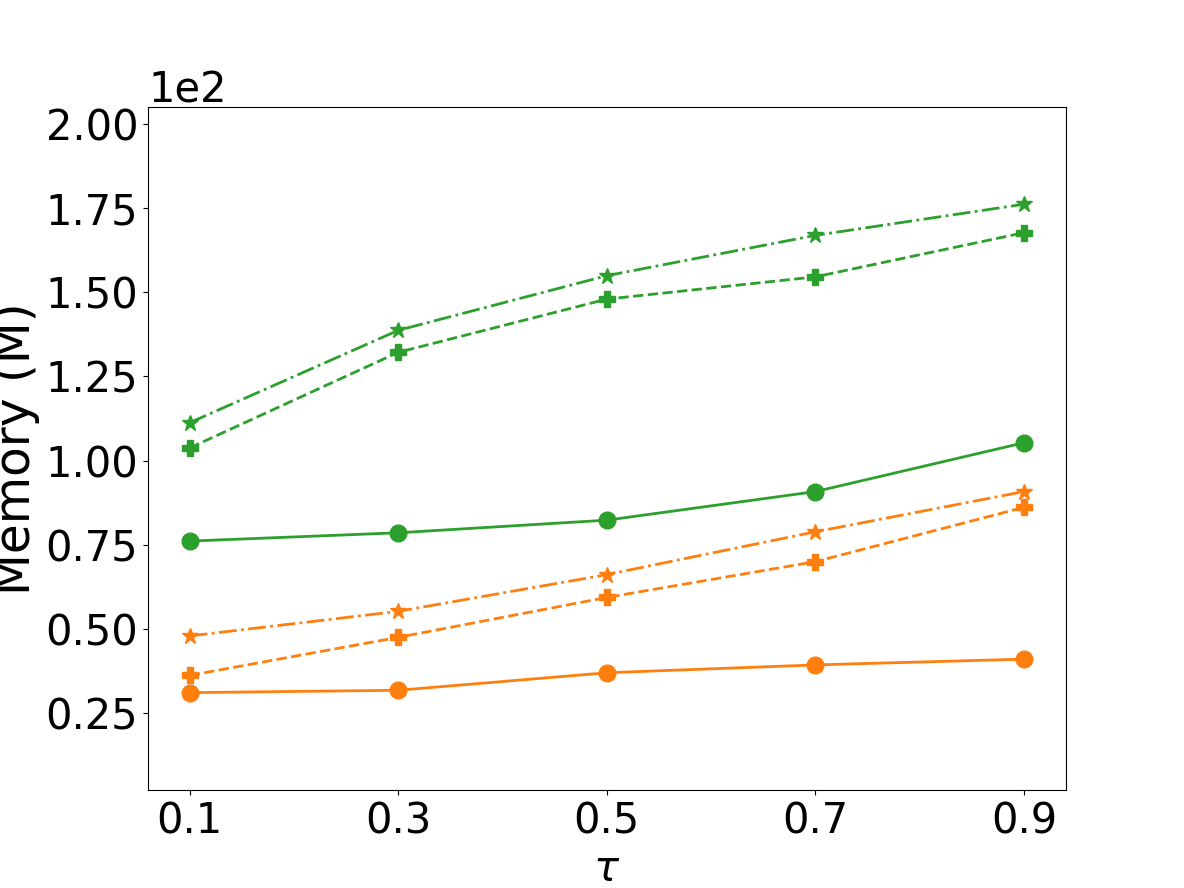}}
	\subfloat[amazon0302 \label{3c}]{\includegraphics[width=4.2cm, height=3.36cm]{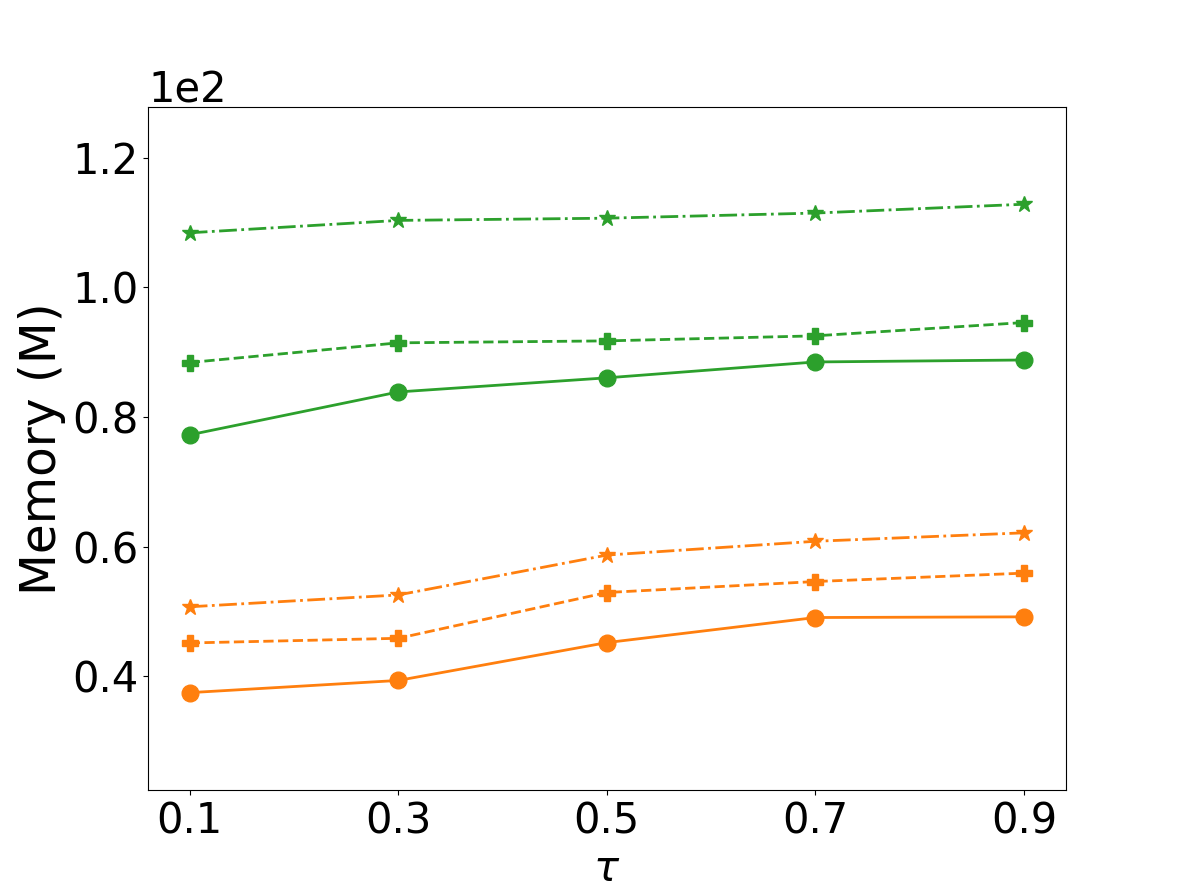}}
\subfloat[email-EuAll  \label{3d}]{\includegraphics[width=4.2cm, height=3.36cm]{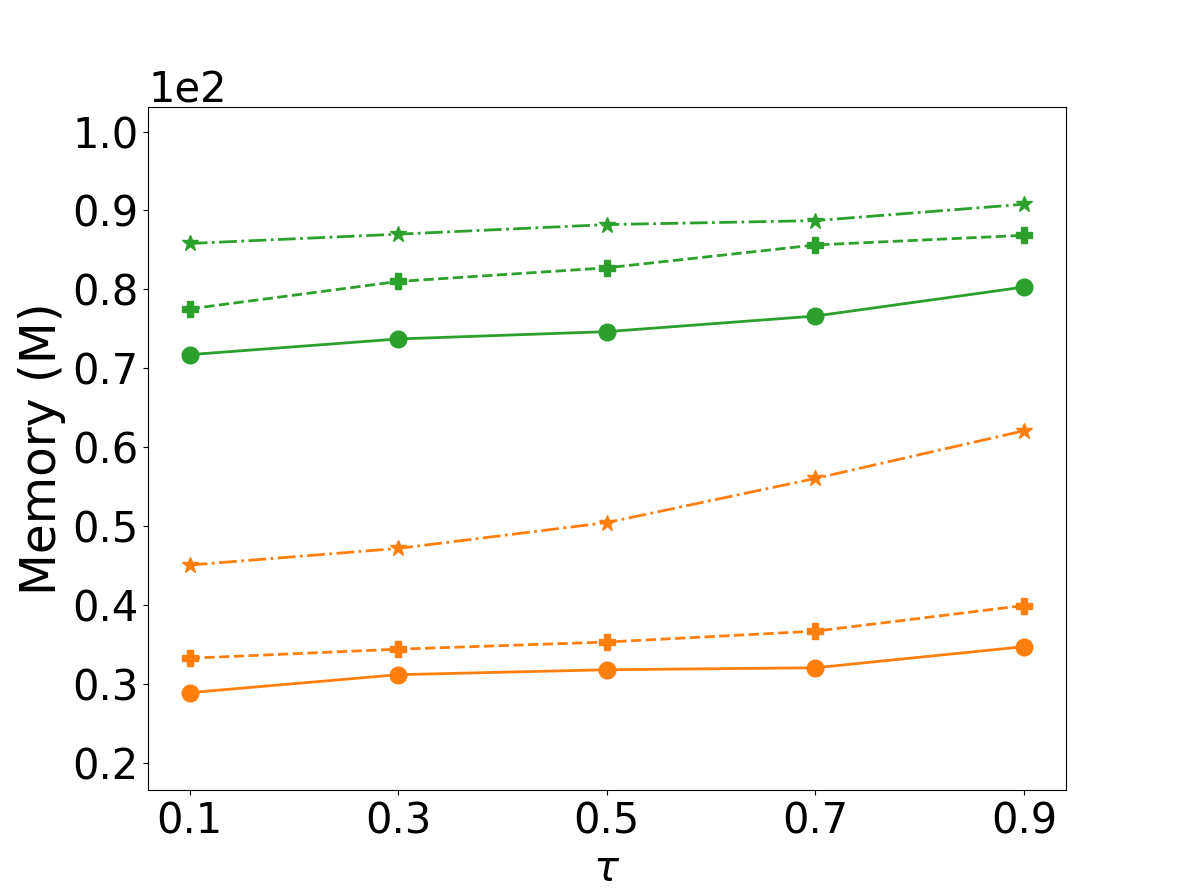}}	

\vspace{-13pt}

\subfloat[web-NotreDame \label{3f}]{\includegraphics[width=4.2cm, height=3.36cm]{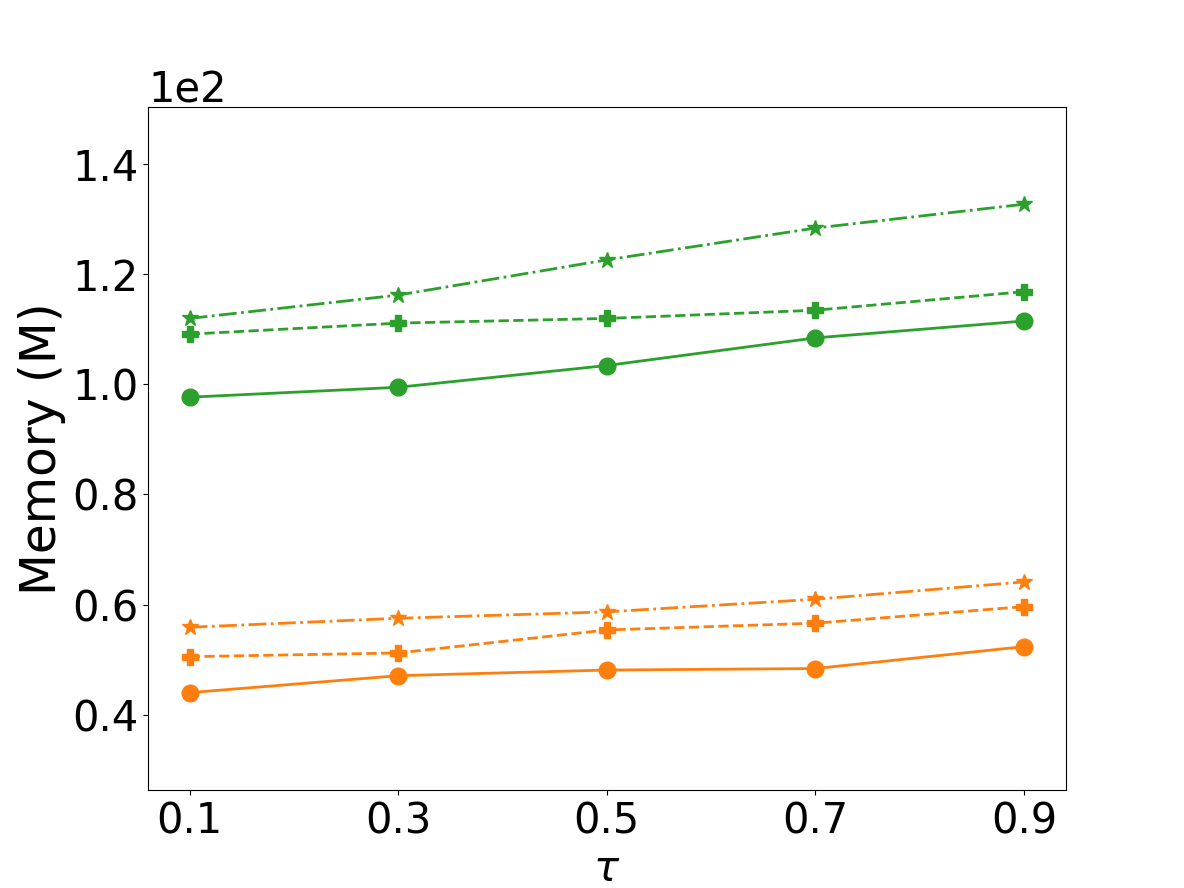}}
\subfloat[com-youtube \label{3g}]{\includegraphics[width=4.2cm, height=3.36cm]{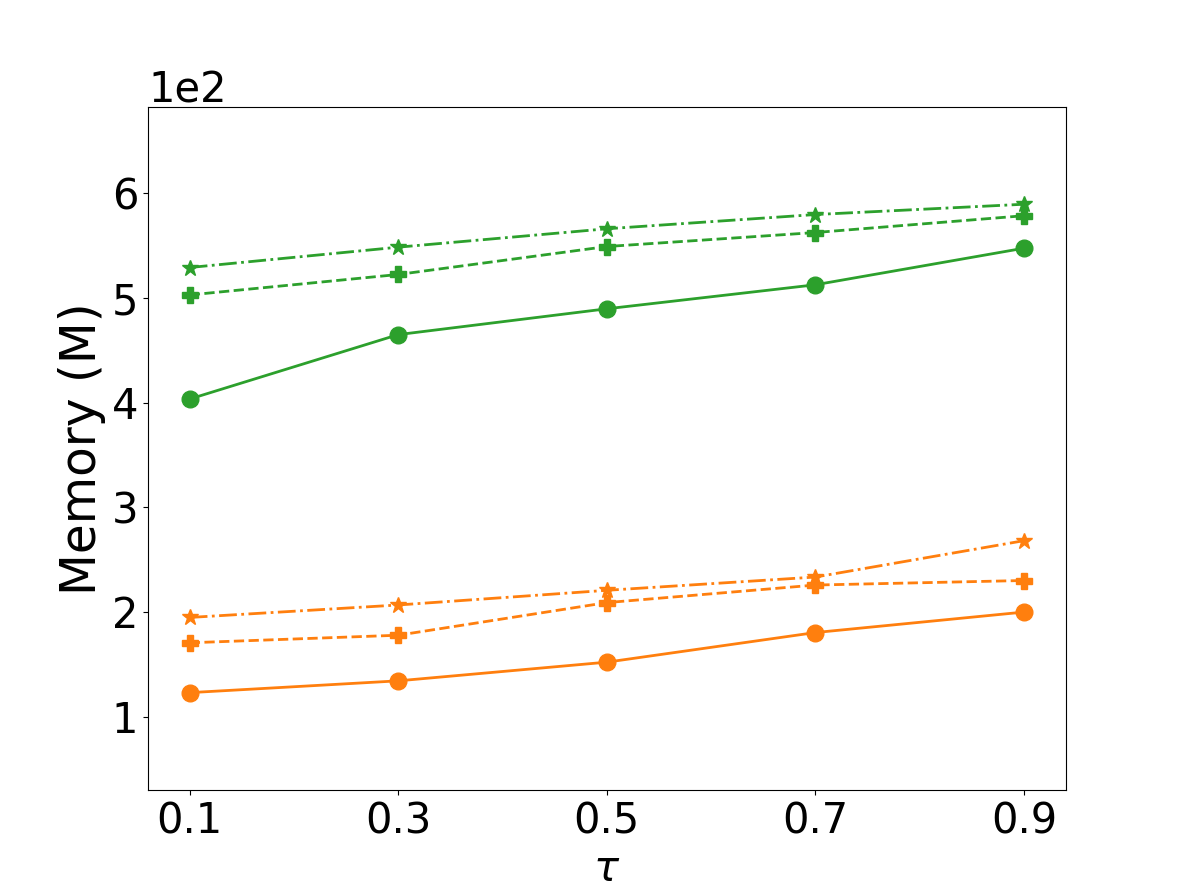}}
\subfloat[soc-pokec \label{3h}]{\includegraphics[width=4.2cm, height=3.36cm]{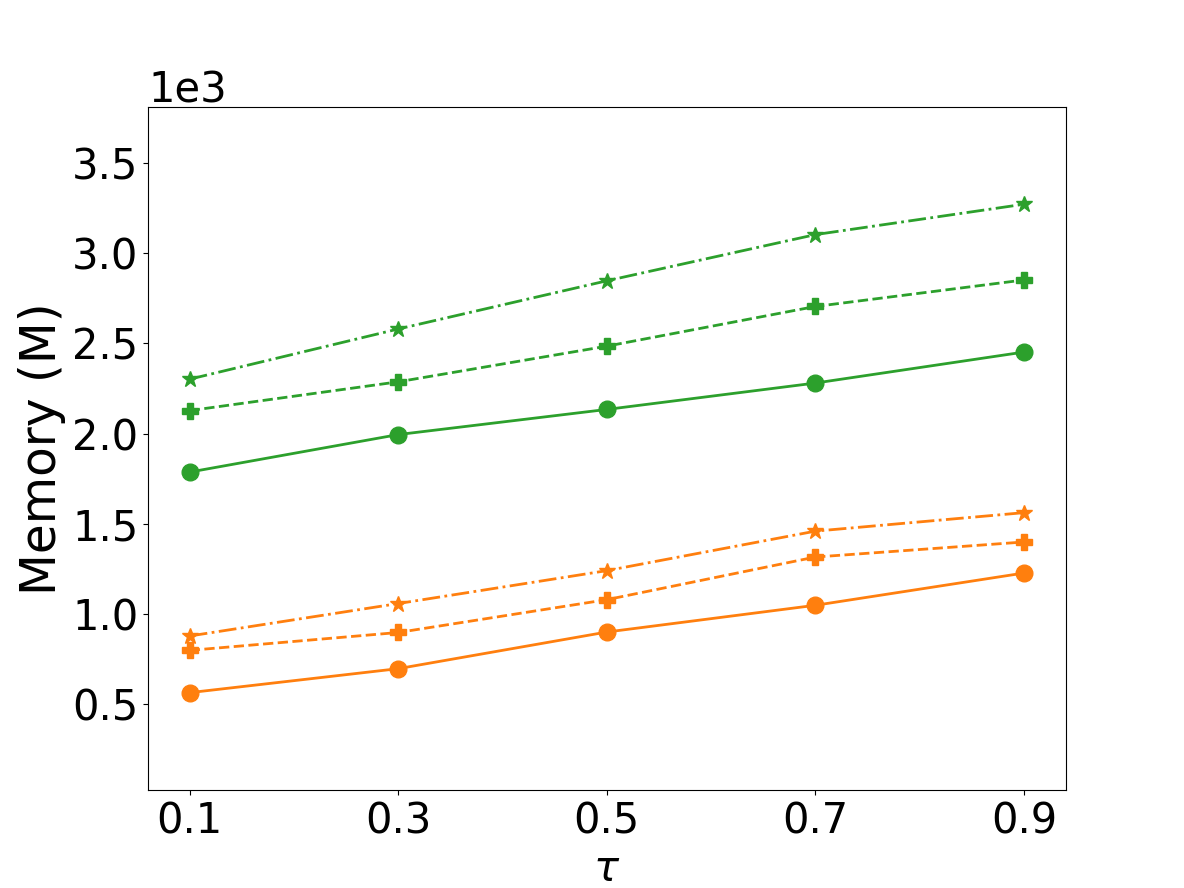}}
\subfloat[cit-Patents \label{3i}]{\includegraphics[width=4.2cm, height=3.36cm]{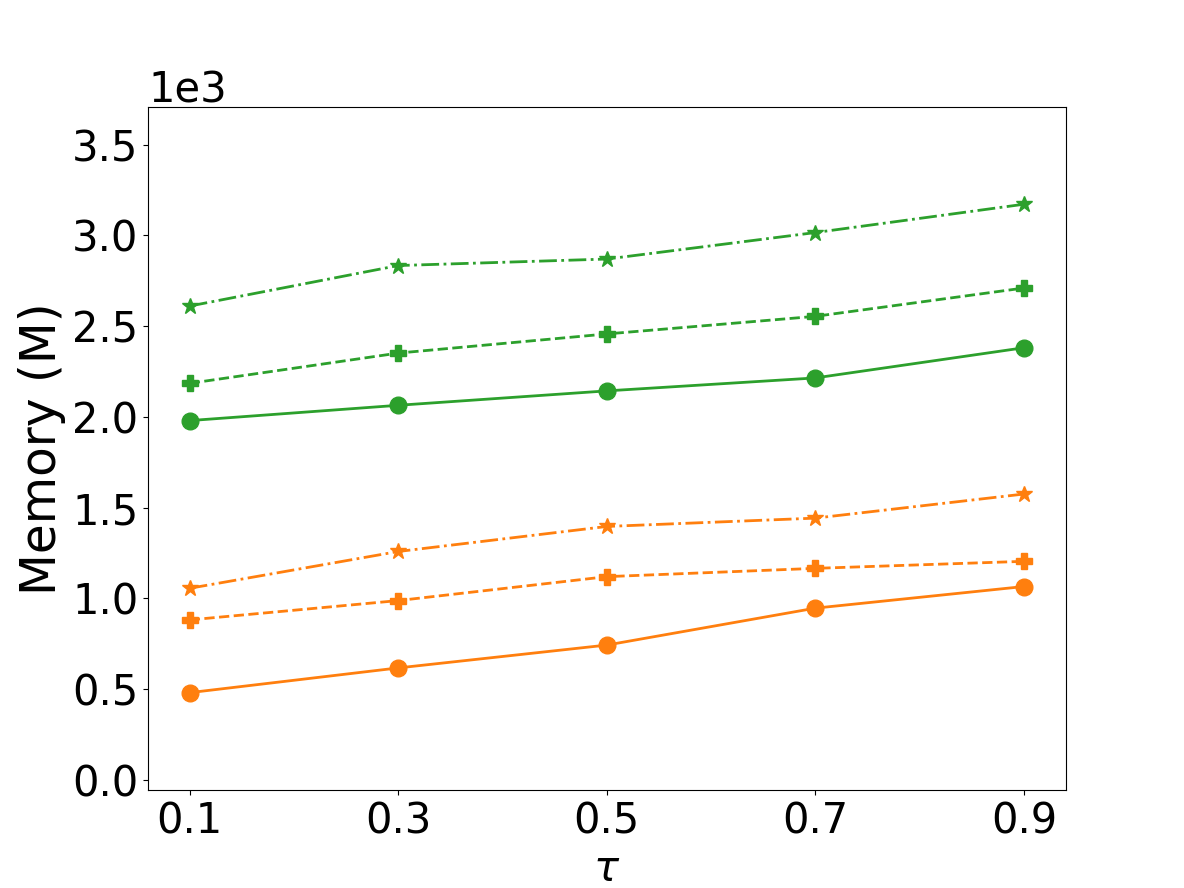}}

%
%
	\captionsetup{justification=centering}
	\caption{Memory requirement of $\tau$-R$^+$MCE and $\tau$-RMCE on eight datasets with different bounds, $\tau$ varies from 0.5 to 0.9, U order as default}\label{fig:mbound}
	\vspace{-10pt}
\end{figure*}
To test the efficiency of three types of  vertex orders, we implement $\tau$-RMCE and $\tau$-R$^+$MCE with orders U, I and R. 
The default bound is set  to T. 
We recorded both the total running time and  memory requirement for all experiments. 
The details are shown in Fig.~\ref{fig:torder} and Fig.~\ref{fig:morder}. 
\begin{table}
\renewcommand\arraystretch{1.2}
\small
\centering
\caption{First-Result Time (s) at $\tau = 0.9$ }
\begin{tabular}{p{60pt}p{60pt}p{60pt}}
\toprule
Name  & $\tau$-RMCE-TU &   $\tau$-R$^+$MCE-TU     \\
\midrule
soc-Epinions1     & 0.62  & 0.61   \\
loc-Gowalla      & 8.55 & 8.55  \\
amazon0302       & 0.21  & 0.21    \\
email-EuAll     & 1.22  & 1.20   \\
NotreDame     & 5.97  & 5.97    \\
com-youtube    & 12.33  & 12.32   \\
soc-pokec    & 40.11 & 39.57   \\ 
cit-Patents & 10.30 & 9.27\\
\bottomrule
\end{tabular}
\label{frt}
\vspace{-10pt}
\end{table}

%

\textbf{Running time: }
Fig.~\ref{fig:torder} shows that the results of $\tau$-R$^+$MCE and $\tau$-RMCE are very similar on each dataset, 
hence we focus on  curves of $\tau$-R$^+$MCE. 
We see that $\tau$-R$^+$MCE-TU shows the best performance on five out of eight datasets 
({\it soc-Epinions1 ($18\%\sim 29\%$), amazon0302 ($5\%\sim 7\%$), email-EuAll ($8\%\sim 23\%$), com-yotube ($15\%\sim 40\%$), cit-Patents ($3\%\sim 4\%$)}, 
where the percentages in parentheses are the range of reductions vs. $\tau$-R$^+$MCE-TI). 
It shows similar performances as $\tau$-R$^+$MCE-TI on two datasets ({\it web-NotreDame, soc-pokec}) since the two lines coincide with each other. 
$\tau$-R$^+$MCE-TU shows the worst performance on a special dataset {\it loc-Gowalla} because of its small degeneracy. 
This result implies that benefited from its summarization effectiveness, $\tau$-R$^+$MCE-TU shows a comparable or even better performance than the state-of-the-art order on a variety of real-world datasets. 
However, degeneracy order is still the best choice for graphs with small degeneracies that this order is initially designed for.

\textbf{Memory requirement:}
Fig.~\ref{fig:morder} shows the memory requirement for different orders. 
We see that the truss order U consistently outperforms the other two for both $\tau$-R$^+$MCE and $\tau$-RMCE. 
The memory reduction of $\tau$-R$^+$MCE-TU vs. $\tau$-R$^+$MCE-TI varies little when $\tau$ changes. 
The reduction is more than $10\%$ for all eight datasets, with two of which ({\it email-EuAll, com-youtube}) even achieving $30\%$.
The results of memory cost are much similar to the output size.  
This is because the memory requirement highly relies on the depth of recursion: more number of deep branches result in higher memory consumption. 
The strong locality thus early pruning power of $\tau$-R$^+$MCE-TU prevents some of the redundant branches from growing unnecessarily deep, hence the memory requirement can be reduced significantly, 
which has the same reason why the output summary size is reduced.

}

\subsubsection{Efficiency of bounds}

We test the efficiency of different bounds (T, C, H) with default vertex order U. 
Both the total running time (Fig.~\ref{fig:tbound}) and memory requirement (Fig.~\ref{fig:mbound}) are recorded. 
%

\textbf{Running time:}
Fig.~\ref{fig:tbound} shows that for both $\tau$-R$^+$MCE and $\tau$-RMCE,  H bound is the fastest choice on five out of eight datasets (except for {\it soc-Epinions1, amazon0302, email-EuAll}). 
U bound runs most slowly on seven out of eight datasets (except for {\it soc-Epinions1}).
However, we still notice that the time differences  between $\tau$-R$^+$MCE-TU and $\tau$-R$^+$MCE-CU are narrowed with $\tau$ decreasing for all datasets. 
This is consistent with Fig.~\ref{fig:sbound}:  
since the summary reduction increases with $\tau$ decreasing, 
the benefit of early pruning gradually offsets the cost of bound calculation. 
This explains why $\tau$-R$^+$MCE-TU shows the best performance when $\tau\leq 0.7$. 

\textbf{Memory requirement:}
As we explained in Section~\ref{sc:622}, the result of memory requirement is similar to that of output size. 
The performance of three bounds for both $\tau$-R$^+$MCE and $\tau$-RMCE are quite clear: U is better than C, and C is better than H. 
When we focus on $\tau$-R$^+$MCE, we see that the memory reduction of $\tau$-R$^+$MCE-TU vs. $\tau$-R$^+$MCE-CU  is more than $10\%$  on eight datasets for all $\tau$ values, 
among which four datasets ({\it soc-Epinions1, com-youtube, soc-pokec, cit-Patents}) even achieves $25\%$ at $\tau = 0.5$.  
This reduction is mainly caused by the fact that a tight bound thus early pruning helps to avoid redundant search branches from growing unnecessarily deep, 
which shows the superiority of the truss bound. 

\subsection{Algorithm Realization} 
{In this subsection, we discuss one detail of the algorithm realization: 
 during each recursion of Algorithm~\ref{al2},  to calculate the intersection $C\cap C'$ by merge join, we have to sort the vertices of $C$ according to a given order. 
This is caused by such an undetected fact: even if we strictly select each vertex in candidate set $T'$ by a given order to grow the current partial clique $C$, the order of vertices in $C$ is still disorganized due to  pivot selection.
The other intersection operations ($T\cap \mathcal{N}(v),D\cap \mathcal{N}(v)$) in Algorithm~\ref{al2} have no such  requirement of vertex sorting. 
Ignorance of such a  fact can lead to wrong outputs while the algorithm runs improperly fast. 
This explains why our results of $\tau$-RMCE differ from the work~\cite{wang_redundancy-aware_2013},  
where the results  show that the output size of $\tau$-RMCE changes dramatically   when $\tau$ decreases. 
However, with our appropriate implementation, we indeed find  that both output size and running time decrease much more slowly when $\tau$ varies from $0.9$ to $0.5$. 
Even if we set the default parameter as our newly proposed truss bound T  and truss order U, the decreases are  much less dramatic than the results shown in the work~\cite{wang_redundancy-aware_2013}. 
Besides, we test the algorithms on eight benchmark real-world datasets from various domains, all of which show very similar results.  
Hence we believe that our results are convincing. 
Moreover, we did not include two other datasets {\it Skitter} and {\it Wiki} used by~\cite{wang_redundancy-aware_2013} because  although the vertex numbers of them are less than {\it cit-Patents}, the running times of $\tau$-RMCE  with our appropriate  implementation on these two datasets are not as fast as anticipated. 
Because of the inherent nature of the clique enumeration problem, once the input graph gets large, the processing time becomes unbearable. 
Therefore, we chose datasets that are relatively manageable ($3600s$ for the running time limit). 
For extremely large datasets, we believe that parallel and distributed computing solutions are more preferred, which could be very interesting to explore~\cite{schmidt_scalable_2009,xu_distributed_2016,das2020shared}. 
Accordingly, we are excited to raise this problem in the future work section to attract more attentions from the research community.
}
\subsection{Summary}

After a comprehensive discussion of all  experiments, 
 we can now answer the three  questions  at the beginning of Section~\ref{sec:exp}: 

	(1) $\tau$-R$^+$MCE consistently outperforms $\tau$-RMCE for both effectiveness and efficiency on all datasets with all the $\tau$ values. 
The output reduction can be up to one order of magnitude, and time reduction is more than $35\%$ at $\tau = 0.5$. 
$\tau$-R$^+$MCE achieves relatively better performance on large graphs than $\tau$-RMCE. 

(2) When implemented with $\tau$-R$^+$MCE, the truss order reduces up to $83\%$ output size vs. the state-of-the-art degeneracy order at $\tau = 0.5$, 
and this reduction of truss bound vs. core bound can be up to $43\%$. 
The boost of vertex order is more significant than that of bounds.

(3) The running time of truss order  with $\tau$-R$^+$MCE has comparable or even better performance than the degeneracy order except when implemented on small degeneracy graphs. 
The memory requirement of truss order consistently shows the best performance, of which the reduction vs. degeneracy order is more than $10\%$. 
Although the efficiency of truss bound is surpassed by core bound and H bound, the difference is narrowed with $\tau$ decreasing. 
The memory requirement of truss bound still shows the best performance, which achieves more than $10\%$ reduction vs.  core bound. 

\section{Related Work}\label{sec:related}

The number of maximal cliques in an undirected graph is proved to be exponential~\cite{moon_cliques_1965}. 
Bron and Kerbosch~\cite{bron_algorithm_1973}, Akkoyunlu et al.~\cite{akkoyunlu_enumeration_1973} introduced backtracking algorithms to enumerate all maximal cliques in a graph. 
There are sufficient studies focusing on the efficiency of MCE. 
To effectively reduce the search space, pruning strategies were introduced  in~\cite{koch_enumerating_2001,cazals_note_2008,tomita_worst-case_2006} by selecting good pivots.  
The key idea is to avoid searching in some unnecessary branches which leads to duplicated results. 
Degeneracy vertex ordering was introduced by~\cite{eppstein_listing_2013} to bound the time complexity  
because with the degeneracy  order the size of candidate set $T$ in the first recursion level can be bounded by the degeneracy, thus all the candidate set at all depths of the search tree can be bounded. 
Pivot selection strategies were studied by~\cite{san_segundo_efficiently_2018,naude_refined_2016} to optimize the algorithms.  
Naudé~\cite{naude_refined_2016} relaxed the restriction of pivot selection while keeping the time complexity unchanged. 
Segundo et al.~\cite{san_segundo_efficiently_2018} improved the practical performance of the algorithm by avoiding too much time consumed by selecting the pivot. 
With distributed computing paradigms,  scalable and parallel algorithms were designed for MCE in~\cite{schmidt_scalable_2009,xu_distributed_2016}. 
Schmidt et al.~\cite{schmidt_scalable_2009} decomposed the search tree to enable  parallelization. 
Xu et al.~\cite{xu_distributed_2016} proposed a distributed MCE algorithm based on a share-nothing architecture. 
The I/O performance of MCE in massive networks was improved by~\cite{cheng_finding_2010,cheng_fast_2012}.  
The external-memory algorithm for MCE was first introduced by~\cite{cheng_finding_2010} to bound the memory consumption. 
A partition-based MCE algorithm is designed by~\cite{cheng_fast_2012} to reduce the memory used for processing large graphs. 
{The maximal spatial clique enumeration was studied by~\cite{zhang_efficient_2019}, in which some geometric properties were used to enhance the enumeration efficiency. }
Dynamic maximal clique enumeration was studied in~\cite{DBLP:journals/vldb/DasST19,stix_finding_2004,sun_mining_2017}, in which the graph structure can evolve mildly.  
All the three works considered the dynamic cases where edges can be added or deleted. 
When considering an  uncertain graph, which is a nondeterministic distribution on a set of deterministic graphs, the uncertain version of MCE was designed by~\cite{li_improved_2019,mukherjee_mining_2015}.  
Mukherjee et al.~\cite{mukherjee_mining_2015} designed an algorithm to enumerate all $\alpha$-maximal cliques in an uncertain graph.
The size of an uncertain graph can be reduced by core-based algorithms proposed by~\cite{li_improved_2019}. 
The top-$k$ maximal clique finding problem was also studied by~\cite{zou_finding_2010} on uncertain graphs. 
While these efficient approaches reduced the running time of MCE, the bottleneck in applications is the large output size, which is our main focus.

There exist a large volume of works~\cite{tomita_efficient_2007,tomita_simple_2010,tomita_much_2016,Lu:2017:FMC:3137628.3137660,chang_efficient_2019} studying the maximum clique  problem, which aimed to find a maximal clique with the largest size. 
An approximate coloring technique was employed by~\cite{tomita_efficient_2007} to bound the maximum clique size, which was further improved by~\cite{tomita_simple_2010} and~\cite{tomita_much_2016}. 
Lu et al.~\cite{Lu:2017:FMC:3137628.3137660} proposed a randomized algorithm with a binary search technique to find the maximum clique in massive graphs, 
while the work~\cite{chang_efficient_2019} studied this problem over sparse graphs by transforming the maximum clique in sparse graphs to the $k$-clique  over dense subgraphs. 
Although the concept of maximum clique is closely related to the maximal clique, 
the MCE and maximum clique finding are two distinguishable problems and 
 there is no need to employ a summary to summarize the output of this problem since the number of maximum cliques is typically  small. 

Summarizing has also been studied for frequent pattern mining~\cite{afrati_approximating_2004,yan_summarizing_2005,xin_extracting_2006}. 
Afrati et al.~\cite{afrati_approximating_2004} studied how to find 
at most $k$ patterns to span a collection of patterns  which 
is an approximation of the original pattern sets. 
Yan et al.~\cite{yan_summarizing_2005} proposed a profile-based approach to summarize all frequent patterns by $k$ representives. 
The  pattern redundancy was introduced by~\cite{xin_extracting_2006}, which studied how to extract redundancy-aware and top-$k$ significant patterns.
While cliques share great similarity with frequent patterns, these algorithms cannot be used to summarize maximal cliques efficiently due to their offline nature.
There are some studies focusing on online algorithms to do summarizing. 
Saha et al.~\cite{apte_maximum_2009}
and Ausiello et al.~\cite{owe_online_2011}
 studied how to find diversified $k$ sets to represent all sets with a streaming approach, based on which ~\cite{yuan_diversified_2016} introduced an online algorithm to give diversified top-$k$ maximal cliques. In these works, $k$ is normally small, and coverage is not the focus.  

Our work is close to the work~\cite{wang_redundancy-aware_2013} which introduced the $\tau$-visible summary of maximal cliques.
Other than giving a better sampling function in the earlier version~\cite{li2019mining}, we further discuss the optimality conditions and propose to approach the optimal by introducing the novel truss vertex order and truss bound. 
\section{Conclusion and Future Work}\label{sec:conclusion}

In this paper, we have 
studied how to report a summary of less overlapping maximal cliques 
  during the online maximal clique enumeration process. 
We have proposed so far the best sampling strategy, which can guarantee that
the summary expectedly represents all the maximal cliques
while keeping the summary sufficiently concise, i.e., each maximal clique can be expectedly covered 
by at least one maximal clique in the summary with a ratio of at least $\tau$ ($\tau$ is given by a user and reflects the user's tolerance of overlap). 
{We have proved the optimality  of this sampling approach under 
two conditions  (ideal bound estimation and sufficiently strong  locality), 
and proposed the novel truss order as well as the truss bound to approach the optimal.} 
Experimental studies have shown that 
the new strategy can  outperform the state-of-the-art approach  
in both effectiveness and efficiency
on eight  real-world datasets.
Future work could be conducted towards approaching the optimal conditions further.
It would also be interesting to solve the problem in parallel considering that maximal clique enumeration is expensive on large graphs.


%

\vspace{-10pt}
\section*{Acknowledgments}
The work was supported by Australia Research Council discovery projects DP170104747, DP180100212. 
We would like to thank Yujun Dai for her effort in the earlier version~\cite{li2019mining}. 
\vspace{-10pt}



%
%






\end{document}